\numberwithin{equation}{section}
\newtheorem{theorem}{Theorem}
\newtheorem{definition}{Definition}
\newcommand{\pd}{\partial}
\newcommand{\tr}{\operatorname{tr}}
\newcommand{\okappa}[1]{\mathcal{O}\left( \kappa^{#1} \right)}
\newcommand{\gs}{\mathit{g}_\text{s}}
\newcommand{\ket}[1]{\left\lvert #1 \right\rangle}
\newcommand{\bra}[1]{\left\langle #1 \right\rvert}
\title{FeynGrav and Recent Progress in Computational Perturbative Quantum Gravity}
\author[1]{\href{https://orcid.org/0000-0001-7099-0861}{B. Latosh} \thanks{ \href{mailto:latosh.boris@ibs.re.kr}{latosh.boris@ibs.re.kr} } }
\affil[1]{Particle Theory  and Cosmology Group, Center for Theoretical Physics of the Universe, Institute for Basic Science (IBS), Daejeon, 34126, Korea}
\date{CTPU-PTC-24-01}
\begin{document}

\maketitle

\begin{abstract}
    The article reviews recent progress in computational quantum gravity caused by the framework that efficiently computes Feynman's rules. The framework is implemented in the FeynGrav package, which extends the functionality of the widely used FeynCalc package. FeynGrav provides all the tools to study quantum gravitational effects within the standard model. We review the framework, provide the theoretical background for the efficient computation of Feynman rules, and present the proof of its completeness. We review the derivation of Feynman rules for general relativity, Horndeski gravity, Dirac fermions, Proca field, electromagnetic field, and $SU(N)$ Yang-Mills model. We conclude with a discussion of the current state of the FeynGrav package and discuss its further development.
\end{abstract}

\tableofcontents

\section{Introduction}

Perturbative quantum gravity is one of several approaches to quantum gravity. It exists in the effective field theory paradigm and describes gravitational phenomena at energies below the Planck scale \cite{Burgess:2003jk, Donoghue:1994dn, Calmet:2013hfa}. To a certain extent, the theory associates gravity with small metric perturbations propagating about the flat spacetime. The Planck scale arises naturally as the characteristic scale of small perturbations. Consequently, the theory can perturbatively treat perturbations with small amplitudes. However, the structure of the theory is more sophisticated and extends beyond this simple setup, as we discuss below.

Perturbation theory operates with propagating degrees of freedom, polarisation operators, and interaction operators describing the interaction between different degrees of freedom. On the practical ground, defining a theory means explicitly providing these components. The graviton propagator and the polarisation operators were derived and studied in classical papers \cite{Fierz:1939ix,Weinberg:1964cn,Weinberg:1964ev}. The derivation of the interaction rules was the most challenging part of the theory. Because of the effective nature of the theory, it generates an infinite set of interaction operators suppressed by different powers of the same gravitational coupling. For a given level of the perturbation theory, only a finite number of operators contribute to a given matrix element. However, the number of operators grows with each perturbation order. Consequently, calculations become incredibly challenging without an algorithm describing how to obtain the interaction rules at any given order of perturbation theory.

The interaction rules for the first two orders of perturbation theory are published in the literature \cite{DeWitt:1967yk,DeWitt:1967ub,DeWitt:1967uc,Sannan:1986tz}. These rules were sufficient for one-loop calculations and allowed for significant advancement in perturbative quantum gravity \cite{DeWitt:1967yk,DeWitt:1967ub,DeWitt:1967uc,Grisaru:1975ei,Goroff:1985th,Donoghue:1994dn,Akhundov:1996jd,Bjerrum-Bohr:2002gqz,Holstein:2006bh,Jakobsen:2020diz,Prinz:2020nru}. Nonetheless, the general algorithm producing the interaction rules at a given order remained unknown. Several research directions were focused on creating such an algorithm. The most well-known ones proposed to use Ward–Takahashi identities \cite{DeWitt:1967yk,DeWitt:1967ub,DeWitt:1967uc} and to bootstrap gravity \cite{Gupta:1954zz,Kraichnan:1955zz,Deser:1969wk,Padmanabhan:2004xk,Deser:2009fq}. However, to our knowledge, they never resulted in a practical tool. An additional problem was the complexity of the interaction rules. The paper \cite{DeWitt:1967uc} derived interaction rules for three and four graviton vertices (see also \cite{Sannan:1986tz} for the mentioned paper contained a misprint). In that particular parameterisation, the three graviton vertex contains $171$ terms while the four graviton vertex contains $2850$. While it is exceptionally challenging to manually operate with interaction terms having hundreds and thousands of terms, the contemporary computational packages can (relatively) easily manipulate such expressions.

The recent publications \cite{Latosh:2022ydd,Latosh:2023zsi} created the desirable algorithm. It was implemented in a computational package ``FeynGrav'' created at the base of the widely used ``FeynCalc'' \cite{Mertig:1990an,Shtabovenko:2016sxi,Shtabovenko:2020gxv}. This article reviews the constructed theoretical framework, touching upon its application and perspective for further development. For the discussion of the theoretical framework, we focus on the original papers \cite{Latosh:2022ydd,Latosh:2023zsi}. We briefly discuss how FeynGrav can significantly simplify the derivation of the well-known results \cite{Donoghue:1994dn,Akhundov:1996jd,Latosh:2020jyq,Arbuzov:2020pgp,Latosh:2021usy} and is used to obtain new results \cite{Latosh:2022hrf,Latosh:2023ueg,Latosh:2023cxm,Latosh:2023xej}. We conclude with a discussion of the future development of the package.

The structure of this paper is as follows. Section \ref{Effecitve_Field_Theory} discusses the theoretical background behind the perturbative quantum gravity. Namely, we discuss the role of the perturbative quantum gravity together with the limits of its applicability. We discuss the theory in the context of renormalisation and argue, supporting the growing consensus, that its effective nature leaves no room for any ``renormalisation issues''. Section \ref{Computational_Tools} discusses the practical tools to develop the theoretical and computational framework. We discuss how to factorise an action describing gravity or a model coupled to gravity in a way most suitable for optimal computations. We separate some fundamental structures defining perturbative expansions and derive recursive relations that significantly improve their computation. Section \ref{Feynman_Rules} devoted to the derivation of the Feynman rules for general relativity, scalar field, Horndeski theory, Dirac fermions, Proca field, electromagnetic field, and $SU(N)$ Yang-Mills model. Section \ref{FeynGrav_Section} discusses the FeynGrav package, its structure, and rules implemented in the existing version. We present a few explicit examples of perturbative calculations made with FeynGrav. Section \ref{Summary_Section} summarises the material presented in this paper and highlights perspective further development.

\section{Perturbative quantum gravity}\label{Effecitve_Field_Theory}

We divide the discussion of perturbative quantum gravity into two parts. The first part explores the theory's mathematical and technical aspects, while the second focuses on its physical content. We start with the technical aspects as they help to explain the premises behind its physical interpretation.

Perturbative quantum gravity is a quantum theory of small perturbation of metric $h_{\mu\nu}$ existing about the flat spacetime with the Minkowski metric $\eta_{\mu\nu}$. To preserve the canonical mass dimension of the field variable, one must introduce the gravitational coupling $\kappa$ with mass dimension $-1$. One defines the spacetime metric as a combination of the background metric and perturbations:
\begin{align}\label{the_perturbative_expansion}
    g_{\mu\nu} = \eta_{\mu\nu} + \kappa\, h_{\mu\nu} .
\end{align}
The gravitational coupling $\kappa$ is associated with the Newton constant $G_\text{N}$:
\begin{align}
    \kappa^2 = 32\,\pi\,G_\text{N} .
\end{align}

It is important to highlight that the perturbative expansion \eqref{the_perturbative_expansion} is finite in $\kappa$ and should not be interpreted as a truncation of an infinite series. Despite \eqref{the_perturbative_expansion} is finite, it spawns an infinite series in $\kappa$ for the inverse metric:
\begin{align}
    g^{\mu\nu} = \eta^{\mu\nu} - \kappa\, h^{\mu\nu} + \kappa^2\, h^{\mu\sigma}\,h_\sigma{}^\nu + \okappa{3} .
\end{align}
Consequently, all quantities involving the inverse metric are infinite series in $\kappa$. 

Let us look at the two most important examples. The first example is the Christoffel symbol. If the symbol has all lower indices, then it is a finite expression linear in $\kappa$:
\begin{align}
    \Gamma_{\alpha\mu\nu} =& \cfrac12\,\left[ \pd_\mu g_{\nu\alpha} + \pd_\nu g_{\mu\alpha} - \pd_\alpha g_{\mu\nu} \right] = \cfrac{\kappa}{2}\,\left[ \pd_\mu h_{\nu\alpha} + \pd_\nu h_{\mu\alpha} - \pd_\alpha h_{\mu\nu} \right].
\end{align}
On the contrary, the standard Christoffel symbol with a single upper index is an infinite series since it involves the inverse metric:
\begin{align}
    \begin{split}
        \Gamma^\alpha_{\mu\nu} =& \cfrac12\,g^{\alpha\beta} \left[ \pd_\mu g_{\nu\beta} + \pd_\nu g_{\mu\beta} - \pd_\beta h_{\mu\nu} \right] \\
        =& \cfrac{\kappa}{2} \left[ \eta^{\mu\nu} - \kappa\, h^{\mu\nu} + \kappa^2\, h^{\mu\sigma}\,h_\sigma{}^\nu + \okappa{3} \right] \left[ \pd_\mu h_{\nu\beta} + \pd_\nu h_{\mu\beta} - \pd_\beta h_{\mu\nu} \right].
    \end{split}
\end{align}

The second example is the volume factor $\sqrt{-g}$. The metric determinant itself is a finite expression in $\kappa$ (see \cite{Prinz:2020nru}), but the volume factor is an infinite series because of the square root:
\begin{align}
    \sqrt{-g} = 1 + \cfrac{\kappa}{2} \, \eta^{\mu\nu} h_{\mu\nu} - \cfrac{\kappa^2}{4} \left( h_{\mu\sigma} h^\sigma{}_\nu - \cfrac12\,\eta^{\mu\nu} \eta^{\alpha\beta} h_{\mu\nu} h_{\alpha\beta} \right) + \okappa{3}.
\end{align}

The action of any gravity model is an infinite series in $\kappa$ since it includes the volume factor $\sqrt{-g}$ for the coordinate invariance. Moreover, a typical action of a gravity model includes invariants constructed from the Riemann tensor. Christoffel symbols constitute the Riemann tensor, introducing additional infinite series to the theory. From the technical point of view, this is the reason why the action of any gravity model contains an infinite number of interaction operators.

The path integral formalism provides a way to construct a perturbative quantum theory. The generating functional $\mathcal{Z}$ provides a way to calculate matrix elements:
\begin{align}
    \mathcal{Z} = \int\mathcal{D}[g] \exp\Big[i\, \mathcal{A}[g]\Big] .
\end{align}
Here $\mathcal{A}$ is the microscopic action of the model. Within the perturbative approach, one uses \eqref{the_perturbative_expansion} in the path integral, expanding it in an infinite series. Metric \eqref{the_perturbative_expansion} performs a simple shift of integration variables by a constant value:
\begin{align}
    \begin{split}
        \mathcal{Z} =& \int\mathcal{{D}}[ \eta + \kappa \,h ] \exp\Big[ i \,\mathcal{A}[\eta + \kappa\, h] \Big]\\
        =& \int\mathcal{D}[h] \exp\Bigg[ i \, \mathcal{A}[\eta] + i\, \cfrac{\delta \mathcal{A} }{\delta g_{\mu\nu}} \Bigg|_{g=\eta} \, \kappa h_{\mu\nu} + i\, \cfrac{\delta^2 \mathcal{A}}{\delta g_{\mu\nu} \delta g_{\alpha\beta}} \Bigg|_{g=\eta} \kappa^2\,h_{\mu\nu} h_{\alpha\beta} \\
        & \hspace{90pt}+ i\, \cfrac{\delta^3 \mathcal{A}}{\delta g_{\mu\nu} \delta g_{\alpha\beta} \delta g_{\rho\sigma}} \Bigg|_{g=\eta} \kappa^3 h_{\mu\nu} h_{\alpha\beta} g_{\rho\sigma} + \okappa{4} \Bigg].
    \end{split}
\end{align}

The first term of this expansion is irrelevant. The multiplier does not depend on the integration variables $h_{\mu\nu}$ and factorises out of the path integral. In turn, while calculating a matrix element, the multiplier will be cancelled entirely because of the normalisation.

It is safe to assume that the second term shall also vanish. The term disappears if the flat spacetime makes the microscopic action stationary, which occurs when the classical field equations derived from the microscopic action $\mathcal{A}$ have flat spacetime as a solution. Consequently, the discussed approach only applies to some gravity models with further modifications. One can always extend it to the case of an arbitrary background spacetime \cite{Christensen:1979iy,Birrell:1982ix}. Further, we will discuss the physical content of the theory and argue that it may not be necessary. Let us postpone the discussion, assume that the term vanishes, and proceed with the technical discussion.

The other terms of the expansion do not vanish but describe the propagation of perturbations and their interactions. It is helpful to present $\mathcal{Z}$ in the following form, making its structure explicit.  
The action naturally contains $\kappa^{-2}$ factor, similar to the Einstein-Hilbert action. Therefore, the generating functional takes the following form: 
\begin{align}
    \mathcal{Z} = \int\mathcal{D}[h] \exp\Bigg[  - \cfrac{i}{2}\, h_{\mu\nu} \mathcal{O}^{\mu\nu\alpha\beta} \square h_{\alpha\beta}  + i \, \kappa \, \widehat{\mathcal{V}}_{(3)}^{\mu_1\nu_1\mu_2\nu_2\mu_3\nu_3} h_{\mu_1\nu_1} h_{\mu_2\nu_2} h_{\mu_3\nu_3} + \okappa{2} \Bigg].
\end{align}
Here $\mathcal{O}^{\mu\nu\alpha\beta}$ is the differential operator defining the structure of the graviton propagator well-established in the literature \cite{Fierz:1939ix,Accioly:2000nm}. Operator $\widehat{\mathcal{V}}_(3)$ is the differential operator defining the structure of interaction between three perturbations. The term $\okappa{2}$ includes all other interaction terms, each suppressed by a different power of the gravitational coupling.

The standard prescription provides a way to calculate a given matrix element. One shall introduce a formal external current $J^{\mu\nu}$ linearly coupled to the metric perturbation:
\begin{align}
    \begin{split}
        \mathcal{Z}[J] =& \int\mathcal{D}[h] \exp\Big[i\,\mathcal{A}[\eta + \kappa\, h] + i \, h_{\mu\nu} J^{\mu\nu} \Big] \\
        =& \int\mathcal{D}[h] \exp\Bigg[  - \cfrac{i}{2}\, h_{\mu\nu} \mathcal{O}^{\mu\nu\alpha\beta} \square h_{\alpha\beta}  + i \, h_{\mu\nu} J^{\mu\nu}  + i \, \kappa \, \widehat{\mathcal{V}}_{(3)}^{\mu_1\nu_1\mu_2\nu_2\mu_3\nu_3} h_{\mu_1\nu_1} h_{\mu_2\nu_2} h_{\mu_3\nu_3} + \okappa{2} \Bigg]
    \end{split}
\end{align}
The path integral reduces to the Gaussian integral, which can be solved explicitly:
\begin{align}
    \begin{split}
        \mathcal{Z}[J] =& \exp\Bigg[ i \, \kappa \, \widehat{\mathcal{V}}_{(3)}^{\mu_1\nu_1\mu_2\nu_2\mu_3\nu_3} \frac{\delta}{\delta J^{\mu_1\nu_1}}\,\frac{\delta}{\delta J^{\mu_2\nu_2}}\,\frac{\delta}{\delta J^{\mu_3\nu_3}}\, + \okappa{2} \Bigg]  \exp\Bigg[   \cfrac{i}{2}\, J^{\mu\nu} \mathcal{O}^{-1}_{\mu\nu\alpha\beta} \square J^{\alpha\beta} \Bigg] .
    \end{split}
\end{align}
Finally, each matrix element links to a variational derivative of the functional:
\begin{align}
    \begin{split}
        &\bra{0} h_{\mu_1\nu_1} \cdots h_{\mu_N \nu_N} \ket{0} = \cfrac{1}{\mathcal{Z}} \,\int\mathcal{D}[h] h_{\mu\nu} \cdots h_{\mu_N\nu_N} \exp\Big[i\,\mathcal{A} \Big]\\
        &= \cfrac{1}{\mathcal{Z}[J] } \,\int\mathcal{D}[h] h_{\mu\nu} \cdots h_{\mu_N\nu_N} \exp\Big[i\,\mathcal{A} + i\, h_{\rho\sigma} J^{\rho\sigma} \Big] \Bigg|_{J=0}\\
        &= \cfrac{ \cfrac{\delta}{\delta J^{\mu_1\nu_1}}\cdots \cfrac{\delta}{\delta J^{\mu_N\nu_N}} \exp\left[ i \, \kappa \widehat{\mathcal{V}}_{(3)}^{\rho_1\sigma_1\rho_2\sigma_2\rho_3\sigma_3} \cfrac{\delta}{\delta J^{\rho_1\sigma_1}} \cfrac{\delta}{\delta J^{\rho_2\sigma_2}} \cfrac{\delta}{\delta J^{\rho_3\sigma_3}} + \okappa{2} \right]  \exp\left[   \frac{i}{2} J^{\mu\nu} \mathcal{O}^{-1}_{\mu\nu\alpha\beta} \square J^{\alpha\beta} \right]  }{ \exp\left[ i \, \kappa \, \widehat{\mathcal{V}}_{(3)}^{\rho_1\sigma_1\rho_2\sigma_2\rho_3\sigma_3} \cfrac{\delta}{\delta J^{\rho_1\sigma_1}}\,\cfrac{\delta}{\delta J^{\rho_2\sigma_2}}\,\cfrac{\delta}{\delta J^{\rho_3\sigma_3}}\, + \okappa{2} \right]  \exp\left[   \frac{i}{2}\, J^{\mu\nu} \mathcal{O}^{-1}_{\mu\nu\alpha\beta} \square J^{\alpha\beta} \right]  }  \Bigg|_{J=0}.
    \end{split}
\end{align}

The following comments are due. First and foremost, gravity is a gauge theory, so operator $\mathcal{O}^{-1}$ does not exist. In analogy with other gauge theories, one uses the Faddeev-Popov ghosts \cite{Faddeev:1967fc,Faddeev:1973zb} or the BRST technique \cite{Becchi:1974xu,Becchi:1974md,Becchi:1975nq,Tyutin:1975qk} to obtain the propagator. We discuss this issue in detail in another section since it is irrelevant to the principal discussion. Secondly, the discussed scheme is the standard scheme used within the conventional quantum field theory. In that sense, perturbative quantum gravity is the most general attempt to study quantum gravitational effects while remaining within the standard quantum field theory without any modifications. Lastly, the discussed calculations pose no principal difficulty but a severe computational challenge. The issue is not specific to quantum gravity itself. Even within renormalisable quantum field theories, for instance, within quantum chromodynamics, with each order of perturbation theory, calculations become increasingly complicated and involve more and more contributions \cite{Kleiss:1988ne}. In that sense, perturbative quantum gravity is not an exception to this rule but rather presents a limiting case with the faster-growing complexity.

Let us turn to a discussion of the physical content of the theory. Three physical premises behind the perturbative approach within the quantum field theory are crucial for the perturbative quantum gravity. Firstly, the perturbation theory only requires propagators, interaction operators, and polarisation operators to generate perturbative expansions. Secondly, propagating and external states are not necessarily free states of a theory. Thirdly, the Poincare group is necessary for the perturbative theory within the conventional quantum field theory. Below, we elaborate on these points.

Firstly, the core of a perturbative theory is propagating degrees of freedom, their interactions, and polarisation operators. Indeed, when given these objects, one can use the standard Feynman graphs technique to generate any perturbative expansion up to any desired order. From the physical point of view, these objects provide all the information about the content of the theory. The polarisation operators describe the states of quantum fields in the initial and final states. Propagators describe how perturbations propagate in the spacetime, while the interaction operators describe their interactions. The perturbative expansion approximates a matrix element as a series of consecutive interactions. These series allow one to define the $S$ matrix, the evolution operator of the theory, describing it in the most general way possible. Therefore, one requires no other objects to create and implement the perturbation theory.

Secondly, one shall distinguish propagating, external and free states. Propagating states are those states described by propagators of the theory. External states are those states described by polarisation operators. Free states are those states which can be separated and captured or measured in a (conceivable) physical experiment. The following considerations show the need to make such a distinguishment.

It is well established that propagating states differ from those present in initial and final states. The Faddeev-Popov ghosts provide the best example \cite{Faddeev:1973zb}. These ghosts are non-physical since they enter a theory due to the mathematical redefinition of the path integral integration volume. Consequently, these ghosts do not enter initial or final states and cannot be separated or detected in any physical experiment. Nevertheless, they must be presented in a perturbation theory to ensure the gauge invariance and unitarity of the theory. This example alone shows the need to distinguish between states that propagate and states that can exist in the initial and final states.

The neutrino mixing mechanism \cite{Pontecorvo:1957qd,Maki:1962mu} explicitly shows the difference between propagating states and states that can be registered. The mechanism explicitly states that there are three neutrino states $\nu_1$, $\nu_2$, and $\nu_3$, which are eigenstates of the mass operator. At the same time, exist three neutrino states $\nu_e$, $\nu_\mu$, and $\nu_\tau$, which are eigenstates of the interaction operators. These states are not the same but are linearly related via the Pontecorvo-Maki-Nakagawa-Sakata matrix. Neutrino propagators correspond to the eigenstates of the mass operator. However, these states cannot be registered by any physical apparatus since the eigenstates of the interaction operator are coupled to the matter states. Therefore, this model provides an explicit example of a difference between propagating states and states that can be registered.

Quantum chromodynamics provides the most vivid justification for the need for such a distinction \cite{Halzen:1984mc,Creutz:1983njd,Smilga:2001ck}. Within the theory, quarks experience mixing \cite{Cabibbo:1963yz,Kobayashi:1973fv}, so the same logic is applicable. Quark propagators correspond to the mass operator eigenstates, which differs from the interaction operator eigenstates. The Cabibbo-Kobayashi-Maskawa matrix linearly connects these eigenstates. Further, the confinement makes it impossible to observe quark states directly. The spectrum of observed states is colourless and does not match the spectrum of propagating states. One uses the distribution functions technique to relate calculations within quantum chromodynamics with observational quantities \cite{Altarelli:1981ax,Soper:1996sn,Belitsky:2005qn}. A distribution function describes an external probe's probability of interacting with a quark or a gluon composing a non-perturbative state such as atomic nuclei. In other words, one operates with matrix elements having interaction operator eigenstates in the initial or finite states. In this way, quantum chromodynamics not only points to the necessity to distinguish between these states but actively requires such a distinguishment for all computational purposes.

Lastly, the Poincare group is essential for quantum field theory. The group admits two Casimir operators. A Casimir operator is an operator that commutates with all other operators of the group algebra. Consequently, the eigenstates of a Casimir operator remain invariant under the Poincare group transformations. Two Casimir operators of the Poincare group define the notions of mass and spin, which is the basis for constructing and classifying all quantum fields \cite{Wigner:1939cj,Bargmann:1948ck,Bilal:2001nv}. Therefore, without the Poincare group, one shall construct a new field theory significantly different from the standard one. This reasoning does not imply that such theories do not exist, shall not be constructed, or are irrelevant to quantum gravity. On the contrary, such theories are widely developed, with the conformal field theory being (perhaps) the most well-known \cite{Rychkov:2016iqz,Schottenloher:2008zz,Maldacena:1997re,Klebanov:2002ja,Bellucci:2002ji}. However, the discussion of their possible implementations for quantum gravity lies far beyond the scope of this paper.

The crucial role of the Poincare group found an implementation within the rapidly developing scattering amplitude techniques \cite{Elvang:2013cua,Henn:2014yza,Arkani-Hamed:2017jhn,Travaglini:2022uwo}. In contrast with the conventional quantum field theory, the scattering amplitude technique does not appeal to the notion of a quantum field. One classifies the initial and final states of the theory as eigenstates of the Poincare group Casimir operators without any references to quantum fields. Transformation features of a matrix element with respect to the Lorentz transformation provide a way to fix the matrix element's general structure uniquely. The optical theorem shows that a given matrix element has multiple contributions. The consecutive application of the theorem recovers the perturbative expansion as the unavoidable consequence of the unitarity. In this way, the scattering amplitude description allows one to recover the perturbative approach to quantum field theory without directly referencing quantum fields but using only the Poincare group.

These premises point to the following features in understanding perturbative quantum gravity. Firstly, choosing a flat background is necessary to construct a perturbative quantum theory of gravity within the standard quantum field theory framework. Secondly, the theory applies to a scattering of separated gravitational perturbations and matter states, but its applicability may extend beyond this setup.

The existence of the flat background is crucial due to the role of the Poincare group. Let us highlight one more time that it is possible to work within a theory with a different background, both from fundamental and technical points of view. However, such theories will bring us outside the standard quantum field theory formalism. The flat background allows one to use the Poincare group and to define eigenstates of mass and spin operators. In turn, one can define the scattering problem in the standard way described in many textbooks \cite{Christensen:1979iy,Birrell:1982ix}. One associates initial and final states with past and future spatial infinity while assuming that interaction occurs in the flat spatial region between these states.

The formalism of perturbative quantum gravity extends beyond the standard scattering problem. Firstly, propagating graviton states may not be associated with the states registered by physical apparatus. Gravitons are propagating states because they are associated with eigenstates of mass and chirality operators. As was pointed out above, this does not imply that these exact states are registered by physical apparatus. The graviton propagator receives highly non-trivial quantum corrections altering its structure \cite{Anber:2011ut,Latosh:2020wbo}, further indicating that propagating graviton states shall not be associated with states registered by physical apparatus. Finally, there is an ongoing discussion concerning the mere possibility of detecting graviton states directly  \cite{Dyson:2013hbl}. Although the main point of this article is related to the subject of our discussion, it will bring us far away from our main aim, so we will not discuss this claim further. Still, even without a discussion about graviton's detectability, there are enough reasons to believe that graviton states shall be viewed only as propagating states.

These arguments allow one to use perturbative quantum gravity beyond the narrow scope of the scattering problem. There are many branches of research where such an extension of perturbative quantum gravity is assumed implicitly. Perhaps the most recognised one is the calculation of the effective action. Although the detailed discussion of effective action in quantum gravity lies far beyond the scope of this paper (and was done in many publications, for instance, \cite{Buchbinder:2017ea}), we will touch upon its two most important features. 

The effective action technique provides a tool to calculate the classical value (the expectation value) of a quantum field. In other words, one expects to recover the classical dynamics of a system driven by its quantum behaviour. From the technical point of view, to calculate the ($n$-loop) effective action, one sums ($n$-loop) one-particle irreducible diagrams. It is well-recognised that even one-loop effective action for a gauge theory (including gravity) is gauge-dependent. In response to this finding, the unique effective action technique was developed \cite{Barvinsky:1985an,Vilkovisky:1984st}. It treats the gauge-fixing parameter of a theory as a generalised coordinate and constructs a description of internal field space with curved geometry. In turn, it is possible to define a generalisation of the variational derivative similar to the standard covariant derivative. The unique effective action is defined in terms of such derivatives, which makes it gauge-independent. At the same time, such generalisation of the variational derivative introduces new terms to the effective action. These terms cannot be obtained from one-particle irreducible diagrams, which seems to contradict the original setup. One shall only consider the on-shell effective action to resolve this apparent contradiction. The situation is similar to the calculation of scattering cross sections. Although it is possible to calculate an off-shell scattering cross section, physical apparatus can produce and register only on-shell states. Similarly, the classical value of a quantum field can only be associated with an on-shell object. In this way, the unique effective action removes the gauge dependence from calculations while the fixation of the effective action on the mass shell recovers the classical dynamic of quantum fields.

Lastly, we shall comment on the renormalisation of perturbative quantum gravity. The term "renormalisation" is commonly used in the literature, although it may not be the most suitable for this discussion. For the sake of consistency, we will continue to use this term. However, it is worth noting that the discussion below goes far beyond renormalisation as it is understood in conventional renormalisable models such as quantum electrodynamics, $SU(N)$ Yang-Mills theory, and the standard model. 

The growing consensus is that perturbative quantum gravity does not experience any problems with renormalisation. This point of view does not imply that the theory is well-renormalisable. On the contrary, it is believed that extending the perturbative quantum gravity to a conventionally renormalisable theory is impossible. Consequently, all ``problems'' of the theory shall be viewed as features, while the complete theory of quantum gravity development shall be performed separately.

The main feature of the perturbative quantum gravity defining its renormalisation behaviour is that it generates a new set of higher dimensional operators at each loop order. This feature was first observed in the classical paper \cite{tHooft:1974toh}, where the one-loop divergences of perturbative quantum gravity were studied. At the one-loop level, the theory generates divergencies proportional to operators $R^2$, $R_{\mu\nu}^2$, and $R_{\mu\nu\alpha\beta}^2$. The divergence of the Riemann tensor squared term is irrelevant since the term can always be brought to the Gauss-Bonnet term, which is the complete derivative in $D=4$. 

These divergences vanish for pure general relativity without matter for on-shell amplitudes. On shall states satisfy the vacuum Einstein equations $R_{\mu\nu}=0$, so all the divergent contributions are cancelled. This feature is accidental since it requires the theory to exist in $D=4$, has no matter degrees of freedom, and works only for on-shell amplitudes. Many other results later confirmed this general theory feature (see, for instance, \cite{Goroff:1985th,Stelle:1976gc,Latosh:2018xai,Latosh:2020jyq}).

The fact that the theory generates new operators at each level of perturbation theory can be negated from the technical point of view, but the theory loses its predictability. Indeed, let us assume that we calculated a gravitational scattering cross section at the tree level and compared it with empirical data. At the tree level, the theory has a single coupling, the gravitational coupling, and we shall recover its value from the experiment. When we go at the one-loop level, the value of the gravitational coupling is not affected by quantum effects \cite{Anber:2011ut,Latosh:2020wbo}. However, the theory develops divergences proportional to $R^2$ and $R_{\mu\nu}^2$ operators. From the technical point of view, these divergences can be subtracted and replaced by two new couplings. In turn, the values of these new couplings shall once again be recovered from the experiment. Consequently, by increasing the level of perturbation theory (the number of loop corrections), we do not increase the precision of calculations. On the contrary, we require more data to match experimental results with each iteration.

These arguments show that perturbative quantum gravity cannot be treated like the conventionally renormalisable theory. From the physical point of view, the root of the problem lies in the fact that the theory is effective. Since it admits a separated energy scale, the Planck scale, it marks the natural limit of its applicability. In full agreement with the standard effective field theory logic \cite{Burgess:2003jk,Donoghue:1994dn,Georgi:1993mps}, the theory shall include the term terms suppressed by the gravitational coupling the closer to the Planck energy it approaches.

We shall summarise this section to conclude the discussion of the perturbative quantum gravity. Perturbative quantum gravity is a perturbatively constructed quantum field theory. The following distinctive features:
\begin{list}{$\bullet$}{}
    \item 
        Gravitons are propagating massless degrees of freedom with chirality $\pm 2$.
    \item 
        There are no reasons to believe that gravitons shall be associated with degrees of freedom measured by physical apparatus.
    \item 
        The theory is equivalent to a quantum theory of small metric perturbations.
    \item 
        The theory has infinite interaction operators, but the same gravitational coupling parametrises them.
    \item 
        The theory is effective. It develops new operators at each new order of perturbation theory.
    \item 
        The theory does not experience problems with renormalisation since it shall not be renormalised due to its effective nature.
\end{list}
Although the discussion can be extended further, especially considering the ongoing study of its quantum features, this section describes the core of the theory relevant to the main aim of the article. With the given background, we can meaningfully address the problem of generating Feynman rules within perturbative quantum gravity.

\section{Computational tools}\label{Computational_Tools}

This section discusses technical tools that make an efficient calculation of the interaction rules for perturbative quantum gravity possible. The main feature of the theory is the factorisation that allows one to separate a quantum gravity action into factors of two types. Factors of the first type contain derivatives and a finite number of terms. The others are free from derivatives but are infinite series. We discuss this factorisation in the following subsection. Factors of the first type are easy to calculate and require no sophisticated techniques, so we will not discuss them in detail. Factors of the second type present a more significant challenge. Nonetheless, it is possible to establish certain recursive relations, which highly improve the efficiency of their calculations. The second subsection discusses such relations.

We will present results in a series of definitions and theorems. Detailed proofs for many theorems are omitted as they are not this paper's primary focus and can be found in other publications.

\subsection{Factorisation}

We begin with the formal definition of the perturbative metric. 

\begin{definition}
    {~}\\
    The perturbative metric $g_{\mu\nu}$ is defined as follows:
    \begin{align}
        g_{\mu\nu} \overset{\text{def}}{=} \eta_{\mu\nu} + \kappa \, h_{\mu\nu}.
    \end{align}
    Here, $h_{\mu\nu}$ is the small metric perturbations, $\eta_{\mu\nu} = \operatorname{diag}(+---)$ is the Minkowski metric used to raise and low indices, and $\kappa$ is the gravity coupling related with the Newton's constant $G_\text{N}$
    \begin{align}
        \kappa^2 \overset{\text{def}}{=} 32\,\pi\,G_\text{N}.
    \end{align}
\end{definition}
\noindent This formula is not a truncation of an infinite series but a finite expression with no omitted terms. The metric still introduces infinite series in the theory.

\begin{theorem}
    \begin{align}
        \begin{split}
            g^{\mu\nu} &= \sum\limits_{n=0}^\infty (-\kappa)^n (h^n)^{\mu\nu}, \\
            \sqrt{-g} &= \sum\limits_{n=0}^\infty (-\kappa)^n \sum\limits_{m=1}^n \cfrac{1}{m!} \left( - \frac{1}{2} \right)^m \left[ \sum\limits_{k_1 + \cdots + k_m = n} \frac{ \tr (h^{k_1}) \cdots \tr (h^{k_m})}{k_1 \cdots k_m} \right] ,\\
            \mathfrak{e}^m{}_\mu &= \sum\limits_{n=0}^\infty \kappa^n \binom{\frac12}{n} \left( h^n \right)^m{}_\mu ,\\
            \mathfrak{e}_m{}^\mu &= \sum\limits_{n=0}^\infty \kappa^n \binom{-\frac12}{n} \left( h^n \right)_m{}^\mu ,\\
        \end{split}
    \end{align}
    Here $\mathfrak{e}$ is the vierbein, $\binom{n}{m}$ is the binomial coefficients defined via the $\Gamma$ function, and the following notations are used:
    \begin{align}
        \begin{split}
            \left( h^n \right)^{\mu\nu} &= h^{\mu}{}_{\sigma_1} \,h^{\sigma_1}{}_{\sigma_2} \cdots h^{\sigma_{n-1}\nu} ,\\
            \left( h^n \right) &= h^{\sigma_1}{}_{\sigma_2} h^{\sigma_2}{}_{\sigma_3} \cdots h^{\sigma_n}{}_{\sigma_1} .\\
        \end{split}
    \end{align}
\end{theorem}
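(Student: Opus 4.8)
The plan is to reduce all four identities to elementary statements about the single mixed-index matrix $H$ with components $h^\mu{}_\nu = \eta^{\mu\rho} h_{\rho\nu}$, and to treat every expression as a formal power series in $\kappa$ so that no convergence question arises. The starting observation is that lowering one index on the perturbative metric gives $\eta^{\mu\rho} g_{\rho\nu} = \delta^\mu_\nu + \kappa\, h^\mu{}_\nu$, i.e.\ the matrix $M = I + \kappa H$. Each of the four objects is then a standard matrix function of $M$: the inverse metric is $M^{-1}$, the volume factor is $(\det M)^{1/2}$ up to the sign of $\det\eta$, and the vierbein and its inverse are $M^{1/2}$ and $M^{-1/2}$ in the symmetric gauge.

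For the inverse metric I would simply invoke the Neumann (geometric) series $M^{-1} = (I + \kappa H)^{-1} = \sum_{n\ge 0} (-\kappa)^n H^n$, valid as a formal series, and then raise the free index with $\eta$ to recover $g^{\mu\nu} = \sum_{n\ge 0}(-\kappa)^n (h^n)^{\mu\nu}$. For the volume factor I would use $\det(\eta_{\mu\rho} M^\rho{}_\nu)$, note $\det\eta = -1$ so that $-g = \det M$, and then write $(\det M)^{1/2} = \exp\big[\tfrac12\,\tr\ln(I+\kappa H)\big]$. Expanding the logarithm as $\tr\ln(I+\kappa H) = -\sum_{k\ge 1} \tfrac{(-\kappa)^k}{k}\,\tr(h^k)$ and then the exponential as a power series produces
\[
\sqrt{-g} = \sum_{m\ge 0} \frac{1}{m!}\left(-\frac12\right)^m \left(\sum_{k\ge 1}\frac{(-\kappa)^k}{k}\,\tr(h^k)\right)^m .
\]
Expanding the $m$-th power by the multinomial rule, collecting all terms of fixed total degree $n = k_1 + \cdots + k_m$, and swapping the order of the $m$- and $n$-summations yields exactly the stated double sum (with the $m=0$ term supplying the constant $1$ at $n=0$).

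The vierbein identities I would obtain from the generalized binomial theorem. The defining relation $g_{\mu\nu} = \eta_{mn}\,\mathfrak{e}^m{}_\mu\,\mathfrak{e}^n{}_\nu$ reads, as a matrix equation, $g = \mathfrak{e}^{\mathsf{T}}\eta\,\mathfrak{e}$; imposing the symmetric gauge (the canonical choice that leaves the vierbein a power series in $h$ reducing to the identity at $h=0$) collapses this, in mixed-index matrix form, to $\mathfrak{e}^2 = I + \kappa H$. Hence $\mathfrak{e}^m{}_\mu = [(I+\kappa H)^{1/2}]^m{}_\mu = \sum_{n\ge 0}\binom{1/2}{n}\kappa^n (h^n)^m{}_\mu$ and, inverting, $\mathfrak{e}_m{}^\mu = \sum_{n\ge 0}\binom{-1/2}{n}\kappa^n (h^n)_m{}^\mu$, the binomial series $(I+X)^{\pm1/2} = \sum_n \binom{\pm1/2}{n}X^n$ supplying the coefficients directly.

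I expect the routine parts (inverse metric and the two binomial series) to be immediate. The main obstacle is the bookkeeping for $\sqrt{-g}$: one must carefully justify the interchange of the double summation and verify that grouping the multinomial expansion by total degree reproduces precisely the weights $1/(k_1\cdots k_m)$ and the partition constraint $k_1 + \cdots + k_m = n$ with each $k_i \ge 1$. A secondary subtlety, worth stating explicitly, is that the vierbein is defined only up to a local Lorentz rotation, so the clean square-root formulae hold specifically in the symmetric gauge; a different gauge would give an equivalent but differently organized series.
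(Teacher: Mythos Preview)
Your proposal is correct and follows essentially the same route as the paper: the inverse metric via the geometric/Neumann series, the volume factor via $\sqrt{-g}=\exp\big[\tfrac12\tr\ln(I+\kappa H)\big]$ with subsequent expansion and regrouping, and the vierbein as a power series in $h$ fixed by the defining relations. The only noteworthy difference is in presentation of the vierbein step: the paper posits an unknown power-series ansatz and determines the coefficients from the normalisation conditions, whereas you cut straight to $\mathfrak{e}=(I+\kappa H)^{1/2}$ in the symmetric gauge and read off the coefficients from the generalized binomial theorem; your explicit remark that this formula is gauge-dependent (holding in the symmetric gauge) is a useful clarification the paper leaves implicit.
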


\begin{proof}
    The proof of the first expression is trivial. One should use the given expression to calculate $g_{\mu\sigma} g^{\sigma\nu}$ and verify its correctness. The papers \cite{Prinz:2020nru, Latosh:2022ydd, Latosh:2023zsi} provide a more detailed discussion of the proof.

    Proof of the second expression is much more complicated. Firstly, one shall use the relation between the determinant and trace of a matrix:
    \begin{align}
        \sqrt{-g} = \left( - \det[g_{\mu\nu}]\right)^{1/2} = \Bigg( -\exp\Big[ \tr\big\{ \ln \left( \eta_{\mu\nu} + \kappa h_{\mu\nu} \right) \big\} \Big] \Bigg)^{1/2}.
    \end{align}
    Secondly, one shall factorise one flat spacetime metric and reduce the expression to a single exponent:
    \begin{align}
        \sqrt{-g} = \Bigg( -\exp\Big[ \tr\big\{ \ln \left( \eta_{\mu\sigma} \left[\delta^\sigma_{\nu} + \kappa h^\sigma{}_\nu \right] \right) \big\} \Big] \Bigg)^{1/2} = \exp\left[ \frac12 \tr\Big\{ \ln\left( \delta^\sigma_\nu + \kappa \, h^\sigma{}_\nu\right)  \Big\}\right] .
    \end{align}
    Lastly, one expands both $\exp$ and $\ln$ functions as Taylor series. The resulting expression is an infinite power series, with each term also being a power series. Nonetheless, it is possible to rearrange the terms of this series to obtain the desired expression. The publication \cite{Latosh:2022ydd} provides a detailed explanation of the derivation.

    Lastly, the expression for vierbein was first obtained in \cite{Prinz:2020nru}. One assumes the vierbein admits a power series expansion with unknown coefficients. The values of coefficients are fixed uniquely, so the expansion fits well-known relations for the vierbein.
\end{proof}

Only these objects and their combinations generate all infinite expansions that may enter a quantum gravity action within Riemann geometry. However, these are not all the objects present in the theory. We still have to discuss the Christoffel symbols and the Riemann tensor. The following theorem gives their structure.

\begin{theorem}
    \begin{align}
        \begin{split}
            \Gamma_{\alpha\mu\nu} =& \cfrac{\kappa}{2}\,\left[ \pd_\mu h_{\nu\alpha} + \pd_\nu h_{\mu\alpha} - \pd_\alpha h_{\mu\nu} \right], \\
            \Gamma^\alpha_{\mu\nu} =& \cfrac{\kappa}{2} \left[ \eta^{\mu\nu} - \kappa\, h^{\mu\nu} + \kappa^2\, h^{\mu\sigma}\,h_\sigma{}^\nu + \okappa{3} \right] \left[ \pd_\mu h_{\nu\beta} + \pd_\nu h_{\mu\beta} - \pd_\beta h_{\mu\nu} \right], \\
            \left( \Gamma_\mu \right)_{ab} =& \mathfrak{e}_a{}^\alpha \mathfrak{e}_b{}^\beta \,\kappa\,\left[ \pd_\beta h_{\alpha\mu} - \pd_\alpha h_{\beta\mu} \right] .
        \end{split}
    \end{align}
\end{theorem}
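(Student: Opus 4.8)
The first two identities are immediate. For $\Gamma_{\alpha\mu\nu}$ I would substitute $g_{\mu\nu}=\eta_{\mu\nu}+\kappa\,h_{\mu\nu}$ into the fully lowered Christoffel symbol $\Gamma_{\alpha\mu\nu}=\tfrac12(\pd_\mu g_{\nu\alpha}+\pd_\nu g_{\mu\alpha}-\pd_\alpha g_{\mu\nu})$; since $\eta_{\mu\nu}$ is constant its derivatives drop out, leaving the finite, linear-in-$\kappa$ expression. For $\Gamma^\alpha_{\mu\nu}$ I would raise the first index with the inverse metric, $\Gamma^\alpha_{\mu\nu}=g^{\alpha\beta}\Gamma_{\beta\mu\nu}$, and insert the geometric series $g^{\alpha\beta}=\sum_n(-\kappa)^n(h^n)^{\alpha\beta}=\eta^{\alpha\beta}-\kappa\,h^{\alpha\beta}+\kappa^2(h^2)^{\alpha\beta}+\okappa{3}$ from Theorem~1. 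Multiplying this series by the finite bracket reproduces the stated infinite expansion term by term, with nothing beyond a Cauchy product required.

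The spin connection $(\Gamma_\mu)_{ab}$ is the substantive part. The plan is to start from its defining relation, the covariant constancy of the vierbein (the tetrad postulate),
\begin{equation}
    \pd_\mu \mathfrak{e}^a{}_\nu - \Gamma^\lambda_{\mu\nu}\,\mathfrak{e}^a{}_\lambda + (\Gamma_\mu)^a{}_b\,\mathfrak{e}^b{}_\nu = 0 ,
\end{equation}
and to solve it algebraically for the connection by contracting with $\mathfrak{e}_c{}^\nu$ and lowering the frame index with $\eta_{ab}$. This yields $(\Gamma_\mu)_{ab}=\Gamma^\lambda_{\mu\nu}\,\mathfrak{e}_{a\lambda}\,\mathfrak{e}_b{}^\nu-(\pd_\mu\mathfrak{e}_{a\nu})\,\mathfrak{e}_b{}^\nu$. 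I would then convert the Christoffel piece into the fully lowered symbol with the frame-raising identity $g^{\lambda\rho}\mathfrak{e}_{a\lambda}=\mathfrak{e}_a{}^\rho$, so that $\Gamma^\lambda_{\mu\nu}\mathfrak{e}_{a\lambda}=\Gamma_{\rho\mu\nu}\mathfrak{e}_a{}^\rho$, and substitute the already-proven expression for $\Gamma_{\rho\mu\nu}$.

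The key simplification comes from the antisymmetry of $(\Gamma_\mu)_{ab}$ in its frame indices, a consequence of metric compatibility: only the part of the bracket antisymmetric under $a\leftrightarrow b$ survives. In the Christoffel contribution $\tfrac{\kappa}{2}\mathfrak{e}_a{}^\rho\mathfrak{e}_b{}^\nu(\pd_\mu h_{\nu\rho}+\pd_\nu h_{\mu\rho}-\pd_\rho h_{\mu\nu})$ the symmetric term $\pd_\mu h_{\nu\rho}$ drops, and the remaining two terms collapse to the single antisymmetrised derivative $\mathfrak{e}_a{}^\alpha\mathfrak{e}_b{}^\beta(\pd_\beta h_{\alpha\mu}-\pd_\alpha h_{\beta\mu})$ sandwiched between the full vierbeins. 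Here I would use crucially the symmetric vierbein gauge of Theorem~1, in which $\mathfrak{e}_{a\nu}$ is a symmetric matrix given by the $\binom{1/2}{n}$ series; this symmetry is what lets the vierbein-derivative term $-(\pd_\mu\mathfrak{e}_{a\nu})\mathfrak{e}_b{}^\nu$ be organised into the same compact structure, so that the two full vierbeins factor out to all orders in $\kappa$.

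The main obstacle I anticipate is precisely controlling that last term to all orders: one must show that the antisymmetric-in-$(a,b)$ part of $-(\pd_\mu\mathfrak{e}_{a\nu})\mathfrak{e}_b{}^\nu$ assembles, together with the Christoffel piece, into the stated expression with no residual higher-derivative contributions and with the correct overall coefficient. Since the vierbein carries the full nonlinearity while only a single power of $h$ is meant to remain inside the bracket, this amounts to verifying an all-orders cancellation resting on the symmetric square-root structure $\mathfrak{e}=\sqrt{g}$ encoded in the $\binom{\pm 1/2}{n}$ coefficients. I would pin down the overall normalisation by matching to the linearised spin connection as a sanity check, being careful about the convention chosen for $(\Gamma_\mu)_{ab}$ in the spinor covariant derivative, as that choice fixes the numerical factor multiplying $\kappa$.
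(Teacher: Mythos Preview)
Your approach coincides with the paper's. For the first two identities the paper says exactly what you do: substitute the perturbative metric into the definition of the Christoffel symbols and, for the upper-index version, multiply by the inverse-metric series. For the spin connection the paper also starts from the defining relation $(\Gamma_\mu)_{ab}=\mathfrak{e}_a{}^\alpha\mathfrak{e}_b{}^\beta\,\Gamma_{\alpha\mu\beta}+g_{\rho\sigma}\,\mathfrak{e}_a{}^\rho\,\pd_\mu\mathfrak{e}_b{}^\sigma$ (equivalent to what you extract from the tetrad postulate) and invokes the antisymmetry in $(a,b)$; the difference is only that the paper dispatches your ``main obstacle'' in one line rather than treating it as an all-orders cancellation to be verified. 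Its argument is that in the symmetric-vierbein gauge the second term is built entirely from symmetric matrices, hence is itself symmetric in $(a,b)$ and drops out upon antisymmetrisation, leaving only the Christoffel piece to be antisymmetrised into $\mathfrak{e}_a{}^\alpha\mathfrak{e}_b{}^\beta\,\kappa\,(\pd_\beta h_{\alpha\mu}-\pd_\alpha h_{\beta\mu})$. So the strategy you outline is the right one; you are simply anticipating more work in the vierbein-derivative term than the paper claims is needed.
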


\begin{proof}
    Proof of the first two expressions is trivial and relies only on the definition of the Christoffel symbols
    \begin{align}
        \Gamma^\alpha_{\mu\nu} \overset{\text{def}}{=} \cfrac{1}{2}\, g^{\alpha\beta} \left[ \pd_\mu g_{\nu\beta} + \pd_\nu g_{\mu\beta} - \pd_\beta g_{\mu\nu} \right].
    \end{align}
    The last expression is proved as follows. Firstly, one shall use the definition of the spin connection:
    \begin{align}
        \left(\Gamma_\mu \right)_{ab} \overset{\text{def}}{=} \mathfrak{e}_a{}^\alpha \mathfrak{e}_b{}^\beta \, \Gamma_{\alpha\mu\beta} + g_{\rho\sigma} \, \mathfrak{e}_a{}^\rho  \, \pd_\mu \mathfrak{e}_b{}^\sigma.
    \end{align}
    The spin connection is antisymmetric for indices $a$ and $b$. For the perturbative metric, the last term is constituted only by symmetric matrices, so it is symmetric with respect to $a$ and $b$. Therefore, the last term does not contribute to the spin connection. For the first term, one shall use the explicit expression for the Christoffel symbol and make it antisymmetric. Similarly, paper \cite{Latosh:2022ydd} discusses the derivation in more detail.
\end{proof}

The obtained expressions allow us to describe the structure of the Riemann tensor with the following theorem.

\begin{theorem}
    \begin{align}
        \begin{split}
            R_{\mu\nu}{}^\alpha{}_\beta &= g^{\alpha\lambda} \left[ \pd_\mu \Gamma_{\lambda\nu\beta} - \pd_\nu \Gamma_{\lambda\mu\beta} + g^{\rho\sigma} \left\{ \Gamma_{\rho\nu\lambda} \Gamma_{\sigma\mu\beta} - \Gamma_{\rho\mu\lambda} \Gamma_{\sigma\nu\beta} \right\} \right] , \\
            R_{\mu\nu\alpha\beta} &= \pd_\mu \Gamma_{\alpha\nu\beta} - \pd_\nu \Gamma_{\alpha\mu\beta} + g^{\rho\sigma} \left\{ \Gamma_{\rho\nu\alpha} \Gamma_{\sigma\mu\beta} - \Gamma_{\rho\mu\alpha} \Gamma_{\sigma\nu\beta} \right\} ,\\
            R_{\mu\nu} &= g^{\rho\sigma}\left[ \pd_\rho \Gamma_{\sigma\mu\nu} - \pd_\mu \Gamma_{\sigma\rho\nu} \right] + g^{\rho\sigma} g^{\lambda\tau} \left[ \Gamma_{\tau\mu\nu} \Gamma_{\lambda\rho\sigma} - \Gamma_{\lambda\mu\sigma} \Gamma_{\tau\nu\rho} \right] ,\\
            R &= g^{\mu\nu} g^{\alpha\beta} \pd_\mu \left[ \Gamma_{\nu\alpha\beta} - \Gamma_{\alpha\nu\beta} \right] +g^{\mu\nu} g^{\alpha\beta} g^{\rho\sigma} \left[ \Gamma_{\alpha\mu\nu} \Gamma_{\beta\rho\sigma} - \Gamma_{\alpha\mu\rho} \Gamma_{\beta\nu\sigma} \right] .
        \end{split}
    \end{align}
\end{theorem}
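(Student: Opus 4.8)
The plan is to derive all four expressions from the single standard coordinate definition of the curvature, namely $R^\lambda{}_{\beta\mu\nu} = \pd_\mu \Gamma^\lambda_{\nu\beta} - \pd_\nu \Gamma^\lambda_{\mu\beta} + \Gamma^\lambda_{\mu\rho}\Gamma^\rho_{\nu\beta} - \Gamma^\lambda_{\nu\rho}\Gamma^\rho_{\mu\beta}$, and then to eliminate every mixed Christoffel in favour of the fully lowered symbol $\Gamma_{\alpha\mu\nu}$ of the previous theorem together with explicit inverse metrics. The motivation is exactly the factorisation philosophy of this section: $\Gamma_{\alpha\mu\nu}$ is finite and linear in $\kappa$, whereas every $\kappa$-series is pushed into the isolated $g^{\mu\nu}$ factors. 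First I would substitute $\Gamma^\lambda_{\mu\nu} = g^{\lambda\alpha}\Gamma_{\alpha\mu\nu}$ everywhere.

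The quadratic terms are immediate: $\Gamma^\lambda_{\mu\rho}\Gamma^\rho_{\nu\beta} = g^{\lambda\alpha} g^{\rho\sigma}\Gamma_{\alpha\mu\rho}\Gamma_{\sigma\nu\beta}$, which already exhibits a $g^{\rho\sigma}\,\Gamma\Gamma$ structure once the overall $g^{\lambda\alpha}$ is stripped off. The derivative terms are the delicate ones. Differentiating $\pd_\mu\left( g^{\lambda\alpha}\Gamma_{\alpha\nu\beta}\right)$ yields the wanted piece $g^{\lambda\alpha}\pd_\mu\Gamma_{\alpha\nu\beta}$ plus an unwanted piece $\left(\pd_\mu g^{\lambda\alpha}\right)\Gamma_{\alpha\nu\beta}$. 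The key tool for disposing of the latter is the metric-compatibility identity $\pd_\lambda g_{\mu\nu} = \Gamma_{\mu\lambda\nu} + \Gamma_{\nu\lambda\mu}$, which follows directly by symmetrising the lowered-Christoffel formula of the preceding theorem, and hence $\pd_\mu g^{\lambda\alpha} = -\,g^{\lambda a} g^{\alpha b}\left(\Gamma_{a\mu b} + \Gamma_{b\mu a}\right)$.

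The main obstacle is then purely combinatorial bookkeeping: the unwanted derivative pieces have become pure $\Gamma\Gamma$ contractions carrying two inverse metrics, and I would show that they recombine with the quadratic terms above so that the entire right-hand side collapses to the single antisymmetric bracket $g^{\rho\sigma}\left\{\Gamma_{\rho\nu\lambda}\Gamma_{\sigma\mu\beta} - \Gamma_{\rho\mu\lambda}\Gamma_{\sigma\nu\beta}\right\}$ stated in the theorem. Factoring the common leading $g^{\alpha\lambda}$ produces $R_{\mu\nu}{}^\alpha{}_\beta$, and lowering that upper index with the metric (so the leading inverse metric cancels against it) gives the fully lowered $R_{\mu\nu\alpha\beta}$; the symmetry properties of $\Gamma_{\alpha\mu\nu}$ in its outer pair make the surviving sign structure come out correctly.

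Finally, the Ricci tensor and scalar follow by contraction rather than a fresh computation. I would contract the Riemann expression as $R_{\mu\nu} = g^{\rho\sigma} R_{\rho\mu\sigma\nu}$ and $R = g^{\mu\nu} R_{\mu\nu}$; each trace simply introduces the additional $g^{\rho\sigma}$ and $g^{\lambda\tau}$ factors appearing in the stated formulas, while any derivative acting on those new inverse metrics is again removed by the same compatibility identity. Careful relabelling of the dummy indices then matches the two displayed lines for $R_{\mu\nu}$ and $R$ exactly. Throughout, no $\kappa$-expansion is performed: the result is an exact rewriting whose only infinite series reside in the explicit $g^{\mu\nu}$ factors, which is precisely the form demanded for the recursive construction of the Feynman rules.
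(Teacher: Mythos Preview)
Your proposal is correct and matches the paper's own approach essentially line for line: the paper also starts from the standard Riemann tensor, singles out the identity $\pd_\mu g^{\alpha\beta} = - g^{\alpha\rho} g^{\beta\sigma} \pd_\mu g_{\rho\sigma}$ as the one non-trivial input (your metric-compatibility rewriting $\pd_\mu g_{\rho\sigma} = \Gamma_{\rho\mu\sigma} + \Gamma_{\sigma\mu\rho}$ is exactly the step that turns this into usable $\Gamma\Gamma$ terms), and declares the rest to be index bookkeeping, with the Ricci tensor and scalar obtained by contraction. There is nothing to add.
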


\begin{proof}
    Expressions for the Ricci tensor and the scalar curvature can be derived directly from the expressions for the Riemann tensor, making their derivation purely technical. The derivation of the first two formulas is technically straightforward but requires lengthy computations. To obtain the desired result, one shall use a single relation for the inverse metric derivative:
    \begin{align}
        \pd_\mu g^{\alpha\beta} = - g^{\alpha\rho} g^{\beta\sigma} \pd_\mu g_{\rho\sigma}.
    \end{align}
    The remaining part of the proof relies on the manipulation of indices.
\end{proof}

These theorems constitute the factorisation theorem crucial for the perturbative quantum gravity.
\begin{theorem}
    {~}\\
    In Riemann geometry, a gravity action evaluated with the perturbative metric can always take a form where all terms begin infinite series are free from derivatives and expressed via $\sqrt{-g}$, $g^{\mu\nu}$, and $\mathfrak{e}_m{}^\mu$.
\end{theorem}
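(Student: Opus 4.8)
The plan is to treat the statement as a structural, inductive claim about how any diffeomorphism-invariant gravity action in Riemann geometry is assembled, and to verify that every assembly step preserves the desired factorised form. The first step is to recall that any such action is the spacetime integral of $\sqrt{-g}$ times a scalar Lagrangian, and that this scalar Lagrangian is built solely from the Riemann tensor, the metric, the inverse metric, and covariant derivatives of the Riemann tensor, all contracted down to a scalar. Since $\sqrt{-g}$ is, by the preceding theorem, an infinite series free from derivatives, the overall volume factor already sits in the allowed class; what remains is to show that the scalar Lagrangian itself factorises into finite derivative-carrying factors multiplied by derivative-free infinite series expressible through $\sqrt{-g}$, $g^{\mu\nu}$, and $\mathfrak{e}_m{}^\mu$.

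The base case is supplied directly by the earlier theorems. By the Riemann-tensor theorem, the fully covariant tensor $R_{\mu\nu\alpha\beta}$ is a finite sum of products of the lower-index Christoffel symbol $\Gamma_{\alpha\mu\nu}$, which is linear in $\kappa$, finite, and carries all the derivatives, and the inverse metric $g^{\rho\sigma}$, which is an infinite series free from derivatives. Likewise the spin connection $(\Gamma_\mu)_{ab}$ is already displayed in the form $\mathfrak{e}_a{}^\alpha \mathfrak{e}_b{}^\beta$ times the finite factor $\kappa[\pd_\beta h_{\alpha\mu}-\pd_\alpha h_{\beta\mu}]$, so curvature and spin-connection building blocks enter the Lagrangian in the factorised form from the outset. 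Every index contraction needed to build a scalar raises or lowers indices with $g^{\mu\nu}$ or $g_{\mu\nu}$; the former is the infinite derivative-free series of the first theorem, and the latter is finite and derivative-free, so no contraction ever introduces a derivative into an infinite factor.

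The inductive step, which is the heart of the argument, is closure under covariant differentiation. A covariant derivative $\nabla_\lambda$ acting on a factorised expression produces a partial derivative $\pd_\lambda$ plus Christoffel terms; the Christoffel terms are themselves finite derivative factors times inverse metrics, so they stay in the class. When $\pd_\lambda$ lands on a finite derivative-carrying factor it merely raises the derivative order, keeping that factor finite. The only delicate case is $\pd_\lambda$ hitting an infinite derivative-free factor, and here the key move is to rewrite such a derivative back into the factorised form using the chain-rule identities
\begin{align}
    \pd_\mu g^{\alpha\beta} = -\,g^{\alpha\rho} g^{\beta\sigma}\,\pd_\mu g_{\rho\sigma}, \qquad \pd_\mu \sqrt{-g} = \tfrac12\,\sqrt{-g}\;g^{\alpha\beta}\,\pd_\mu g_{\alpha\beta},
\end{align}
together with an analogous identity for $\mathfrak{e}_m{}^\mu$. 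Because $\pd_\mu g_{\rho\sigma}=\kappa\,\pd_\mu h_{\rho\sigma}$ is finite and linear in $\kappa$, each identity exhibits the derivative of an infinite series as a derivative-free product of the allowed building blocks multiplied by a finite derivative factor. This closes the class under $\nabla_\lambda$, so by induction on the number of curvature factors and covariant derivatives every term of the Lagrangian attains the asserted form.

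The main obstacle is establishing the closure identity for the vierbein derivative, since, unlike $g^{\mu\nu}$ and $\sqrt{-g}$, there is no one-line chain rule for $\pd_\mu \mathfrak{e}_m{}^\mu$; I would derive it from the defining relations of the symmetric vierbein, for instance $\mathfrak{e}_a{}^\alpha \mathfrak{e}_b{}^\beta g_{\alpha\beta}=\eta_{ab}$ together with the power series of the vierbein theorem, obtaining $\pd_\mu \mathfrak{e}$ as a derivative-free series in the building blocks times $\kappa\,\pd_\mu h$. A secondary point to verify is genuine closure of the derivative-free algebra: the prefactors produced above must lie in the algebra generated by $\eta_{\mu\nu}$, $g_{\mu\nu}$, $g^{\mu\nu}$, $\sqrt{-g}$, and $\mathfrak{e}_m{}^\mu$. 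This holds because $h_{\mu\nu}=(g_{\mu\nu}-\eta_{\mu\nu})/\kappa$ lets any polynomial in $h$ be re-expressed through $g_{\mu\nu}$ and $\eta_{\mu\nu}$, so no new independent infinite series ever appears.
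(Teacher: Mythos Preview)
Your proposal is correct and, at its core, follows the same route the paper takes: the factorisation theorem is treated as the structural consequence of the preceding theorems on $g^{\mu\nu}$, $\sqrt{-g}$, $\mathfrak{e}_m{}^\mu$, the lower-index Christoffel symbols, and the Riemann tensor. The paper, however, offers no formal proof at all---it simply writes ``These theorems constitute the factorisation theorem'' and moves on, leaving the closure-under-covariant-differentiation argument implicit. Your inductive treatment, including the explicit chain-rule identities for $\pd_\mu g^{\alpha\beta}$ and $\pd_\mu\sqrt{-g}$ and the discussion of $\pd_\mu\mathfrak{e}_m{}^\nu$, supplies exactly the rigour the paper omits; the first of these identities is in fact the only technical ingredient the paper invokes (inside the proof of the Riemann-tensor theorem), so your argument is a faithful but considerably more complete version of what the paper sketches.
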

\noindent In other words, as long as we operate within Riemann geometry, we can always split a gravity action into two different parts. The part that involves derivatives is finite and can be calculated explicitly. The other part is an infinite series, but it is always expressed in terms of elementary series for the volume factors $\sqrt{-g}$, the inverse metric $g^{\mu\nu}$, and the vierbein $\mathfrak{e}_m{}^\mu$.

\subsection{Recursive relations}

The previous section provides a way to factorise a gravity action, but we must still address calculational challenges. Namely, we shall elaborate on the method to describe the structure of perturbative expansions for the inverse metric, volume factor, and vierbein. We address this challenge in this section.

\begin{definition}
    {~}\\
    We define the plain $I$-tensor of the $n$-th order as follows:
    \begin{align}
        I^{\rho_1\sigma_1\cdots\rho_n\sigma_n}_{(n)} \overset{\text{def}}{=} \eta^{\sigma_1\rho_2} \eta^{\sigma_2\rho_3}\cdots \eta^{\sigma_n\rho_1}.
    \end{align}
\end{definition}
\noindent The tensor has no additional symmetries. We introduce an additional tensor to account for the essential symmetries of the interaction rules.
\begin{definition}
    {~}\\
    We define the $\mathcal{I}$-tensor of the $n$-th order as follows:
    \begin{align}
        \mathcal{I}_{(n)}^{\rho_1\sigma_1\cdots\rho_n\sigma_n} \overset{\text{def}}{=} \cfrac{1}{2^n} \, \cfrac{1}{n!} \left[ I_{(n)}^{\rho_1\sigma_1\cdots\rho_n\sigma_n} + \text{permutations} \right].
    \end{align}
    Permutations account for all terms that make the $\mathcal{I}$-tensor symmetric with respect to permutations of indices within each index pair $\mu_i\leftrightarrow\nu_i$, and with respect to permutations of any two index pairs $\{\mu_i,\nu_i\} \leftrightarrow \{\mu_j,\nu_j\}$.
\end{definition}

The $\mathcal{I}$ tensor of the $n$-th order has $n! \, 2^n$ terms. In other words, the number of terms grows faster than the factorial. This feature holds for tensors discussed below. Although this feature presents a computational challenge, it is an essential part of the standard perturbative approach. The final expression for a gravity interaction vertex shall be symmetric with respect to graviton permutations, so it is necessary to operate with tensor structures respecting this symmetry.

The plain $I$-tensor provides a way to operate with powers of $h_{\mu\nu}$ and its traces given by the following theorem. The theorem is trivial, so we omit the proof.

\begin{theorem}
    {~}\\
    The plain $I$-tensor defines the structure of powers of $h_{\mu\nu}$ and its traces.
    \begin{align}
        \begin{split}
            &(h^n)^{\mu\nu}  = I_{(1+n)}^{\mu\nu\rho_1\sigma_1\cdots\rho_n\sigma_n}\,h_{\rho_1\sigma_1}\cdots h_{\rho_n\sigma_n}\,,\\
            & \operatorname{tr}(h^n) =I_{(n)}^{\rho_1\sigma_1\cdots\rho_n\sigma_n}\,h_{\rho_1\sigma_1}\cdots h_{\rho_n\sigma_n}\,.
        \end{split}
    \end{align}
\end{theorem}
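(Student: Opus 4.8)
The plan is to prove both identities by directly substituting the definition of the plain $I$-tensor and carrying out the index contractions, since the statement is purely combinatorial. The single organising observation is that $I_{(n)}$ encodes a \emph{cyclic} contraction pattern: reading the definition
\begin{equation}
    I_{(n)}^{\rho_1\sigma_1\cdots\rho_n\sigma_n} = \eta^{\sigma_1\rho_2}\,\eta^{\sigma_2\rho_3}\cdots\eta^{\sigma_{n-1}\rho_n}\,\eta^{\sigma_n\rho_1},
\end{equation}
each $\eta$ glues the second index of one pair to the first index of the next, and the final factor $\eta^{\sigma_n\rho_1}$ closes the chain into a loop. Introducing the mixed field $H^\mu{}_\nu \equiv \eta^{\mu\alpha}h_{\alpha\nu}$ (the matrix whose $n$-th power and trace appear on the left), every contraction $\eta^{\sigma_k\rho_{k+1}}h_{\rho_{k+1}\sigma_{k+1}} = H^{\sigma_k}{}_{\sigma_{k+1}}$ fuses two adjacent factors of $h$ into one factor of $H$.

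For the trace identity I would contract $I_{(n)}$ against $h_{\rho_1\sigma_1}\cdots h_{\rho_n\sigma_n}$ and apply the fusion rule to all $n$ factors. The cyclic structure then produces the closed chain
\begin{equation}
    H^{\sigma_n}{}_{\sigma_1}\,H^{\sigma_1}{}_{\sigma_2}\cdots H^{\sigma_{n-1}}{}_{\sigma_n} = \tr\!\left(H^n\right) = \tr(h^n),
\end{equation}
which is exactly the claimed right-hand side. For the power identity the only change is that the distinguished first pair of $I_{(1+n)}$ carries the free indices $\mu,\nu$ and is \emph{not} contracted against an $h$. Relabelling $(\mu,\nu)$ as that first pair gives
\begin{equation}
    I_{(1+n)}^{\mu\nu\rho_1\sigma_1\cdots\rho_n\sigma_n} = \eta^{\nu\rho_1}\,\eta^{\sigma_1\rho_2}\cdots\eta^{\sigma_{n-1}\rho_n}\,\eta^{\sigma_n\mu},
\end{equation}
so the wrap-around factor now terminates on the free index $\mu$ rather than closing the loop. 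Contracting against the $n$ factors $h_{\rho_1\sigma_1}\cdots h_{\rho_n\sigma_n}$ therefore leaves the open chain $H^\nu{}_{\sigma_1}H^{\sigma_1}{}_{\sigma_2}\cdots H^{\sigma_{n-1}\mu} = (h^n)^{\nu\mu}$, which equals $(h^n)^{\mu\nu}$ by the symmetry $h_{\mu\nu}=h_{\nu\mu}$.

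There is no substantive obstacle here — the paper itself flags the theorem as trivial — and the whole content is index bookkeeping. The only two points demanding a little care are the cyclic wrap-around index in the definition of $I$ (it closes the loop for the trace but lands on the open endpoint $\mu$ for the power) and the final appeal to the symmetry of $h$ to identify $(h^n)^{\nu\mu}$ with $(h^n)^{\mu\nu}$. If one prefers the fusion step stated rigorously rather than by inspection, I would instead run a short induction on $n$: the base case $n=1$ reduces to $\eta^{\sigma_1\rho_1}h_{\rho_1\sigma_1}=\tr(h)$ and $\eta^{\nu\rho_1}\eta^{\sigma_1\mu}h_{\rho_1\sigma_1}=h^{\mu\nu}$, and the inductive step peels off a single adjacent $\eta$–$h$ pair, converting it into one factor of $H$ and lowering the order by one.
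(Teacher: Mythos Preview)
Your proposal is correct and is precisely the direct index-by-index verification one expects; the paper itself declares the theorem trivial and omits the proof entirely, so your write-up is more detailed than what the paper provides. The two points you flag as needing care (the cyclic wrap-around and the final appeal to $h_{\mu\nu}=h_{\nu\mu}$ to identify $(h^n)^{\nu\mu}$ with $(h^n)^{\mu\nu}$) are exactly right.
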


Using this theorem, one can demonstrate how the plain $I$-tensor describes the perturbative structure of the inverse metric. The following theorem, which describes this relation, is a direct corollary of previous theorems, so we omit its proof.

\begin{theorem}
    {~}\\
    The inverse metric perturbative expansion in terms of the plain $I$ tensor reads:
    \begin{align}
        g^{\mu\nu} &= \sum\limits_{n=0}^\infty (-1)^n\,\kappa^n\,I^{\mu\nu\rho_1\sigma_1\cdots\rho_n\sigma_n}_{(1+n)}\,h_{\rho_1\sigma_1}\cdots h_{\rho_n\sigma_n}\,.
    \end{align}
\end{theorem}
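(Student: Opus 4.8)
The plan is to obtain the statement by a single substitution, chaining the two results already established in this section; indeed the paper flags it as a direct corollary. The two inputs are the geometric-series form of the inverse metric from the first theorem,
\begin{align}
    g^{\mu\nu} = \sum_{n=0}^\infty (-\kappa)^n (h^n)^{\mu\nu},
\end{align}
and the immediately preceding theorem, which recasts each matrix power of $h_{\mu\nu}$ as a plain $I$-tensor contracted with a string of gravitons,
\begin{align}
    (h^n)^{\mu\nu} = I_{(1+n)}^{\mu\nu\rho_1\sigma_1\cdots\rho_n\sigma_n}\, h_{\rho_1\sigma_1}\cdots h_{\rho_n\sigma_n}.
\end{align}

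First I would split the coefficient as $(-\kappa)^n = (-1)^n\,\kappa^n$, so that the combinatorial sign and the power of the gravitational coupling are separated and the $\kappa$-grading is explicit. Then I would insert the $I$-tensor expression for $(h^n)^{\mu\nu}$ into each summand. Since the operation is linear in each term and the dummy indices $\rho_i,\sigma_i$ are introduced afresh at every order $n$, no relabelling or resummation is needed, and the two displays merge directly into
\begin{align}
    g^{\mu\nu} = \sum_{n=0}^\infty (-1)^n\,\kappa^n\, I^{\mu\nu\rho_1\sigma_1\cdots\rho_n\sigma_n}_{(1+n)}\, h_{\rho_1\sigma_1}\cdots h_{\rho_n\sigma_n},
\end{align}
which is exactly the claim.

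The only place where anything could go wrong is the index bookkeeping hidden inside the $I$-tensor contraction: one must check that the cyclic pattern $\eta^{\nu\rho_1}\eta^{\sigma_1\rho_2}\cdots\eta^{\sigma_n\mu}$ packaged in $I_{(1+n)}$ truly reproduces the nested matrix product $h^{\mu}{}_{\sigma_1}h^{\sigma_1}{}_{\sigma_2}\cdots h^{\sigma_{n-1}\nu}$ once the $\eta$'s are used to raise and lower indices, with the symmetry of $h_{\mu\nu}$ absorbing the harmless $\mu\leftrightarrow\nu$ swap produced by the cyclic closure. But this verification is precisely the content of the earlier theorem on powers and traces of $h_{\mu\nu}$, which I may assume; hence there is no genuine obstacle, and I would present the result as a one-line substitution rather than an independent computation.
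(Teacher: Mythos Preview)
Your proposal is correct and matches the paper's treatment exactly: the paper states that this result is a direct corollary of the previous theorems and omits the proof, and you have supplied precisely the one-line substitution it had in mind. There is nothing to add.
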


This theorem provides an easy way to study the features of the I-tensor, such as the following easily proven contraction feature.

\begin{theorem}
    {~}\\
    The $I$-tensor admits the following contraction feature:
    \begin{align}
        \eta_{\mu\nu} \, I^{\mu\nu\rho_1\sigma_1\cdots\rho_n\sigma_n} = I^{\rho_1\sigma_1\cdots\rho_n\sigma_n}.
    \end{align}
\end{theorem}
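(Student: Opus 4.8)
The plan is to unfold the definition of the plain $I$-tensor directly and collapse the contraction with a single elementary metric identity. First I would make the order labels explicit, since the theorem suppresses them: the tensor on the left-hand side carries $n+1$ index pairs $(\mu,\nu),(\rho_1,\sigma_1),\dots,(\rho_n,\sigma_n)$ and is therefore $I_{(n+1)}$, while the right-hand side is $I_{(n)}$. Applying the definition to this ordered list of pairs, I would write out the cyclic product explicitly:
\begin{align}
    I_{(n+1)}^{\mu\nu\rho_1\sigma_1\cdots\rho_n\sigma_n} = \eta^{\nu\rho_1}\,\eta^{\sigma_1\rho_2}\cdots\eta^{\sigma_{n-1}\rho_n}\,\eta^{\sigma_n\mu}.
\end{align}
The key structural observation is that the two contracted indices occupy the two ``boundary'' factors of the cyclic chain: $\nu$ appears only in the opening factor $\eta^{\nu\rho_1}$, and $\mu$ appears only in the cycle-closing factor $\eta^{\sigma_n\mu}$, while all remaining factors are untouched by the contraction.

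Next I would contract with $\eta_{\mu\nu}$ and isolate the two boundary factors:
\begin{align}
    \eta_{\mu\nu}\,I_{(n+1)}^{\mu\nu\rho_1\sigma_1\cdots\rho_n\sigma_n} = \left( \eta_{\mu\nu}\,\eta^{\nu\rho_1}\,\eta^{\sigma_n\mu} \right)\eta^{\sigma_1\rho_2}\cdots\eta^{\sigma_{n-1}\rho_n}.
\end{align}
Using $\eta_{\mu\nu}\eta^{\nu\rho_1} = \delta_\mu^{\rho_1}$ and then $\delta_\mu^{\rho_1}\eta^{\sigma_n\mu} = \eta^{\sigma_n\rho_1}$, the bracketed expression collapses to the single factor $\eta^{\sigma_n\rho_1}$, yielding
\begin{align}
    \eta_{\mu\nu}\,I_{(n+1)}^{\mu\nu\rho_1\sigma_1\cdots\rho_n\sigma_n} = \eta^{\sigma_1\rho_2}\cdots\eta^{\sigma_{n-1}\rho_n}\,\eta^{\sigma_n\rho_1} = I_{(n)}^{\rho_1\sigma_1\cdots\rho_n\sigma_n},
\end{align}
which is exactly the claimed identity.

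The main obstacle here is not analytic but purely combinatorial bookkeeping: one must correctly track the cyclic wrap-around built into the definition and recognise that removing the first index pair via $\eta_{\mu\nu}$ fuses the opening metric with the cycle-closing metric into a single metric, thereby shortening the cyclic product from length $n+1$ to length $n$. Once the explicit product is written out with the two boundary factors identified, the remaining steps are immediate, which is why the statement is classified as ``easily proven''.
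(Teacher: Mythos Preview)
Your proof is correct and actually more direct than the paper's. The paper does not unfold the definition of $I_{(n+1)}$ explicitly; instead it starts from the exact relation $d=g^{\mu\nu}g_{\mu\nu}$, inserts the perturbative expansion $g^{\mu\nu}=\sum_n(-1)^n\kappa^n I_{(1+n)}^{\mu\nu\rho_1\sigma_1\cdots\rho_n\sigma_n}h_{\rho_1\sigma_1}\cdots h_{\rho_n\sigma_n}$ together with $g_{\mu\nu}=\eta_{\mu\nu}+\kappa h_{\mu\nu}$, and then reads off the identity from the requirement that all $\okappa{n}$ terms with $n\geq 1$ cancel. That generating-function argument is elegant because it ties the contraction feature to the algebraic relation $g_{\mu\sigma}g^{\sigma\nu}=\delta^\nu_\mu$ that underpins the whole expansion, and it generalises naturally to the other recursive identities the paper derives later. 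Your approach, by contrast, is a one-line combinatorial collapse of the cyclic metric chain; it buys you a fully explicit tensor identity (not merely one valid after contraction with symmetric $h$'s) and requires no auxiliary machinery, at the cost of not exhibiting the connection to the perturbative series that the paper is building toward.
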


\begin{proof}
    We shall start with the following general relation:
    \begin{align}
        d = g^{\mu\nu} \, g_{\mu\nu}.
    \end{align}
    This relation can be expanded:
    \begin{align}
        \begin{split}
            d =& \left( \eta_{\mu\nu} + \kappa \,h_{\mu\nu} \right) \sum\limits_{n=0}^\infty (-1)^n \kappa^n I_{(1+n)}^{\mu\nu\rho_1\sigma_1\cdots\rho_n\sigma_n} h_{\rho_1\sigma_1} \cdots h_{\rho_n\sigma_n} \\
            =& \eta_{\mu\nu} \eta^{\mu\nu} + \sum\limits_{n=1}^\infty (-1)^n \kappa^n \, \eta_{\mu\nu}\, I_{(1+n)}^{\mu\nu\rho_1\sigma_1\cdots\rho_n\sigma_n} h_{\rho_1\sigma_1} \cdots h_{\rho_n\sigma_n} \\
            & \hspace{30pt} + \sum\limits_{n=0}^\infty (-1)^n \kappa^{n+1} I_{(1+n)}^{\mu\nu\rho_1\sigma_1\cdots\rho_n\sigma_n} h_{\mu\nu} h_{\rho_1\sigma_1} \cdots h_{\rho_n\sigma_n}.
        \end{split}
    \end{align}
    The desired relation holds as each term in the infinite sums is cancelled.
\end{proof}

We uncover the perturbative structure of the volume factor in a similar way. Initially, we give a non-constructive definition of the plain $C$-tensor. Further, we demonstrate a specific recursive relation between $C$-tensors of different orders. This relation allows one to calculate a given order's plain $C$-tensor.

\begin{definition}
    {~}\\
    We define the plain $C$-tensor of $n$-th order in such a way to describe the perturbative structure of the volume factor:
    \begin{align}
        \sqrt{-g} \overset{\text{def}}{=} \sum\limits_{n=0}^\infty \, \kappa^n \,C^{\rho_1\sigma_1\cdots\rho_n\sigma_n}_{(n)}\,h_{\rho_1\sigma_1}\cdots h_{\rho_n\sigma_n} \,.
    \end{align}
\end{definition}
\noindent Similar to the previous case, the definition does not require specific symmetries among the indices. The next definition of the $\mathcal{C}$-tensor introduces a tensor with suitable symmetries.

\begin{definition}
    {~}\\
    We defined the $\mathcal{C}$-tensor of the $n$-th order as follows:
    \begin{align}
        \mathcal{C}_{(n)}^{\rho_1\sigma_1\cdots\rho_n\sigma_n} \overset{\text{def}}{=} \cfrac{1}{2^n} \, \cfrac{1}{n!} \, \left[ C_{(n)}^{\rho_1\sigma_1\cdots\rho_n\sigma_n} + \text{permutations} \right].
    \end{align}
    Permutations ensure $\mathcal{C}$-tensor is symmetric with respect to index pair and pair permutations.
\end{definition}

The definition is not constructive since it does not show a way to compute the tensor explicitly; instead, it only describes its function. However, obtaining a recursive relation that facilitates an uncomplicated method to construct $C$-tensors is possible.

\begin{theorem}\label{Theorem_1}
    {~}\\
    The following recursive relation holds.
    \begin{align}\label{the_C_recursion}
        C_{(n)}^{\rho_1\sigma_1\cdots\rho_n\sigma_n} = \cfrac{1}{2\,n} \sum\limits_{k=1}^n \,(-1)^{k-1} \, I_{(k)}^{\rho_1\sigma_1\cdots\rho_k\sigma_k} C_{(n-k)}^{\rho_{k+1}\sigma_{k+1}\cdots\rho_n\sigma_n}.
    \end{align}
\end{theorem}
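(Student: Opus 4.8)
The plan is to exploit the closed-form expression for the volume factor established in the proof of the earlier theorem, namely
\begin{align}
    \sqrt{-g} = \exp\left[ \frac12 \tr\left\{ \ln\left(\delta^\sigma_\nu + \kappa\, h^\sigma{}_\nu\right)\right\}\right],
\end{align}
and to turn the non-constructive definition of the $C$-tensor into a first-order differential relation in $\kappa$ whose coefficient-matching reproduces the recursion. The point is that exponentials satisfy linear ODEs, and a linear ODE is exactly what generates a convolution (Cauchy-product) recursion among power-series coefficients.

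First I would set $F(\kappa) \overset{\text{def}}{=} \sqrt{-g}$ and differentiate its logarithm with respect to $\kappa$. Using the Taylor series $\ln(1+x)=\sum_{k\ge1}(-1)^{k-1}x^k/k$ together with the cyclicity of the trace, the exponent differentiates termwise into
\begin{align}
    \frac{d}{d\kappa}\ln F = \frac12 \sum_{k=1}^\infty (-1)^{k-1}\,\kappa^{k-1}\,\tr(h^k),
\end{align}
which rearranges into the structural identity $F' = F\cdot\tfrac12\sum_{k\ge1}(-1)^{k-1}\kappa^{k-1}\tr(h^k)$. Thus the volume factor obeys a linear ODE whose inhomogeneity is built entirely from the traces $\tr(h^k)$, and it is this that will feed the $I$-tensor into the recursion.

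Next I would insert the power-series representations. Writing $c_n \overset{\text{def}}{=} C_{(n)}^{\rho_1\sigma_1\cdots\rho_n\sigma_n}h_{\rho_1\sigma_1}\cdots h_{\rho_n\sigma_n}$ for the fully contracted coefficients, the definition of the $C$-tensor reads $F=\sum_{n\ge0}\kappa^n c_n$, while the $I$-tensor theorem gives $\tr(h^k)=I_{(k)}^{\rho_1\sigma_1\cdots\rho_k\sigma_k}h_{\rho_1\sigma_1}\cdots h_{\rho_k\sigma_k}$. Substituting both series into $F'=FG$ and collecting the coefficient of $\kappa^{n-1}$ — the left side contributing $n\,c_n$ and the right side the convolution $\tfrac12\sum_{k=1}^n(-1)^{k-1}\tr(h^k)\,c_{n-k}$ — yields $n\,c_n=\tfrac12\sum_{k=1}^n(-1)^{k-1}\tr(h^k)\,c_{n-k}$. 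Re-expressing each factor through its tensor, and noting that $I_{(k)}$ carries the first $k$ index pairs and $C_{(n-k)}$ the remaining $n-k$, this is precisely the claimed recursion once the common product $h_{\rho_1\sigma_1}\cdots h_{\rho_n\sigma_n}$ is stripped.

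The main obstacle I anticipate is exactly this last step: passing from the contracted equality to the tensor equality \eqref{the_C_recursion}. An expression of the form $T^{\rho_1\sigma_1\cdots}h_{\rho_1\sigma_1}\cdots$ only constrains the part of $T$ that is totally symmetric under index-pair permutations, so the uncontracted relation holds only up to a symmetrisation ambiguity. Crucially, this is the \emph{same} ambiguity already present in the non-constructive definition of $C_{(n)}$, so it is harmless: any tensor obeying the recursion reproduces the correct $\sqrt{-g}$, and the recursion may therefore be adopted as the explicit representative of the plain $C$-tensor, with the subsequent symmetrisation into the $\mathcal{C}$-tensor absorbing the remaining freedom. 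A cleaner but more bookkeeping-heavy alternative would be to run the entire Cauchy-product argument at the level of the symmetrised $\mathcal{C}$-tensor from the start, tracking the $1/(2^n n!)$ normalisations through the convolution.
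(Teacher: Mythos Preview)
Your proof is correct and follows essentially the same route as the paper: both differentiate $\sqrt{-g}$ to obtain a relation of the form $F' = F\cdot(\text{trace series})$ and then match power-series coefficients to produce the convolution recursion. The paper packages the differentiation via an auxiliary parameter $z$ and Jacobi's formula $d\det X = \det X\,\tr(X^{-1}dX)$ rather than your direct $\kappa$-derivative of the $\exp[\tfrac12\tr\ln]$ form, but since $\tr[(I+\kappa h)^{-1}h]=g^{\mu\nu}h_{\mu\nu}$ these are the same computation, and the paper handles the contracted-to-uncontracted passage in exactly the way you anticipate.
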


\begin{proof}
    Firstly, we introduce an auxiliary metric:
    \begin{align}
        \mathbf{g}_{\mu\nu} = \eta_{\mu\nu} + z\, \kappa\,h_{\mu\nu}\,.
    \end{align}
    Here, $z$ is an arbitrary positive real number. This metric has no physical or mathematical meaning except to obtain the discussed relation. It matches the original metric when $z=1$.

    Secondly, we compute the following derivative:
    \begin{align}
        \left. \cfrac{d}{dz} \sqrt{-\mathbf{g}} ~ \right|_{z=1} = \cfrac{\kappa}{2} \, \sqrt{-g} \,g^{\mu\nu} \, h_{\mu\nu}.
    \end{align}
    We applied Jacobi's formula for a non-degenerate matrix $X$:
    \begin{align}
        d \,\det X = \det X \, \operatorname{tr}\left[ X^{-1} \, d X \right].
    \end{align}
    Let us calculate the same derivative using the perturbative expansion:
    \begin{align}
        \left. \cfrac{d}{dz} \sqrt{-\mathbf{g}} ~ \right|_{z=1} = \sum\limits_{n=1}^\infty \,n\,\kappa^n\,C^{\rho_1\sigma_1\cdots\rho_n\sigma_n}_{(n)}\,h_{\rho_1\sigma_1}\cdots h_{\rho_n\sigma_n}.
    \end{align}
    Comparing these expressions, we obtain a relationship that does involve $\mathbf{g}$:
    \begin{align}
        \cfrac{\kappa}{2} \, \sqrt{-g} \,g^{\mu\nu} \, h_{\mu\nu} = \sum\limits_{n=1}^\infty \,n\,\kappa^n\,C^{\rho_1\sigma_1\cdots\rho_n\sigma_n}_{(n)}\,h_{\rho_1\sigma_1}\cdots h_{\rho_n\sigma_n}.
    \end{align}

    The left-hand side of this expression can be calculated perturbatively:
    \begin{align}
        \begin{split}
            \cfrac{\kappa}{2} \, \sqrt{-g} \,g^{\mu\nu} \, h_{\mu\nu}  &=\!\cfrac{\kappa}{2}\!\sum\limits_{p_1=0}^\infty \kappa^{p_1} C^{\rho_1\sigma_1\cdots\rho_{p_1}\sigma_{p_1}}_{(p_1)}  h_{\rho_1\sigma_1}\cdots  h_{\rho_{p_1}\sigma_{p_1}} \!\! \sum\limits_{p_2=0}^\infty \! (-1)^{p_2} \kappa^{p_2} I^{\mu\nu\lambda_1\tau_1\cdots\lambda_{p_2}\tau_{p_2}}_{(p_2+1)} h_{\lambda_1\tau_1}\cdots h_{\lambda_{p_2}\tau_{p_2}}h_{\mu\nu}\\
            &=\sum\limits_{n=1}^\infty \sum\limits_{k=1}^n \, \cfrac{(-1)^{k-1}}{2} ~ \kappa^n\,I_{(k)}^{\rho_1\sigma_1\cdots\rho_k\sigma_k}\,C_{(n-k)}^{\rho_{k+1}\sigma_{k+1}\cdots\rho_n\sigma_n}\,h_{\rho_1\sigma_1}\cdots h_{\rho_n\sigma_n} \\
            &= \sum\limits_{n=1}^\infty \kappa^n \,\sum\limits_{k=1}^n \cfrac{(-1)^{k-1}}{2} ~ I_{(k)}^{\rho_1\sigma_1\cdots\rho_k\sigma_k}\,C_{(n-k)}^{\rho_{k+1}\sigma_{k+1}\cdots\rho_n\sigma_n}\,h_{\rho_1\sigma_1}\cdots h_{\rho_n\sigma_n} .
        \end{split}
    \end{align}
    After comparing these expressions, we can derive the required recursive relation.
    \begin{align}
        C_{(n)}^{\rho_1\sigma_1\cdots\rho_n\sigma_n} = \cfrac{1}{2\,n} \sum\limits_{k=1}^n \,(-1)^{k-1} \, I_{(k)}^{\rho_1\sigma_1\cdots\rho_k\sigma_k} C_{(n-k)}^{\rho_{k+1}\sigma_{k+1}\cdots\rho_n\sigma_n}.
    \end{align}
\end{proof}

The equation \eqref{the_C_recursion} defines a plain $C$-tensor of $n$-th order recursively using plain $C$-tensors of lower orders. One would desire to express $C_{(n)}$ using $C_{(n-1)}$ alone, but unfortunately, it appears to be impossible. The reason is that formula \eqref{the_C_recursion} involves all low orders that must contribute to the plain $C$-tensor. Since the volume factor $\sqrt{-g}$ is the simple object, we conjecture that no simpler recursive formula exists.

The same logic shall be used to construct more complicated structures generated by the volume factor and the inverse metric. Firstly, we give a non-constructive definition of a tensor. Secondly, we find a recursive relation that defines the required tensor iteratively.

\begin{definition}
    {~}\\
    We define the family fo plain $C_{(n)}$-tensors as follows:
    \begin{align}
        \begin{split}
            \sqrt{-g} \, g^{\mu\nu} \overset{\text{def}}{=}& \sum\limits_{n=0}^\infty \, \kappa^n \,C_{(1; n)}^{\mu\nu,\rho_1\sigma_1\cdots\rho_n\sigma_n} h_{\rho_1\sigma_1} \cdots h_{\rho_n\sigma_n} \, , \\
            \sqrt{-g} \, g^{\mu\nu}\,g^{\alpha\beta} \overset{\text{def}}{=}& \sum\limits_{n=0}^\infty \, \kappa^n \,C_{(2; n)}^{\mu\nu\alpha\beta,\rho_1\sigma_1\cdots\rho_n\sigma_n} h_{\rho_1\sigma_1} \cdots h_{\rho_n\sigma_n} \, , \\
            \sqrt{-g} \, g^{\mu\nu}\,g^{\alpha\beta}\,g^{\rho\sigma} \overset{\text{def}}{=}& \sum\limits_{n=0}^\infty \, \kappa^n \,C_{(3; n)}^{\mu\nu\alpha\beta\rho\sigma,\rho_1\sigma_1\cdots\rho_n\sigma_n} h_{\rho_1\sigma_1} \cdots h_{\rho_n\sigma_n} \, , \\
            \sqrt{-g} \, g^{\mu_1\nu_1}\cdots g^{\mu_l\nu_l} \overset{\text{def}}{=}& \sum\limits_{n=0}^\infty \, \kappa^n \,C_{ (l; n) }^{\mu_1\nu_1\cdots\mu_l\nu_l,\rho_1\sigma_1\cdots\rho_n\sigma_n} h_{\rho_1\sigma_1} \cdots h_{\rho_n\sigma_n} \, .
        \end{split}
    \end{align}
\end{definition}
\noindent These tensors do not have any additional symmetry. The corresponding symmetric tensors are defined as follows.

\begin{definition}
    \begin{align}
        \mathcal{C}_{(l; n)}^{\mu_1\nu_1\cdots\mu_l\nu_l,\rho_1\sigma_1\cdots\rho_n\sigma_n} \overset{\text{def}}{=} \cfrac{1}{2^n} \, \cfrac{1}{n!} \, \left[ C_{(l; n)}^{\mu_1\nu_1\cdots\mu_l\nu_l,\rho_1\sigma_1\cdots\rho_n\sigma_n} + \text{permutations} \right] .
    \end{align}
    Permutations ensure symmetric $\mathcal{C}$-tensor with respect to index pair and pair permutation. It is important to note that symmetrisation accounts only for indices contracted with the perturbation indices.
\end{definition}

\begin{theorem}\label{Theorem_2}
    {~}\\
    The following recursive relation holds for plain $C_{(l)}$-tensors.
    \begin{align}
        C_{ (l; n) }^{\mu_1\nu_1\mu_2\nu_2 \cdots \mu_l\nu_l, \rho_1\sigma_1\cdots\rho_n\sigma_n} = \sum\limits_{p=0}^n (-1)^p \,I_{(1+p)}^{\mu_1\nu_1\rho_1\sigma_1\cdots\rho_p\sigma_p} C_{(l-1; n-p) }^{\mu_2\nu_2 \cdots \mu_l\nu_l, \rho_{p+1}\sigma_{p+1}\cdots\rho_n\sigma_n} .
    \end{align}
\end{theorem}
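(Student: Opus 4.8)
The plan is to exploit the trivial factorisation $\sqrt{-g}\,g^{\mu_1\nu_1}\cdots g^{\mu_l\nu_l} = g^{\mu_1\nu_1}\,\big(\sqrt{-g}\,g^{\mu_2\nu_2}\cdots g^{\mu_l\nu_l}\big)$, which peels off a single inverse-metric factor and exposes the $(l-1)$-fold object on the right. Both factors already possess known perturbative expansions: the inverse metric is given by the earlier theorem as $g^{\mu_1\nu_1} = \sum_{p=0}^\infty (-1)^p\kappa^p\,I_{(1+p)}^{\mu_1\nu_1\rho_1\sigma_1\cdots\rho_p\sigma_p}\,h_{\rho_1\sigma_1}\cdots h_{\rho_p\sigma_p}$, while the remaining product is, by the defining relation of the family of plain $C$-tensors, $\sqrt{-g}\,g^{\mu_2\nu_2}\cdots g^{\mu_l\nu_l} = \sum_{m=0}^\infty \kappa^m\,C_{(l-1;m)}^{\mu_2\nu_2\cdots\mu_l\nu_l,\lambda_1\tau_1\cdots\lambda_m\tau_m}\,h_{\lambda_1\tau_1}\cdots h_{\lambda_m\tau_m}$.

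First I would form the Cauchy product of these two formal power series in $\kappa$. Multiplying term by term produces a double sum over $p$ and $m$ with total weight $\kappa^{p+m}$, in which the two index blocks on the perturbations simply concatenate into one long string of $h$'s. Collecting all contributions of fixed total order $n=p+m$ and relabelling the concatenated perturbation indices as $\rho_1\sigma_1,\ldots,\rho_n\sigma_n$ --- the first $p$ pairs feeding the $I$-tensor and the remaining $n-p$ pairs feeding the $C_{(l-1)}$-tensor --- reproduces precisely the monomial structure $h_{\rho_1\sigma_1}\cdots h_{\rho_n\sigma_n}$ appearing in the definition of $C_{(l;n)}$. Matching the coefficient of $\kappa^n$ against that definition then yields the claimed recursion directly, with the range $0\le p\le n$ forced by $m=n-p\ge 0$ and the sign $(-1)^p$ inherited from the inverse-metric expansion.

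The step requiring the most care is the index bookkeeping in the Cauchy product: because the plain tensors carry no imposed symmetry, the identification of coefficients is only legitimate once the $h$-monomials on both sides are written in the same ordering. I would therefore fix the convention that the $p$ indices inherited from $g^{\mu_1\nu_1}$ are listed first and the $n-p$ indices inherited from the $(l-1)$-fold factor afterwards, which is exactly the split $\rho_1\sigma_1\cdots\rho_p\sigma_p$ versus $\rho_{p+1}\sigma_{p+1}\cdots\rho_n\sigma_n$ in the statement. I expect no genuine obstacle beyond this relabelling; the argument is the same generating-function manipulation used for Theorem~\ref{Theorem_1}, specialised to peeling off one metric factor rather than differentiating the volume factor, and one could equivalently cast it as an induction on $l$ with the pure volume factor $C_{(0;n)}=C_{(n)}$ as the base case.
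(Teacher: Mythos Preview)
Your proposal is correct and follows essentially the same route as the paper: both peel off a single inverse metric via $\sqrt{-g}\,g^{\mu_1\nu_1}\cdots g^{\mu_l\nu_l}=g^{\mu_1\nu_1}\bigl(\sqrt{-g}\,g^{\mu_2\nu_2}\cdots g^{\mu_l\nu_l}\bigr)$, expand each factor with its known series, take the Cauchy product, and identify the $\kappa^n$ coefficient. Your remark about fixing the ordering of the $h$-indices before comparing coefficients of the plain (unsymmetrised) tensors is exactly the bookkeeping the paper performs implicitly when it relabels the concatenated indices.
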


\begin{proof}
    The proof relies on the possibility of factorising a single metric from the perturbative expression. We can use this fact to define the $C_{(l)}$ tensors.  
    \begin{align}
        \sqrt{-g} \, g^{\mu_1\nu_1}\cdots g^{\mu_l\nu_l} \overset{\text{def}}{=}& \sum\limits_{n=0}^\infty \, \kappa^n \,C_{ (l; n) }^{\mu_1\nu_1\cdots\mu_l\nu_l,\rho_1\sigma_1\cdots\rho_n\sigma_n} h_{\rho_1\sigma_1} \cdots h_{\rho_n\sigma_n} \, .
    \end{align}
    At the same time, we can perform the perturbative expansion of $g^{\mu_1\nu_1}$ separately.
    \begin{align}
        \begin{split}
            & \sqrt{-g} \, g^{\mu_1\nu_1}\, g^{\mu_2\nu_2}\cdots g^{\mu_l\nu_l} = g^{\mu_1\nu_1} \sqrt{-g} \, g^{\mu_2\nu_2}\cdots g^{\mu_l\nu_l}\\
            &= \sum\limits_{n_1=0}^\infty (-1)^{n_1} \kappa^{n_1} I_{(n_1+1)}^{\mu_1\nu_1\rho_1\sigma_1\cdots\rho_{n_1}\sigma_{n_1}} h_{\rho_1\sigma_1}\cdots h_{\rho_{n_1}\sigma_{n_1}}  \sum\limits_{n_2=0}^\infty \kappa^{n_2} C_{ (l-1; n_2)}^{\mu_2\nu_2\cdots\mu_l\nu_l,\lambda_1\tau_1\cdots\lambda_{n_2}\tau_{n_2} } h_{\lambda_1\tau_1} \cdots h_{\lambda_{n_2}\tau_{n_2}}\\
            &=\sum\limits_{n_1=0}^\infty \sum\limits_{n_2=0}^\infty (-1)^{n_1} \kappa^{n_1 + n_2} I_{(n_1+1)}^{\mu_1\nu_1\rho_1\sigma_1\cdots\rho_{n_1}\sigma_{n_1}} C_{ (l-1; n_2)}^{\mu_2\nu_2\cdots\mu_l\nu_l,\rho_{n_1+1}\sigma_{n_1+1}\cdots\rho_{n_1+n_2}\sigma_{n_1+n_2}} h_{\rho_1\sigma_1}\cdots h_{\rho_{n_1+n_2}\sigma_{n_1+n_2}}.
        \end{split}
    \end{align}
    One can decouple the summation indices by redefining them, resulting in the following formula:
    \begin{align}
        \sqrt{-g} \, g^{\mu_1\nu_1}\, g^{\mu_2\nu_2}\cdots g^{\mu_l\nu_l} =\sum\limits_{n=0}^\infty \sum\limits_{p=0}^n \kappa^n (-1)^p  I_{(1 + p)}^{\mu_1\nu_1\rho_1\sigma_1\cdots\rho_p \sigma_p } C_{ (l-1; n_2)}^{\mu_2\nu_2\cdots\mu_l\nu_l,\rho_{p+1}\sigma_{p+1}\cdots\rho_n \sigma_n } h_{\rho_1\sigma_1}\cdots h_{\rho_n \sigma_n }.
    \end{align}
  The proof is concluded by making a direct comparison with the definition of the original tensor $C_{(l)}$.
\end{proof}

Finally, we treat vierbein similarly. We define the plain $E$-tensor and plain $C_E$-tensor to encapsulate the structure of the following factors.
\begin{theorem}
    \begin{align}
        \begin{split}
            \mathfrak{e}_\mu {}^\nu \overset{\text{def}}{=}& \sum\limits_{n=0}^\infty \kappa^n E_{\mu~(n)}^{~~\nu\rho_1\sigma_1\cdots\rho_n\sigma_n} h_{\rho_1\sigma_1}\cdots h_{\rho_n\sigma_n} \\
            =& \sum\limits_{n=0}^\infty \, \kappa^n \binom{-\frac12 }{~n} I_{\mu~(n)}^{~~\nu\rho_1\sigma_1\cdots\rho_n\sigma_n} h_{\rho_1\sigma_1} \cdots h_{\rho_n\sigma_n} \, ,
        \end{split} \\
        \begin{split}
            \sqrt{-g} \, \mathfrak{e}_\mu {}^\nu \overset{\text{def}}{=}& \sum\limits_{n=0}^\infty \kappa^n {C_E}_{\mu ~(n)}^{~\nu,\rho_1\sigma_1\cdots\rho_n\sigma_n} h_{\rho_1\sigma_1}\cdots h_{\rho_n\sigma_n} \\
            =&\sum\limits_{n=0}^\infty \, \kappa^n ~\sum\limits_{p=0}^n  \binom{-\frac12}{~n} C_{(p)}^{\rho_1\sigma_1\cdots\rho_p\sigma_p} I_{\mu~(1+n-p)}^{~~\nu\rho_{p+1}\sigma_{p+1}\cdots\rho_n\sigma_n}  h_{\rho_1\sigma_1} \cdots h_{\rho_n\sigma_n} \, .
        \end{split}
    \end{align}
\end{theorem}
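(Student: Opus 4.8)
The plan is to handle the two displayed identities in turn, since the vierbein expansion is an immediate corollary of results already established, whereas the product $\sqrt{-g}\,\mathfrak{e}_\mu{}^\nu$ reduces to one Cauchy product of precisely the kind carried out in the proof of Theorem~\ref{Theorem_2}.

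First I would dispatch the vierbein identity. The earlier vierbein theorem gives $\mathfrak{e}_\mu{}^\nu = \sum_{n=0}^\infty \kappa^n \binom{-\frac12}{n}\,(h^n)_\mu{}^\nu$, and the plain $I$-tensor theorem rewrites each power as $(h^n)_\mu{}^\nu = I_{\mu~(1+n)}^{~\nu\rho_1\sigma_1\cdots\rho_n\sigma_n}\,h_{\rho_1\sigma_1}\cdots h_{\rho_n\sigma_n}$. Substituting the second relation into the first and reading off the coefficient of $\kappa^n$ identifies the plain $E$-tensor term by term as $E_{\mu~(n)}^{~\nu\cdots} = \binom{-\frac12}{n}\,I_{\mu~(1+n)}^{~\nu\cdots}$. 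No reordering of the series is involved, so this step is purely definitional; the only care needed is to keep the free pair $\mu\nu$ in the leading slot of the order-$(1+n)$ $I$-tensor.

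For $\sqrt{-g}\,\mathfrak{e}_\mu{}^\nu$ I would multiply two known series, the plain $C$-tensor definition $\sqrt{-g} = \sum_{p=0}^\infty \kappa^p\,C_{(p)}^{\rho_1\sigma_1\cdots\rho_p\sigma_p}\,h_{\rho_1\sigma_1}\cdots h_{\rho_p\sigma_p}$ and the vierbein expansion just proved. Collecting the total order $n=p+q$ and relabelling the dummy perturbation pairs exactly as in Theorem~\ref{Theorem_2}, the coefficient of $\kappa^n$ becomes $\sum_{p=0}^n \binom{-\frac12}{n-p}\,C_{(p)}^{\rho_1\sigma_1\cdots\rho_p\sigma_p}\,I_{\mu~(1+n-p)}^{~\nu\rho_{p+1}\sigma_{p+1}\cdots\rho_n\sigma_n}$, which is exactly the definition of ${C_E}_{\mu~(n)}^{~\nu,\cdots}$. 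I would note in passing that for internal consistency the binomial argument should be $n-p$, the order of the vierbein factor, rather than the $n$ shown in the displayed formula, which appears to be a typographical slip.

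The only genuine work, and hence the main obstacle, is the index bookkeeping inside this Cauchy product: one must verify that the $C$-tensor absorbs the first $p$ perturbation pairs $\rho_1\sigma_1\cdots\rho_p\sigma_p$ while the order-$(1+n-p)$ $I$-tensor carries the free pair $\mu\nu$ together with the remaining $n-p$ pairs, and that decoupling the two summation indices reproduces exactly the single sum $\sum_{p=0}^n$. This is the same manipulation that decouples $n_1$ and $n_2$ in Theorem~\ref{Theorem_2}, so no new idea is required. The symmetric tensors $\mathcal{C}_E$ then follow by applying the stated symmetrisation over the $n$ contracted pairs, in parallel with the earlier $\mathcal{C}$-tensor definitions.
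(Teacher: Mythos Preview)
Your proposal is correct and matches the paper's approach exactly: the paper presents this theorem without an explicit proof, but the surrounding text makes clear that the vierbein expansion is read off directly from the earlier theorem on $\mathfrak{e}_m{}^\mu$ together with the $I$-tensor representation of $(h^n)_\mu{}^\nu$, while the $C_E$-tensor is obtained by the same Cauchy-product manipulation used in Theorem~\ref{Theorem_2} (the paper states as much when it says the proof ``is constructed the same way as the proof for $C_{(l)}$-tensor and is omitted''). Your observation that the binomial argument should be $n-p$ rather than $n$ is also correct; this is a typographical slip in the displayed formula, as is confirmed by comparing with the recursive relation in Theorem~\ref{Theorem_3}.
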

\noindent These tensors do not possess any additional symmetry. In practice, we never use the plain $E$-tensor, so we do not discuss it further. Moreover, the plain $E$-tensor match the plain $I$-tensor up to a constant. On the contrary, the plain $C_E$-tensor is relevant for Dirac fermions, so we define its symmetric generalisation.
\begin{definition}
    \begin{align}
        { \mathcal{C}_E }_{\mu~(n)}^{~~\nu,\rho_1\sigma_1\cdots\rho_n\sigma_n} \overset{\text{def}}{=} \cfrac{1}{2^n} \, \cfrac{1}{n!} \, \left[ {C_E}_{\mu~(n)}^{~\nu,\rho_1\sigma_1\cdots\rho_n\sigma_n} + \text{permutations} \right].
    \end{align}
    Permutations account for all terms that make the $\mathcal{C}_E$-tensor symmetric with respect to permutations of indices within each index pair $\mu_i\leftrightarrow\nu_i$, and with respect to permutations of any two index pairs $\{\mu_i,\nu_i\} \leftrightarrow \{\mu_j,\nu_j\}$.
\end{definition}

The recursive relation for the $C_E$ tensor holds similarly to the $C_{(l)}$ tensor. The proof for $C_E$-tensor is constructed the same way as the proof for $C_{(l)}$-tensor and is omitted.
\begin{theorem}\label{Theorem_3}
    \begin{align}
        {C_E}_{\mu~(n)}^{~\nu, \rho_1\sigma_1\cdots\rho_n\sigma_n} = \sum\limits_{p=0}^n E_{\mu~(1+p)}^{~~\nu\rho_1\sigma_1\cdots\rho_p\sigma_p} C_{(n-p)}^{\rho_{p+1}\sigma_{p+1}\cdots\rho_n\sigma_n} .
    \end{align}
\end{theorem}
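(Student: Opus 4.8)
The plan is to reproduce, almost verbatim, the argument used for Theorem~\ref{Theorem_2}, since the only structural change is that one of the two factors is now the vierbein rather than an inverse metric. I would begin from the non-constructive definition of the plain $C_E$-tensor,
\begin{align}
    \sqrt{-g}\,\mathfrak{e}_\mu{}^\nu = \sum_{n=0}^\infty \kappa^n\, {C_E}_{\mu~(n)}^{~\nu,\rho_1\sigma_1\cdots\rho_n\sigma_n}\, h_{\rho_1\sigma_1}\cdots h_{\rho_n\sigma_n},
\end{align}
and then expand the two factors $\mathfrak{e}_\mu{}^\nu$ and $\sqrt{-g}$ \emph{independently} as the power series that define the plain $E$- and $C$-tensors,
\begin{align}
    \mathfrak{e}_\mu{}^\nu &= \sum_{p=0}^\infty \kappa^p\, E_{\mu~(1+p)}^{~~\nu\rho_1\sigma_1\cdots\rho_p\sigma_p}\, h_{\rho_1\sigma_1}\cdots h_{\rho_p\sigma_p}, \\
    \sqrt{-g} &= \sum_{q=0}^\infty \kappa^q\, C_{(q)}^{\lambda_1\tau_1\cdots\lambda_q\tau_q}\, h_{\lambda_1\tau_1}\cdots h_{\lambda_q\tau_q}.
\end{align}

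Next I would multiply the two series, so that the coefficient of a fixed power $\kappa^{p+q}$ is the product $E_{\mu~(1+p)}^{~~\nu\ldots}\, C_{(q)}^{\ldots}$ contracted with $p+q$ perturbations. The essential manipulation is the Cauchy-product reindexing: setting $n=p+q$ collapses the double sum over $(p,q)$ into a single sum over the total order $n$ with an inner sum over the split $p=0,\dots,n$ and $q=n-p$. Because every perturbation index is a contracted dummy, I may freely relabel the $p+q=n$ pairs as $\rho_1\sigma_1,\dots,\rho_n\sigma_n$, assigning the first $p$ pairs to the $E$-tensor and the remaining $n-p$ to the $C$-tensor, which yields
\begin{align}
    \sqrt{-g}\,\mathfrak{e}_\mu{}^\nu = \sum_{n=0}^\infty \kappa^n \sum_{p=0}^n E_{\mu~(1+p)}^{~~\nu\rho_1\sigma_1\cdots\rho_p\sigma_p}\, C_{(n-p)}^{\rho_{p+1}\sigma_{p+1}\cdots\rho_n\sigma_n}\, h_{\rho_1\sigma_1}\cdots h_{\rho_n\sigma_n}.
\end{align}

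Finally I would match this expansion order by order in $\kappa$ against the definition, reading off the coefficient of $\kappa^n$ to obtain the claimed recursion. The step that requires the most care — and the one I would flag as the real content rather than bookkeeping — is this coefficient comparison, because the plain tensors carry no imposed symmetry and are fixed only through their contraction against the symmetric perturbations $h_{\rho_i\sigma_i}$. Two plain tensors differing by a piece antisymmetric within a dummy pair give the same contraction, so the equality is properly understood as an identity between admissible representatives; the clean way to make it unambiguous is to symmetrise both sides over the contracted pairs, whereupon the left side becomes ${\mathcal{C}_E}_{\mu~(n)}^{~\nu,\rho_1\sigma_1\cdots\rho_n\sigma_n}$ and the right side the correspondingly symmetrised combination, with the stated plain relation being the natural non-symmetrised choice. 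Everything else — the two series expansions and the reindexing — is routine and identical in spirit to the proof of Theorem~\ref{Theorem_2}.
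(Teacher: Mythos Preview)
Your proposal is correct and follows exactly the approach the paper indicates: the paper explicitly states that the proof of Theorem~\ref{Theorem_3} ``is constructed the same way as the proof for $C_{(l)}$-tensor and is omitted,'' i.e.\ by factorising the product $\sqrt{-g}\,\mathfrak{e}_\mu{}^\nu$, expanding each factor in its own perturbative series, performing the Cauchy-product reindexing, and comparing with the defining expansion. Your additional remark about the symmetrisation ambiguity of plain tensors is a valid caveat that the paper does not bother to spell out.
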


The main result of this section are theorems \ref{Theorem_1}, \ref{Theorem_2}, and \ref{Theorem_3}. They are essential from the computational point of view since their implementation significantly improves the code performance. We believe no simpler formula can be found based on the $\sqrt{-g}$ perturbative expansion structure. All the factors discussed above contain $\sqrt{-g}$, which makes it impossible to construct recursive relations for them without a recursive relation for $\sqrt{-g}$. Theorem \ref{Theorem_1} shows that the $n$-th perturbative order incorporates all lower orders, making it impossible to establish a relation between the $n$-th and $(n-1)$-th orders alone. Therefore, we believe it is impossible to construct truly recursive relations for all other factors.

\section{Feynman rules}\label{Feynman_Rules}

The computational tools discussed in the previous section provide a comprehensive framework for deriving Feynman's rules for perturbative quantum gravity. This section covers the derivation of interaction rules for simple scalars, Horndeski gravity, Dirac fermions, massive and massless vector fields, $SU(N)$ Yang-Mills model, and general relativity. This collection of models is sufficient to account for quantum effects within the standard model.

We handle all interaction rules as follows. We use the factorisation theorem to separate terms that are infinite series from terms that are finite but contain derivatives. The finite part is calculated explicitly, while the infinite part is expanded in the infinite perturbative series. The discussed recursive relations provide a way to evaluate each term in such a perturbative expansion. Therefore, pointing to the structure contributing to a given perturbative order is sufficient.

We will not discuss the derivation of the Feynman rules from the action via the path integral formalism. The procedure is discussed in multiple sources, including classical textbooks \cite{Peskin:1995ev,Weinberg:1995mt,Weinberg:1996kr,Weinberg:2000cr}. The procedure also uses Fourier transformation to generate rules in the momentum representation. We also do not discuss the Fourier transformation for the same reason.

Lastly, we shall introduce new notations for the practical sake. In perturbative expansions considered below, using $\mathcal{I}$, $\mathcal{C}$ and other previously defined tensors is not optimal. These tensors play a crucial role in the technical side of the calculation, but they do not provide the most practical description.

If the quantity $X$ admits a perturbative expansion in small metric perturbations, then we shall use the following notations:
\begin{align}
    X \overset{\text{note}}{=} \sum\limits_{n=0}^\infty \kappa^n (X)^{\rho_1\sigma_1\cdots\rho_n\sigma_n} \, h_{\rho_1\sigma_1} \cdots h_{\rho_n\sigma_n}.
\end{align}
Here $(X)^{\rho_1\sigma_1\cdots\rho_n\sigma_n}$ is the contribution to the $n$-th order of the expansion. One can define via derivatives:
\begin{align}
    (X)^{\rho_1\sigma_1\cdots\rho_n\sigma_n} = \cfrac{\delta}{\delta h_{\rho_1\sigma_1}} \cdots \cfrac{\delta}{\delta h_{\rho_n\sigma_n}} \, X \Bigg|_{h=0}.
\end{align}

Let us present the following expressions to illustrate these notations. The components of the inverse metric are noted as follows:
\begin{align}
    g^{\mu\nu} = \sum\limits_{n=0}^\infty \kappa^n \left(g^{\mu\nu} \right)^{\rho_1\sigma_1\cdots\rho_n\sigma_n} h_{\rho_1\sigma_1} \cdots h_{\rho_n\sigma_n}.
\end{align}
The explicit values of the first few of these components are:
\begin{align}
    (g^{\mu\nu} ) &= \eta^{\mu\nu} = \mathcal{I}^{\mu\nu} \,, & (g^{\mu\nu})^{\alpha\beta} &= - \cfrac12 \left( \eta^{\mu\alpha} \eta^{\nu\beta} + \eta^{\mu\beta} \eta^{\nu\alpha} \right)  = - \mathcal{I}^{\mu\nu\alpha\beta} \,.
\end{align}
Implementation of these notations for the volume factor produces the following:
\begin{align}
    \sqrt{-g} = \sum\limits_{n=0}^\infty \kappa^n \left(\sqrt{-g}\right)^{\rho_1\sigma_1\cdots \rho_n\sigma_n} h_{\rho_1\sigma_1}\cdots h_{\rho_n\sigma_n}\,,
\end{align}
\begin{align}
    \left(\sqrt{-g}\right) &= 1\,, & \left(\sqrt{-g}\right)^{\mu\nu} &= \cfrac12\,\eta^{\mu\nu} = \mathcal{C}^{\mu\nu} \,, & \left(\sqrt{-g}\right)^{\mu\nu\alpha\beta} &= \cfrac18\,\left( -\eta^{\mu\alpha} \eta^{\nu\beta} - \eta^{\mu\beta} \eta^{\nu\alpha} + \eta^{\mu\nu} \eta^{\alpha\beta} \right) = \mathcal{C}^{\mu\nu\alpha\beta} \,.
\end{align}

\subsection{Scalar field with a potential}

The action for a single free scalar field with mass $m$ with the minimal coupling to gravity reads:
\begin{align}
    \mathcal{A}_{s=0} = \int d^4 x \sqrt{-g} \left[ \cfrac12\left(\nabla\phi\right)^2 - \cfrac{m^2}{2} \, \phi^2 \right] .
\end{align}
Following the factorisation theorem, the action takes the following form:
\begin{align}
    \mathcal{A}_{s=0} = \int d^4 x \left[ \cfrac12\, \sqrt{-g}\, g^{\mu\nu} \, \pd_\mu\phi \, \pd_\nu\phi - \cfrac{m^2}{2}\,\sqrt{-g} \, \phi^2 \right].
\end{align}
The following theorem describes its structure in the Fourier representation.

\begin{theorem}
    \begin{align}
        \begin{split}
            \mathcal{A}_{s=0} =& \sum\limits_{n=0}^\infty \int \cfrac{d^4 p_1}{(2\pi)^4} \cfrac{d^4 p_2}{(2\pi)^4} \prod\limits_{i=1}^n \cfrac{d^4 k_i}{(2\pi)^4}\,(2\pi)^4 \delta\left(p_1+p_2+\sum k_i\right) h_{\rho_1\sigma_1}(k_1) \cdots h_{\rho_n\sigma_n}(k_n)\\
            &\times \kappa^n \left[ -\cfrac12\, \left(\sqrt{-g}\, g^{\mu\nu}\right)^{\rho_1\sigma_1\cdots\rho_n\sigma_n}  \mathcal{I}_{\mu\nu\alpha\beta} (p_1)^\alpha (p_2)^\beta - \cfrac{m^2}{2} \left(\sqrt{-g}\right)^{\rho_1\sigma_1\cdots\rho_n\sigma_n} \right]\,\phi(p_1) \phi(p_2)\,.
        \end{split}
    \end{align}
    Here, $k_i$ are momenta of gravitons, $p_1$ and $p_2$ are momenta of scalars, and the same definition of $\mathcal{I}$ tensor is used:
    \begin{align}
    \mathcal{I}^{\mu\nu\alpha\beta} = \cfrac12\, \left( \eta^{\mu\alpha}\eta^{\nu\beta} + \eta^{\mu\beta}\eta^{\nu\alpha}\right) .
    \end{align}
\end{theorem}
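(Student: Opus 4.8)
The plan is to obtain the claimed momentum-space expression by substituting the perturbative expansions of the two factorised structures directly into the factorised position-space action for $\mathcal{A}_{s=0}$ given above and then Fourier-transforming every field. First I would replace $\sqrt{-g}\,g^{\mu\nu}$ and $\sqrt{-g}$ by their expansions in $h_{\mu\nu}$ using the notation introduced above, so that the kinetic term acquires the coefficient $\left(\sqrt{-g}\,g^{\mu\nu}\right)^{\rho_1\sigma_1\cdots\rho_n\sigma_n}$ and the mass term the coefficient $\left(\sqrt{-g}\right)^{\rho_1\sigma_1\cdots\rho_n\sigma_n}$, each multiplied by $\kappa^n h_{\rho_1\sigma_1}\cdots h_{\rho_n\sigma_n}$ and summed over $n$. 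At this stage the whole action is a double sum of products of one power of the expansion coefficient, $n$ graviton fields, and the two scalar factors.

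The second step is the Fourier transform. Writing each field as $\phi(x)=\int \tfrac{d^4 p}{(2\pi)^4} e^{-ipx}\phi(p)$ and $h_{\rho_i\sigma_i}(x)=\int \tfrac{d^4 k_i}{(2\pi)^4} e^{-ik_i x} h_{\rho_i\sigma_i}(k_i)$, each derivative $\pd_\mu$ acting on a scalar pulls down a factor $-i p_\mu$, so the two derivatives in the kinetic term produce $(-i)(p_1)_\mu(-i)(p_2)_\nu = -(p_1)_\mu (p_2)_\nu$. Because the expansion coefficient of $\sqrt{-g}\,g^{\mu\nu}$ is symmetric in $\mu\nu$, I may replace the bare product $(p_1)_\mu (p_2)_\nu$ by its $\mu\leftrightarrow\nu$ symmetrisation, which is precisely $\mathcal{I}_{\mu\nu\alpha\beta}(p_1)^\alpha(p_2)^\beta$; this is where the $\mathcal{I}$ tensor enters. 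The remaining $\int d^4 x$ over the product of all exponentials then yields $(2\pi)^4\delta\!\left(p_1+p_2+\sum k_i\right)$, the overall momentum conservation, after which the $x$-independent coefficients and momentum factors may be pulled outside the integral.

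Collecting these pieces reproduces the stated formula, the kinetic term contributing $-\tfrac12\left(\sqrt{-g}\,g^{\mu\nu}\right)^{\rho_1\sigma_1\cdots\rho_n\sigma_n}\mathcal{I}_{\mu\nu\alpha\beta}(p_1)^\alpha(p_2)^\beta$ and the mass term contributing $-\tfrac{m^2}{2}\left(\sqrt{-g}\right)^{\rho_1\sigma_1\cdots\rho_n\sigma_n}$, both at order $\kappa^n$ and dressed by $n$ graviton fields. The recursive relations of Theorems \ref{Theorem_1} and \ref{Theorem_2} then supply the explicit values of the coefficients, with $\left(\sqrt{-g}\,g^{\mu\nu}\right)^{\rho_1\sigma_1\cdots\rho_n\sigma_n}=C_{(1;n)}^{\mu\nu,\rho_1\sigma_1\cdots\rho_n\sigma_n}$ and $\left(\sqrt{-g}\right)^{\rho_1\sigma_1\cdots\rho_n\sigma_n}=C_{(n)}^{\rho_1\sigma_1\cdots\rho_n\sigma_n}$ at every order.

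I expect the only genuinely delicate point to be the careful bookkeeping of the factors of $i$ and of the sign generated by the two derivatives, together with the fixing of an all-incoming momentum convention, since an error there would flip the sign of the kinetic contribution. The appearance of $\mathcal{I}$ must likewise be justified by the $\mu\leftrightarrow\nu$ symmetry of the coefficient rather than simply assumed. Everything else is routine relabelling of dummy indices and integration variables, so I would keep the emphasis on these two verifications.
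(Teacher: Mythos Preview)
Your proposal is correct and follows precisely the approach the paper adopts: factorise the action, insert the perturbative expansions of $\sqrt{-g}\,g^{\mu\nu}$ and $\sqrt{-g}$ using the $(X)^{\rho_1\sigma_1\cdots\rho_n\sigma_n}$ notation, and then Fourier transform, with the paper explicitly declining to spell out these standard steps. Your attention to the sign from the two derivatives and to the justification of the $\mathcal{I}$ tensor via the $\mu\leftrightarrow\nu$ symmetry of the coefficient is exactly the right place to be careful, and nothing further is needed.
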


The contribution of order $\okappa{0}$ descirbes the standard scalar propagator:
\begin{align}\label{scalar_propagator}
  \begin{gathered}
    \begin{fmffile}{Diag01}
      \begin{fmfgraph}(30,30)
        \fmfleft{L}
        \fmfright{R}
        \fmf{dashes}{L,R}
      \end{fmfgraph}
    \end{fmffile}
  \end{gathered}
  = i \, \cfrac{1}{p^2 - m^2} ~.
\end{align}
The other terms define the coupling of scalar field kinetic energy to gravity:
\begin{align}
   \nonumber \\
  \begin{gathered}
    \begin{fmffile}{FR_S_1}
      \begin{fmfgraph*}(30,30)
        \fmfleft{L1,L2}
        \fmfright{R1,R2}
        \fmf{dbl_wiggly}{L1,V}
        \fmf{dbl_wiggly}{L2,V}
        \fmfdot{V}
        \fmf{dashes}{V,R1}
        \fmf{dashes}{V,R2}
        \fmffreeze
        \fmf{dots}{L1,L2}
        \fmflabel{$p_1$}{R1}
        \fmflabel{$p_2$}{R2}
        \fmflabel{$\rho_1\sigma_1$}{L1}
        \fmflabel{$\rho_n\sigma_n$}{L2}
      \end{fmfgraph*}
    \end{fmffile}
  \end{gathered}
  \hspace{10pt}= -i\, \kappa^n \,\left[ \left(\sqrt{-g}\, g^{\mu\nu} \right)^{\rho_1\sigma_1\cdots\rho_n\sigma_n} \, \mathcal{I}_{\mu\nu\alpha\beta} (p_1)^\alpha (p_2)^\beta + m^2\, \left( \sqrt{-g}\right)^{\rho_1\sigma_1\cdots\rho_n\sigma_n} \right]. \\ \nonumber
\end{align}
The dotted line on the left part of the diagram notes the presence of $n\geq 1$ graviton lines.

We can derive the equation for the scalar field potential similarly. The potential $V(\phi)$ is a Taylor series in the scalar field $\phi$. Each term in this expansion represents a scalar field self-interaction coupled with gravity. Thus, deriving the interaction rule for a single power-law potential is sufficient. 

We consider a power-law potential with $p \geq 3$ being a whole number and $\lambda_p$ being a coupling with the mass dimension $4-p$:
\begin{align}
  \mathcal{A} = \int d^4 x \sqrt{-g} \left[ \cfrac{\lambda_p}{p!} ~ \phi^p \right] = \int d^4 x \left[ \sqrt{-g} ~ \cfrac{\lambda_p}{p!} ~ \phi^p \right].
\end{align}
The following theorem describes the action in Fourier representation.

\begin{theorem}
    \begin{align}
        \begin{split}
            \mathcal{A} =& \sum\limits_{n=0}^\infty\int \prod\limits_{j=1}^p \cfrac{d^4 q_j}{(2\pi)^4} \prod\limits_{i=0}^n \cfrac{d^4 k_i}{(2\pi)^4} \, (2\pi)^4 \,\delta\Big( \sum q_j + \sum k_i \Big)\,h_{\rho_1\sigma_1}(k_1)\cdots h_{\rho_n\sigma_n}(k_n) \\
            &\times \kappa^n \,\cfrac{\lambda_p}{p!} \, \left(\sqrt{-g}\right)^{\rho_1\sigma_1\cdots\rho_n\sigma_n} \phi(q_1) \cdots \phi(q_p)\,.
        \end{split}
    \end{align}
\end{theorem}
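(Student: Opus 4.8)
The plan is to derive the Fourier representation of the power-law potential action directly from the factorised form $\mathcal{A} = \int d^4 x \left[ \sqrt{-g} \, \frac{\lambda_p}{p!} \, \phi^p \right]$, following exactly the same structure used in the preceding theorem for the free scalar. The key observation is that, unlike the kinetic term, the potential term contains no derivatives whatsoever, so the only infinite series present is the volume factor $\sqrt{-g}$. This makes the present theorem strictly simpler than the kinetic-term case: there is no $\sqrt{-g}\,g^{\mu\nu}$ structure and no momentum factors $(p_1)^\alpha (p_2)^\beta$ arising from $\partial_\mu \phi \, \partial_\nu \phi$.

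First I would insert the perturbative expansion of $\sqrt{-g}$ into the factorised action, writing
\begin{align}
    \sqrt{-g} = \sum\limits_{n=0}^\infty \kappa^n \left(\sqrt{-g}\right)^{\rho_1\sigma_1\cdots\rho_n\sigma_n} h_{\rho_1\sigma_1}\cdots h_{\rho_n\sigma_n}
\end{align}
so that the action becomes a manifest series in $\kappa$, with the $n$-th term carrying $n$ graviton fields and the fixed factor $\frac{\lambda_p}{p!}\,\phi^p$. Second, I would apply the Fourier transformation to every field: each graviton $h_{\rho_i\sigma_i}$ receives a momentum $k_i$, and each of the $p$ scalars $\phi$ receives a momentum $q_j$. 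Since the integrand is a local product of fields evaluated at the same spacetime point, the $\int d^4 x$ integration over the product of plane waves produces exactly one overall momentum-conserving delta function $(2\pi)^4 \delta\bigl(\sum q_j + \sum k_i\bigr)$, which is precisely the factor appearing in the statement.

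The main obstacle, such as it is, will be purely bookkeeping: tracking the index contractions and ensuring the measure factors $\prod_{j=1}^p \frac{d^4 q_j}{(2\pi)^4}$ and $\prod_{i=0}^n \frac{d^4 k_i}{(2\pi)^4}$ are assembled consistently with the normalisation conventions fixed in the free-scalar theorem. I would verify that the coefficient $\frac{\lambda_p}{p!}$ passes through the Fourier transform untouched (it is a constant), and that the $n=0$ term correctly reproduces the flat-space $\phi^p$ vertex with $\left(\sqrt{-g}\right) = 1$. No recursive relation or nontrivial series manipulation is needed here, since the $C$-tensors encoding $\left(\sqrt{-g}\right)^{\rho_1\sigma_1\cdots\rho_n\sigma_n}$ are supplied by Theorem \ref{Theorem_1}; the content of this theorem is simply the assembly of the Fourier-space vertex, and the proof reduces to matching the definition of the expansion coefficients with the transformed integrand.
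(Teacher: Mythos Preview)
Your proposal is correct and follows exactly the approach the paper takes: the paper does not write out a proof for this theorem at all, treating it as an immediate consequence of inserting the perturbative expansion of $\sqrt{-g}$ and Fourier-transforming every field, with the spatial integral producing the overall momentum-conserving delta function. Your identification of this as the derivative-free (and therefore simpler) analogue of the preceding kinetic-term theorem is precisely the paper's implicit reasoning.
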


This results in the following interaction rule:
\begin{align}
    \nonumber \\
    \begin{gathered}
        \begin{fmffile}{FR_S_Power}
            \begin{fmfgraph*}(30,30)
                \fmfleft{L1,L2}
                \fmfright{R1,R2}
                \fmf{dbl_wiggly}{L1,V}
                \fmf{dbl_wiggly}{L2,V}
                \fmf{dashes}{R1,V}
                \fmf{dashes}{R2,V}
                \fmfdot{V}
                \fmffreeze
                \fmf{dots}{L1,L2}
                \fmf{dots}{R1,R2}
                \fmflabel{$\rho_1\sigma_1$}{L1}
                \fmflabel{$\rho_n\sigma_n$}{L2}
                \fmflabel{$q_1$}{R1}
                \fmflabel{$q_p$}{R2}
            \end{fmfgraph*}
        \end{fmffile}
    \end{gathered}
    \hspace{10pt}= i\,\kappa^n \, \lambda_p \, \left(\sqrt{-g}\right)^{\rho_1\sigma_1\cdots\rho_n\sigma_n} . \\ \nonumber
\end{align}
This expression fully describes the gravitational coupling of scalar field potentials for any whole $p \geq 3$.

\subsection{Horndeski's gravity}

The Horndeski theory is the most general scalar-tensor theory of gravity that admits second-order field equations and minimal coupling to matter. It was discovered in \cite{Horndeski:1974wa} and independently rediscovered \cite{Kobayashi:2011nu}. Because of the second-order field equations, the theory is free from the Ostrogradsky instability \cite{Ostrogradsky:1850fid,Woodard:2015zca,Woodard:2006nt}.

A combination of the following Lagrangians gives the theory:
\begin{align}
  \mathcal{A} = \int d^4 x \sqrt{-g} \left[ \mathcal{L}_2 + \mathcal{L}_3 + \mathcal{L}_4 + \mathcal{L}_5 \right] ,
\end{align}
\begin{align}
  \begin{split}
    \mathcal{L}_2 &= G_2 (\phi,X) \,,\\
    \mathcal{L}_3 &= G_3 (\phi,X) \, \square \phi \,,\\
    \mathcal{L}_4 &= G_4 (\phi,X) \, R + G_{4,X} \left[ \left(\square\phi\right)^2  - \left(\nabla_\mu\nabla_\nu\phi\right)^2 \right] \,,\\
    \mathcal{L}_5 &= G_5 (\phi,X) \, G_{\mu\nu} \nabla^\mu\nabla^\nu\phi - \cfrac16\, G_{5,X} \left[ \left( \square\phi\right)^3 - 3 \, \square\phi \, \left(\nabla_\mu\nabla_\nu\phi\right)^2 + 2 \left(\nabla_\mu\nabla_\nu \phi\right)^3 \right] .
  \end{split}
\end{align}
Here $G_i = G_i (\phi,X)$ are functions of the scalar field $\phi$ and its kinetic term $X = g^{\mu\nu} \,\nabla_\mu \phi \, \nabla_\nu\phi$, while $G_{4,X}$ and $G_{5,X}$ note derivatives with respect to the kinetic term. Lagrangians $\mathcal{L}_2$ and $\mathcal{L}_3$ describe scalar field self-interaction with the minimal coupling to gravity. In turn, $\mathcal{L}_4$ and $\mathcal{L}_5$ describe healthy non-minimal coupling to gravity. Terms containing $G_{4, X}$ and $G_{5, X}$ are relevant only when $G_4$ and $G_5$ depend on the scalar field kinetic term. In that case, they cancel out higher derivative contributions to the field equations.

\subsubsection{Horndeski $G_2$ class}

The $G_2$ class of Horndeski theory describes a minimal coupling between the scalar field and gravity. This coupling is minimal because it does not involve graviton momenta.

The function $G_2$ shall be smooth enough to admit a Taylor expansion:
\begin{align}
    G_2(\phi,X) & = \sum\limits_{a=0}^\infty \sum\limits_{b=0}^\infty \lambda_{(a,b)} \, \phi^a \, X^b = \sum\limits_{a=0}^\infty \sum\limits_{b=0}^\infty \lambda_{(a,b)} \, \phi^a \, g^{\mu_1\nu_1} \cdots g^{\mu_b\nu_b} \pd_{\mu_1}\phi \, \pd_{\nu_1} \phi \cdots \pd_{\mu_b} \phi \, \pd_{\nu_b} \phi .
\end{align}
Here $\lambda_{(a,b)}$ is the dimensional coupling with the mass dimension $ 4 (1-b) - a $. Without the loss of generality, one can consider a single term of this expansion and consider the following action:
\begin{align}
    \int d^4 x \sqrt{-g} \, G_2(\phi, X) \to \int d^4 x \sqrt{-g} \, \lambda_{(a,b)} \, \phi^a \, X^b \,.
\end{align}
\noindent The perturbative expansion is given by the following theorem.
\begin{theorem}
    \begin{align}
        \begin{split}
        & \int d^4 x \sqrt{-g} \, \lambda_{(a,b)} \, \phi^a \, X^ b = \int d^4 x \sqrt{-g} \, g^{\mu_1\nu_1} \cdots g^{\mu_b \nu_b}  \,\phi^a \, \pd_{\mu_1} \phi \, \pd_{\nu_1} \phi \cdots \pd_{\mu_b} \phi \pd_{\nu_b} \phi  \\
        &=\sum\limits_{n=0}^\infty \int \prod\limits_{i=0}^n \cfrac{d^4 k_i}{(2\pi)^4} \, h_{\rho_i\sigma_i}(k_i) \, \prod\limits_{j=1}^{a + 2b} \cfrac{d^4 p_j}{(2\pi)^4} \, \phi(p_j) \, (2 \pi)^4 \delta \left( \sum k_i + \sum p_i \right)\\
        & \hspace{20pt} \times \kappa^n \,(-1)^b\, \lambda_{(a,b)} ~ \left(\sqrt{-g}\, g^{\mu_1\nu_1}\cdots g^{\mu_b\nu_b}\right)^{\rho_1\sigma_1\cdots\rho_n\sigma_n} (p_1)_{\mu_1} (p_2)_{\nu_1} \cdots (p_{2b-1})_{\mu_b} (p_{2b})_{\nu_b} \,.
        \end{split}
    \end{align}
\end{theorem}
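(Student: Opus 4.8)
The plan is to reduce the identity to the already-established perturbative expansion of the combined factor $\sqrt{-g}\,g^{\mu_1\nu_1}\cdots g^{\mu_b\nu_b}$ (the family of plain $C_{(l)}$-tensors with $l=b$) together with a routine Fourier transform of the scalar sector. First I would write the kinetic term explicitly as
\begin{align}
X^b = g^{\mu_1\nu_1}\cdots g^{\mu_b\nu_b}\,\pd_{\mu_1}\phi\,\pd_{\nu_1}\phi\cdots\pd_{\mu_b}\phi\,\pd_{\nu_b}\phi,
\end{align}
so that the Lagrangian density becomes $\sqrt{-g}\,\lambda_{(a,b)}\,\phi^a\,g^{\mu_1\nu_1}\cdots g^{\mu_b\nu_b}\,\pd_{\mu_1}\phi\cdots\pd_{\nu_b}\phi$. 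The key observation is that all the metric dependence is collected in the single factor $\sqrt{-g}\,g^{\mu_1\nu_1}\cdots g^{\mu_b\nu_b}$, which by definition of the plain $C_{(l)}$-tensors admits the expansion
\begin{align}
\sqrt{-g}\,g^{\mu_1\nu_1}\cdots g^{\mu_b\nu_b} = \sum_{n=0}^\infty \kappa^n\,\left(\sqrt{-g}\,g^{\mu_1\nu_1}\cdots g^{\mu_b\nu_b}\right)^{\rho_1\sigma_1\cdots\rho_n\sigma_n} h_{\rho_1\sigma_1}\cdots h_{\rho_n\sigma_n}.
\end{align}
This step requires no new computation; it only invokes the definition already recorded above, and Theorem~\ref{Theorem_2} guarantees that each coefficient is explicitly computable.

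Next I would perform the Fourier transform. Using the plane-wave conventions $h_{\rho\sigma}(x)=\int\frac{d^4k}{(2\pi)^4}e^{-ik\cdot x}h_{\rho\sigma}(k)$ and $\phi(x)=\int\frac{d^4p}{(2\pi)^4}e^{-ip\cdot x}\phi(p)$, each undifferentiated scalar contributes a bare $\phi(p_j)$, while each gradient produces a momentum factor through $\pd_{\mu}\phi(x)\to -i\,(p)_\mu\,\phi(p)$. There are exactly $2b$ gradients, assigned to the momenta $p_1,\dots,p_{2b}$ in the order $(p_1)_{\mu_1}(p_2)_{\nu_1}\cdots(p_{2b-1})_{\mu_b}(p_{2b})_{\nu_b}$, while the $a$ remaining scalars carry $p_{2b+1},\dots,p_{2b+a}$. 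Collecting the $2b$ factors of $-i$ gives $(-i)^{2b}=(-1)^b$, which is precisely the overall sign in the statement, and the single $\int d^4x$ over the product of all plane waves yields the momentum-conserving delta function $(2\pi)^4\delta\big(\sum k_i+\sum p_i\big)$.

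Finally I would assemble the pieces: substituting the $C_{(b)}$ expansion and the transformed scalars into the action, relabelling the graviton momenta $k_1,\dots,k_n$ to match the perturbation indices $\rho_i\sigma_i$, and reading off the coefficient of order $\kappa^n$ reproduces the claimed expression, with $\left(\sqrt{-g}\,g^{\mu_1\nu_1}\cdots g^{\mu_b\nu_b}\right)^{\rho_1\sigma_1\cdots\rho_n\sigma_n}$ contracted against the momenta. I do not expect a genuine obstacle here, since the entire metric content is carried by the earlier definition of the plain $C_{(l)}$-tensor; the only points demanding care are bookkeeping ones, namely tracking the pairing between the $2b$ gradient momenta and the indices $\mu_i,\nu_i$, and fixing the derivative sign convention so that $(-1)^b$ emerges rather than an extraneous power of $i$. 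Because the symmetric tensor $g^{\mu_i\nu_i}$ is contracted with the symmetric product of gradients, no additional symmetrisation of the momenta is required at the level of the action.
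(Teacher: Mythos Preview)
Your proposal is correct and follows the same approach the paper uses throughout Section~\ref{Feynman_Rules}: factorise the metric dependence into the single block $\sqrt{-g}\,g^{\mu_1\nu_1}\cdots g^{\mu_b\nu_b}$, invoke its perturbative expansion via the $C_{(l)}$-tensors, and Fourier transform the scalar sector to produce the $(-1)^b$ from the $2b$ gradients and the momentum-conserving $\delta$-function. The paper does not give an explicit proof of this particular theorem, but the general method it announces (``the finite part is calculated explicitly, while the infinite part is expanded in the infinite perturbative series'') is precisely what you carry out.
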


\noindent Lastly, the Feynman rule corresponding to that action reads
\begin{align}
    \nonumber \\
    \begin{split}
    & \hspace{25pt}
        \begin{gathered}
            \begin{fmffile}{G2_vertex}
                \begin{fmfgraph*}(30,30)
                    \fmfleft{L1,L2}
                    \fmfright{R1,R2}
                    \fmf{dbl_wiggly}{L1,V}
                    \fmf{dbl_wiggly}{L2,V}
                    \fmf{dashes}{V,R1}
                    \fmf{dashes}{V,R2}
                    \fmfdot{V}
                    \fmffreeze
                    \fmf{dots}{L1,L2}
                    \fmf{dots}{R1,R2}
                    \fmflabel{$\rho_1\sigma_1,k_1$}{L1}
                    \fmflabel{$\rho_n\sigma_n,k_n$}{L2}
                    \fmflabel{$p_1$}{R1}
                    \fmflabel{$p_{2b+a}$}{R2}
                    \fmflabel{$\lambda_{a,b}$}{V}
                \end{fmfgraph*}
            \end{fmffile}
        \end{gathered}
        \hspace{25pt} = i\, \kappa^n (-1)^b\,\lambda_{a,b} \, \left(\sqrt{-g}\, g^{\mu_1\nu_1}\cdots g^{\mu_b\nu_b}\right)^{\rho_1\sigma_1\cdots\rho_n\sigma_n} (p_1)_{\mu_1} (p_2)_{\nu_1} \cdots (p_{2b-1})_{\mu_b} (p_{2b})_{\nu_b} \\
        & \hspace{90pt} + \text{permutations} .
    \end{split} 
    \\ \nonumber
\end{align}
\noindent The permutation terms account for all possible permutations of the scalar field momenta. Let us also note that for the $b=0$ case, the interaction reduces the scalar field potential.

\subsubsection{Horndeski $G_3$ class}

The interactions of the $G_3$ class of Horndeski theory represent the simplest non-minimal coupling between the scalar field and gravity. The final expression of this interaction contains a single Christoffel symbol, which allows for the explicit inclusion of a single graviton momentum in the corresponding Feynman rule.

We shall first simplify the expression and make the Christoffel symbol explicit.
\begin{theorem}
    \begin{align}
        \begin{split}
        \int d^4 x \sqrt{-g} \, G_3 \, \square \phi = \int d^4 x \sqrt{-g}\, g^{\mu\nu} G_3 \,\pd_\mu \pd_\nu \phi -\int d^4 x \sqrt{-g} \, g^{\mu\nu} g^{\alpha\beta}\,\Gamma_{\alpha\mu\nu} G_3 \, \pd_\beta \phi  .
        \end{split}
    \end{align}
\end{theorem}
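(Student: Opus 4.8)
The plan is to expand the covariant d'Alembertian $\square\phi$ explicitly in terms of partial derivatives and Christoffel symbols. The identity is essentially kinematic: it rewrites the scalar $\square\phi$ appearing in $\mathcal{L}_3$ in a form where the factorisation theorem applies, separating the finite derivative structure from the infinite series carried by $\sqrt{-g}$, $g^{\mu\nu}$, and $\Gamma_{\alpha\mu\nu}$. Everything on the right-hand side is built from partial derivatives and the all-lower-index Christoffel symbol, which the earlier theorem expresses as a finite, linear-in-$\kappa$ object, so the whole point is to isolate that structure.

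First I would recall that for a scalar field the covariant derivative reduces to the partial derivative, $\nabla_\mu\phi = \pd_\mu\phi$, so that $\pd_\mu\phi$ transforms as a covector. Acting with a second covariant derivative then gives
\begin{align}
    \nabla_\mu\nabla_\nu\phi = \pd_\mu\pd_\nu\phi - \Gamma^\lambda_{\mu\nu}\,\pd_\lambda\phi,
\end{align}
where the Christoffel symbol with one upper index appears because $\pd_\nu\phi$ carries a lower index. Contracting with the inverse metric yields
\begin{align}
    \square\phi = g^{\mu\nu}\nabla_\mu\nabla_\nu\phi = g^{\mu\nu}\,\pd_\mu\pd_\nu\phi - g^{\mu\nu}\,\Gamma^\lambda_{\mu\nu}\,\pd_\lambda\phi.
\end{align}

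Next I would lower the index on the Christoffel symbol using the convention fixed earlier in the excerpt, $\Gamma^\lambda_{\mu\nu} = g^{\lambda\beta}\Gamma_{\beta\mu\nu}$, so that $g^{\mu\nu}\Gamma^\lambda_{\mu\nu}\,\pd_\lambda\phi = g^{\mu\nu}g^{\lambda\beta}\Gamma_{\beta\mu\nu}\,\pd_\lambda\phi$. A relabelling of the dummy indices $\beta\to\alpha$ and $\lambda\to\beta$, together with the symmetry of $g^{\alpha\beta}$, brings this term into the form $g^{\mu\nu}g^{\alpha\beta}\Gamma_{\alpha\mu\nu}\,\pd_\beta\phi$ that appears in the statement. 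Finally, multiplying by $\sqrt{-g}\,G_3$ and integrating over spacetime reproduces the claimed identity directly, term by term.

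I do not anticipate any genuine obstacle: the result is a purely algebraic rewriting of the covariant Laplacian and involves no integration by parts or series expansion at this stage. The only point requiring care is bookkeeping of the index conventions — in particular, confirming that the placement of indices on $\Gamma_{\alpha\mu\nu}$ matches the paper's definition, and that the relabelling correctly lands the free derivative index on $\pd_\beta\phi$. Once $\square\phi$ is written in this factorised form, the subsequent perturbative expansion proceeds through the $C_{(l)}$-tensors and the explicit linear-in-$\kappa$ expression for $\Gamma_{\alpha\mu\nu}$ established in the earlier theorems.
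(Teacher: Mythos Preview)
Your proposal is correct and is exactly the computation the paper has in mind: the theorem is stated without explicit proof in the paper, precisely because it is the direct expansion of $\square\phi = g^{\mu\nu}\nabla_\mu\nabla_\nu\phi$ followed by lowering the Christoffel index, just as you describe. Your index bookkeeping matches the paper's conventions, and there is nothing more to do.
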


Secondly, $G_3$ should be expanded in a Taylor series:
\begin{align}
    G_3(\phi,X) = \sum\limits_{a=0}^\infty \sum\limits_{b=0}^\infty \, \Theta_{(a,b)} \, \phi^a \, X^b .
\end{align}
We can study a single term of the expansion without the loss of generality:
\begin{align}
    \int d^4 x \sqrt{-g}\, G_3 (\phi,X) \, \square \phi \to \int d^4 x \sqrt{-g} \, \Theta_{(a,b)} \, \phi^a \, X^b \, \square \phi .
\end{align}

The following theorem gives the perturbative structure of this interaction.
\begin{theorem}
    \begin{align}
        \begin{split}
            &\int d^4 x \sqrt{-g} \, \Theta_{(a,b)}\, \phi^a\, X^b\, \square \phi \\
            &= \int d^4 x \sqrt{-g}\, g^{\mu\nu} \Theta_{(a,b)} \,\phi^a \, X^b \, \pd_\mu \pd_\nu \phi - \int d^4 x \sqrt{-g}\, g^{\mu\nu} g^{\alpha\beta} \Gamma_{\alpha\mu\nu} \Theta_{(a,b)} \,\phi^a \, X^b \, \pd_\beta \phi \\
            & = \sum\limits_{n=0}^\infty \int \prod\limits_{i=0}^n \cfrac{d^4 k_i}{(2\pi)^4}\,h_{\rho_i\sigma_i}(k_i) \!\!\!\! \prod\limits_{j=0}^{a+2 b+1} \!\! \cfrac{d^4 p_j}{(2\pi)^4}\, \phi(p_j) ~(2\pi)^4 \delta\left( \sum k_i + \sum p_i \right)\\
            &\hspace{10pt} \times\kappa^n (-1)^{b+1} \Theta_{(a,b)} \left(\sqrt{-g}\,g^{\mu\nu} \!g^{\alpha_1\beta_1} \!\cdots\! g^{\alpha_b\beta_b}\right)^{\rho_1\sigma_1\cdots\rho_n\sigma_n} \!\!(p_1)_{\alpha_1}(p_2)_{\beta_1}\!\cdots (p_{2b-1})_{\alpha_b}(p_{2b})_{\beta_b} (p_{2b+1})_\mu (p_{2b+1})_\nu\\
            &+\sum\limits_{n=1}^\infty \int \prod\limits_{i=1}^n \cfrac{d^4 k_i}{(2\pi)^4}\,h_{\rho_i\sigma_i}(k_i) \!\!\!\! \prod\limits_{j=0}^{a+2 b+1} \!\! \cfrac{d^4 p_j}{(2\pi)^4} \, \phi(p_j) ~(2\pi)^4 \delta\left( \sum k_i + \sum p_i \right)\\
            & \hspace{10pt} \times\kappa^n (-1)^{b+2} \Theta_{(a,b)} \left(\sqrt{-g}\, g^{\mu\nu} g^{\rho\sigma}g^{\alpha_1\beta_1}\cdots g^{\alpha_b\beta_b}\right)^{\rho_1\sigma_1\cdots\rho_{n \!-\! 1} \sigma_{n \!-\! 1}} (\Gamma_{\rho\mu\nu})^{\lambda\rho_n\sigma_n} (k_n)_\lambda \\
            & \hspace{20pt} \times (p_1)_{\alpha_1}(p_2)_{\beta_1}\cdots (p_{2b-1})_{\alpha_b}(p_{2b})_{\beta_b} (p_{2b+1})_\sigma .
        \end{split}
    \end{align}
\end{theorem}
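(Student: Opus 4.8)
The plan is to take the first equality as given: it is precisely the statement of the preceding theorem, which rewrites $\sqrt{-g}\,G_3\,\square\phi$ through partial derivatives and a single Christoffel symbol, applied to the single Taylor monomial $G_3\to\Theta_{(a,b)}\,\phi^a\,X^b$. What remains is to establish the second equality, namely to expand each of the two resulting terms perturbatively in $\kappa$ and pass to the Fourier representation. First I would write the kinetic factor explicitly as $X^b=g^{\alpha_1\beta_1}\cdots g^{\alpha_b\beta_b}\,\pd_{\alpha_1}\phi\,\pd_{\beta_1}\phi\cdots\pd_{\alpha_b}\phi\,\pd_{\beta_b}\phi$, so that in each term all inverse metrics and the volume factor are collected into one purely algebraic, derivative-free object, while the surviving derivatives act only on scalar fields or, in the second term, sit inside the Christoffel symbol.

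For the first term the collected object is $\sqrt{-g}\,g^{\mu\nu}g^{\alpha_1\beta_1}\cdots g^{\alpha_b\beta_b}$, which is exactly a plain $C_{(1+b)}$-tensor in the sense of the earlier definition; its $n$-th perturbative coefficient is $(\sqrt{-g}\,g^{\mu\nu}g^{\alpha_1\beta_1}\cdots g^{\alpha_b\beta_b})^{\rho_1\sigma_1\cdots\rho_n\sigma_n}$, computable through the recursion of Theorem~\ref{Theorem_2}. I would then Fourier transform every field, $\phi(x)=\int\frac{d^4p}{(2\pi)^4}\,e^{-ipx}\phi(p)$ and $h(x)=\int\frac{d^4k}{(2\pi)^4}\,e^{-ikx}h(k)$, so that each $\pd_\mu$ on a scalar produces $-i$ times the corresponding momentum. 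The two derivatives of $\pd_\mu\pd_\nu\phi$ together with the $2b$ derivatives inside $X^b$ supply $(-i)^{2b+2}=(-1)^{b+1}$, which reproduces the stated sign, while the $\int d^4x$ over the accumulated exponentials yields $(2\pi)^4\delta(\sum k_i+\sum p_j)$.

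The second term requires more care and is the main obstacle. Here the algebraic factor is $\sqrt{-g}\,g^{\mu\nu}g^{\rho\sigma}g^{\alpha_1\beta_1}\cdots g^{\alpha_b\beta_b}$, a plain $C_{(2+b)}$-tensor, but it is multiplied by the Christoffel symbol $\Gamma_{\rho\mu\nu}$, which by the earlier theorem is itself exactly of order $\kappa$, linear in $h$, and carrying one derivative. The clean way is to keep this single graviton separate: I would expand the $C_{(2+b)}$-tensor only up to order $n-1$, contributing gravitons $\rho_1\sigma_1\cdots\rho_{n-1}\sigma_{n-1}$, and let the $n$-th graviton $\rho_n\sigma_n$, together with its momentum $k_n$, come entirely from $\Gamma_{\rho\mu\nu}$, exhibited as $(\Gamma_{\rho\mu\nu})^{\lambda\rho_n\sigma_n}(k_n)_\lambda$. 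In position space this is simply the product of the two series, whose $\kappa^n$ coefficient is the convolution appearing in the statement; since at least one graviton is always furnished by the Christoffel symbol, the sum necessarily starts at $n=1$.

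The only genuinely delicate point is then the sign and index bookkeeping. Counting the $2b$ derivatives from $X^b$, the one from $\pd_\sigma\phi$, and the one from the Christoffel symbol gives $(-i)^{2b+2}=(-1)^{b+1}$; combined with the explicit overall minus sign of this term, this produces $(-1)^{b+2}$, matching the statement. I expect the bookkeeping of which graviton is attached to the $C$-tensor versus the Christoffel symbol — and the attendant relabelling of contracted indices so that the collected momenta appear as $(p_1)_{\alpha_1}\cdots(p_{2b})_{\beta_b}(p_{2b+1})_\sigma$ — to be the most error-prone step, while the remaining manipulations are routine applications of the definitions and of Theorem~\ref{Theorem_2}.
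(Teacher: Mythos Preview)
Your proposal is correct and follows exactly the approach the paper itself uses: apply the factorisation theorem to isolate the derivative-free combinations $\sqrt{-g}\,g^{\mu\nu}g^{\alpha_1\beta_1}\cdots g^{\alpha_b\beta_b}$ and $\sqrt{-g}\,g^{\mu\nu}g^{\rho\sigma}g^{\alpha_1\beta_1}\cdots g^{\alpha_b\beta_b}$, expand them via the $C_{(l;n)}$-tensors of Theorem~\ref{Theorem_2}, and Fourier transform the remaining finite derivative pieces, noting that $\Gamma_{\rho\mu\nu}$ is exactly first order in $\kappa$ so the second sum starts at $n=1$. The paper in fact omits the detailed proof here, so your write-up, including the sign count $(-i)^{2b+2}=(-1)^{b+1}$ and the explicit assignment of the $n$-th graviton to the Christoffel symbol, is more explicit than what the paper supplies.
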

\noindent It should be noted that the second part of the expression does not contribute at the $\kappa^0$ level since the Christoffel symbol vanishes on the background.

The corresponding interaction rule is given by the following.
\begin{align}
    \nonumber \\
    \begin{split}
        & \hspace{20pt}
        \begin{gathered}
            \begin{fmffile}{G3_vertex}
                \begin{fmfgraph*}(30,30)
                    \fmfleft{L1,L2}
                    \fmfright{R1,R2}
                    \fmf{dbl_wiggly}{L1,V}
                    \fmf{dbl_wiggly}{L2,V}
                    \fmf{dashes}{V,R1}
                    \fmf{dashes}{V,R2}
                    \fmfdot{V}
                    \fmffreeze
                    \fmf{dots}{L1,L2}
                    \fmf{dots}{R1,R2}
                    \fmflabel{$\rho_1\sigma_1,k_1$}{L1}
                    \fmflabel{$\rho_n\sigma_n,k_n$}{L2}
                    \fmflabel{$p_1$}{R1}
                    \fmflabel{$p_{2b+a+1}$}{R2}
                    \fmflabel{$\Theta_{a,b}$}{V}
                \end{fmfgraph*}
            \end{fmffile}
        \end{gathered}
        \hspace{20pt} = i \,\kappa^n (-1)^{b+1} \Theta_{(a,b)} \Bigg[ \left(\sqrt{-g}\,g^{\mu\nu}\,g^{\alpha_1\beta_1} \!\cdots\! g^{\alpha_b\beta_b}\right)^{\rho_1\sigma_1\cdots\rho_n\sigma_n} (p_{2b+1})_\mu (p_{2b+1})_\nu \\ \\
        & - \left(\sqrt{-g}\,g^{\mu\nu}\!g^{\rho\sigma} \!\cdots\! g^{\alpha_b\beta_b}\right)^{\rho_1\sigma_1\cdots\sigma_{n \!-\! 1}\rho_{n \!-\! 1}} (\Gamma_{\rho\mu\nu})^{\lambda\rho_n\sigma_n} (k_n)_\lambda (p_{2b+1})_\sigma \Bigg] (p_1)_{\alpha_1}(p_2)_{\beta_1}\cdots (p_{2b-1})_{\alpha_b}(p_{2b})_{\beta_b} \\
        & + \text{permutations} .
    \end{split}
\end{align}
\noindent The permutation terms account for all possible permutations of the scalar field and graviton momenta.
  
\subsubsection{Horndeski $G_4$ class}

Interactions of $G_4$ class describe more sophisticated coupling of the scalar field to gravity. Similarly to the previous case, it involves Christoffel symbols. In contrast with the previous case, Christoffel symbols are present implicitly in the scalar curvature. Moreover, the interaction is described by two related terms, with the second term cancelling higher derivative contributions to the classical equations of motion.

We assume $G_4$ is smooth enough to admit a Taylor expansion. To ensure that the theory admits general relativity in the corresponding limit, one shall assume that the expansion takes the following form:
\begin{align}
    G_4 = -\cfrac{2}{\kappa^2} + \sum\limits_{a,b} \Upsilon_{(a,b)} \phi^a \, X^b .
\end{align}
The first term corresponds to pure general relativity, while the other terms describe the interaction between gravity and the scalar field. We discuss the part corresponding to general relativity in another section, so we do not discuss it here. Without loss of generality, we will use a single term from this expansion:
\begin{align}
    \int d^4 x \sqrt{-g} \left[  \Upsilon_{(a,b)}\,\phi^a\, X^b \,R + b\,\Upsilon_{(a,b)}\,\phi^a\, X^{b-1} \left[ \left(\square\phi\right)^2 - \left(\nabla_\mu\nabla_\nu\phi\right)^2\right] \right].
\end{align}
The following theorems give the structure of this action.

\begin{theorem}
    \begin{align}
        \begin{split}
            &\int d^4 x \sqrt{-g} \, \Upsilon_{(a,b)} \,\phi^a X^b R = \!\!\int \!\! d^4 x \sqrt{-g} \, g^{\mu\nu}g^{\alpha\beta} \pd_\alpha \left[ \Gamma_{\beta\mu\nu} \!-\! \Gamma_{\mu\beta\nu}\right] \Upsilon_{(a,b)} \phi^a X^b \\
            &\hspace{120pt}+ \int\!\! d^4 x \sqrt{-g} \, g^{\mu\nu} g^{\alpha\beta} g^{\rho\sigma}\left[ \Gamma_{\alpha\mu\rho} \Gamma_{\beta\nu\sigma} \!-\! \Gamma_{\alpha\mu\nu} \Gamma_{\beta\rho\sigma} \right] \Upsilon_{(a,b)} \phi^a X^b \\
            & =\sum\limits_{n=1}^\infty\int\prod\limits_{i=1}^n \cfrac{d^4 k_i}{(2\pi)^4} h_{\rho_i\sigma_i}(k_i) \prod\limits_{j=1}^{a+2b}\cfrac{d^4 p_j}{(2\pi)^4}\,\phi(p_j) \, (2\pi)^4\delta\left(\sum k_i + \sum p_j \right) \\
            &\hspace{10pt}\times \kappa^n (-1)^{b+1}\Upsilon_{(a,b)}\!\left(\sqrt{-g}\,g^{\mu\nu}\!g^{\alpha\beta}\!g^{\alpha_1\beta_1} \!\cdots\! g^{\alpha_b\beta_b}\right)^{\rho_1\sigma_1\cdots\rho_{n \!-\! 1}\sigma_{n \!-\! 1}} (p_1)_{\alpha_1} (p_2)_{\beta_1} \cdots (p_{2b-1})_{\alpha_b} (p_{2n})_{\beta_b} \\
            &\hspace{20pt} \times \left[ \left(\Gamma_{\beta\mu\nu}\right)^{\lambda\rho_n\sigma_n} - \left(\Gamma_{\mu\beta\nu}\right)^{\lambda\rho_n\sigma_n}\right] (k_n)_\alpha(k_n)_\lambda \\
            &+\sum\limits_{n=2}^\infty\int\prod\limits_{i=1}^n \cfrac{d^4 k_i}{(2\pi)^4} h_{\rho_i\sigma_i}(k_i) \prod\limits_{j=1}^{a+2b}\cfrac{d^4 p_j}{(2\pi)^4}\,\phi(p_j) \, (2\pi)^4\delta\left(\sum k_i + \sum p_j \right)\\
            &\hspace{10pt}\times \kappa^n (-1)^{b+2}\Upsilon_{(a,b)} \!\left(\sqrt{-g}\,g^{\mu\nu}\! g^{\alpha\beta} \! g^{\rho\sigma} \! g^{\alpha_1\beta_1} \!\cdots\! g^{\alpha_b\beta_b} \right)^{\rho_1\sigma_1\cdots\rho_{n \!-\! 2}\sigma_{n \!-\! 2}} (p_1)_\alpha (p_2)_\beta\cdots (p_{2b-1})_{\alpha_b} (p_{2b})_{2b}  \\
            & \hspace{20pt}\times \left[ \left(\Gamma_{\alpha\mu\rho}\right)^{\lambda_1\rho_{n-1}\sigma_{n-1}} \left(\Gamma_{\beta\nu\sigma}\right)^{\lambda_2\rho_n\sigma_n} - \left(\Gamma_{\alpha\mu\nu}\right)^{\lambda_1\rho_{n-1}\sigma_{n-1}} \left(\Gamma_{\beta\rho\sigma}\right)^{\lambda_2\rho_n\sigma_n} \right] (k_{n-1})_{\lambda_1} (k_{n})_{\lambda_2} .
        \end{split}
    \end{align}
\end{theorem}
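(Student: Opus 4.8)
The plan is to insert the scalar-curvature expansion from the Riemann-tensor theorem into the action and then invoke the factorisation theorem, which guarantees that all derivatives sit inside the finite Christoffel factors while the genuinely infinite series are carried by $\sqrt{-g}$ and the powers of $g^{\mu\nu}$. I would start from $R = g^{\mu\nu}g^{\alpha\beta}\pd_\mu[\Gamma_{\nu\alpha\beta}-\Gamma_{\alpha\nu\beta}] + g^{\mu\nu}g^{\alpha\beta}g^{\rho\sigma}[\Gamma_{\alpha\mu\nu}\Gamma_{\beta\rho\sigma}-\Gamma_{\alpha\mu\rho}\Gamma_{\beta\nu\sigma}]$ and relabel its dummy indices, using only the symmetry of $g^{\mu\nu}$, to reach the form printed in the first line of the statement: a single-Christoffel piece $g^{\mu\nu}g^{\alpha\beta}\pd_\alpha[\Gamma_{\beta\mu\nu}-\Gamma_{\mu\beta\nu}]$ and a two-Christoffel piece. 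Multiplying by the kinetic factor $X^b=(g^{\mu\nu}\pd_\mu\phi\,\pd_\nu\phi)^b$ then folds the $b$ extra inverse metrics together with the two (respectively three) inverse metrics from $R$ and the volume factor into a single product $\sqrt{-g}\,g^{\mu\nu}g^{\alpha\beta}g^{\alpha_1\beta_1}\cdots g^{\alpha_b\beta_b}$, with one additional $g^{\rho\sigma}$ for the quadratic piece. This product is exactly the object that the plain $C_{(l;n)}$-tensor of the appropriate valence is defined to expand.

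Next I would expand every surviving factor in powers of $h$. By the Christoffel theorem each symbol $\Gamma_{\beta\mu\nu}=\tfrac{\kappa}{2}[\pd_\mu h_{\nu\beta}+\pd_\nu h_{\mu\beta}-\pd_\beta h_{\mu\nu}]$ is \emph{strictly linear} in $h$ and already carries one derivative; stripping off the field and promoting its internal derivative to a momentum index $\lambda$ defines precisely the tensor written $(\Gamma_{\beta\mu\nu})^{\lambda\rho_n\sigma_n}$. The metric product is expanded through its $C_{(l;n)}$-coefficients, so the whole integrand becomes a product of one $C_{(l;n)}$-tensor, one or two linear Christoffel coefficients, and the scalar momenta coming from $X^b$ and from $\phi^a$. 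The power of $\kappa$ then simply counts the total number of graviton fields $n$.

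The third step is the Fourier transform. Writing each field as its momentum-space image with the convention $\pd\to i\,(\text{momentum})$, the single spacetime integral collapses to $(2\pi)^4\delta(\sum k_i+\sum p_j)$, the $2b$ scalar derivatives of $X^b$ produce the explicit factors $(p_1)_{\alpha_1}(p_2)_{\beta_1}\cdots$ together with a clean $(-1)^b$, and the curvature derivatives become graviton momenta. Collecting by the number $n$ of gravitons splits the answer into the two displayed sums. In the single-Christoffel sum the external derivative $\pd_\alpha$ and the internal derivative both act on the \emph{one} graviton $h_{\rho_n\sigma_n}$, giving the pair $(k_n)_\alpha(k_n)_\lambda$ and a series that begins at $n=1$; in the two-Christoffel sum the two linear symbols require at least two gravitons, each supplying a single momentum $(k_{n-1})_{\lambda_1}$ and $(k_n)_{\lambda_2}$, so that series begins at $n=2$, matching the statement.

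The main obstacle I anticipate is not conceptual but the sign and index bookkeeping. One must keep the internal Christoffel index $\lambda$ distinct from the external curvature index $\alpha$, correctly partition the derivatives acting on gravitons from those acting on scalars, and track which inverse metrics belong to the curvature and which to $X^b$ so that they assemble into the right $C_{(l;n)}$-valence. The most error-prone point is the relative sign between the two pieces: beyond the transparent $(-1)^b$ from $X^b$, the remaining signs are fixed by the factors of $i$ from each derivative and by the ordering of the two quadratic Christoffel monomials in the rewriting of $R$, and it is this bookkeeping that yields the $(-1)^{b+1}$ and $(-1)^{b+2}$ prefactors. Verifying that the two series start at exactly $n=1$ and $n=2$, a direct consequence of the linearity of $\Gamma$ in $h$, is the cleanest internal consistency test on the final expression.
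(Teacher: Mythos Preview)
Your proposal is correct and follows exactly the methodology the paper uses throughout Section~4: insert the factorised form of $R$ from the Riemann-tensor theorem, fold the $b$ inverse metrics from $X^b$ into the $\sqrt{-g}\,g^{\mu\nu}\cdots$ product so that the $C_{(l;n)}$-tensors carry the infinite series, use the strict linearity of $\Gamma_{\alpha\mu\nu}$ in $h$ to peel off one or two gravitons, and then Fourier transform. The paper does not spell out a separate proof for this theorem---it is one of several structural expansions in the Horndeski subsection stated as direct consequences of the factorisation theorem and the tools of Section~3---so your plan is precisely what the paper's framework dictates, including the observation that the two sums begin at $n=1$ and $n=2$ because the Christoffel symbols vanish on the background.
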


\noindent The expression consists of two parts. The first part contributes at the $\okappa{1}$ level and describes the interaction between a few scalars and a single graviton. The second part does not contribute to $\okappa{1}$ and describes the interaction between two or more gravitons with a few scalars.

The following theorem describes the structure of the term containing $G_{4,X}$.
\begin{theorem}
    \begin{align}
        \begin{split}
            & \int d^4 x \sqrt{-g} \, \phi^a X^{b-1} \left[ \left( \square \phi \right)^2 - \left(\nabla_\mu\nabla_\nu\phi\right)^2 \right]\\
            & = \int d^4 x \sqrt{-g} \left[g^{\mu\nu} g^{\alpha\beta} - \cfrac12\,\left( g^{\mu\alpha}g^{\nu\beta} + g^{\mu\beta}g^{\nu\alpha} \right)\right] \phi^a X^b \pd_\mu\pd_\nu\phi \pd_\alpha\pd_\beta \phi \\
            & \hspace{10pt} + \int d^4 x\sqrt{-g} \left[g^{\mu\nu} g^{\alpha\beta} - \cfrac12\,\left( g^{\mu\alpha}g^{\nu\beta} + g^{\mu\beta}g^{\nu\alpha} \right)\right] g^{\rho\sigma} \Gamma_{\rho\mu\nu}\, \phi^a X^b \pd_\sigma\phi \pd_\alpha\pd_\beta\phi \\
            & \hspace{10pt} + \int d^4 x\sqrt{-g} \left[g^{\mu\nu} g^{\alpha\beta} - \cfrac12\,\left( g^{\mu\alpha}g^{\nu\beta} + g^{\mu\beta}g^{\nu\alpha} \right)\right] g^{\rho\sigma} g^{\lambda\tau} \Gamma_{\rho\mu\nu} \Gamma_{\lambda\alpha\beta} \phi^a X^b \pd_\sigma\phi \pd_\tau \phi .
        \end{split}
    \end{align}
\end{theorem}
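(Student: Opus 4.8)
The plan is to prove the identity as a purely differential--geometric expansion of the two covariant second derivatives, with $\sqrt{-g}$ and the scalar weight carried along as a multiplicative factor whose exponent I will determine rather than assume. Because $\phi$ is a scalar, one has $\nabla_\mu\nabla_\nu\phi = \pd_\mu\pd_\nu\phi - \Gamma^\rho_{\mu\nu}\,\pd_\rho\phi$, and the Christoffel theorem above supplies $\Gamma^\rho_{\mu\nu} = g^{\rho\sigma}\Gamma_{\sigma\mu\nu}$. The first step is to package the bracket on the left as a single quadratic contraction,
\begin{align}
    \left(\square\phi\right)^2 - \left(\nabla_\mu\nabla_\nu\phi\right)^2 = \left[ g^{\mu\nu}g^{\alpha\beta} - \cfrac12\left( g^{\mu\alpha}g^{\nu\beta} + g^{\mu\beta}g^{\nu\alpha}\right)\right] \nabla_\mu\nabla_\nu\phi \, \nabla_\alpha\nabla_\beta\phi,
\end{align}
where symmetrising the second metric structure is legitimate since $\nabla_\mu\nabla_\nu\phi$ is symmetric in its two indices. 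I will denote the bracketed object $T^{\mu\nu\alpha\beta}$; it is symmetric under $\mu\leftrightarrow\nu$, under $\alpha\leftrightarrow\beta$, and under the pair swap $\{\mu\nu\}\leftrightarrow\{\alpha\beta\}$, and these symmetries organise the entire computation.

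I would then substitute $\nabla_\mu\nabla_\nu\phi = \pd_\mu\pd_\nu\phi - g^{\rho\sigma}\Gamma_{\sigma\mu\nu}\,\pd_\rho\phi$ into this quadratic form and expand. The product generates exactly three structures: a pure second-derivative term $T^{\mu\nu\alpha\beta}\pd_\mu\pd_\nu\phi\,\pd_\alpha\pd_\beta\phi$, a term linear in the Christoffel symbol, and a term quadratic in it of the shape $T^{\mu\nu\alpha\beta} g^{\rho\sigma}g^{\lambda\tau}\Gamma_{\rho\mu\nu}\Gamma_{\lambda\alpha\beta}\,\pd_\sigma\phi\,\pd_\tau\phi$. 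The two pieces making up the linear term---one carrying the Christoffel on the $\mu\nu$ derivative, the other on the $\alpha\beta$ derivative---are exchanged by the pair-swap symmetry of $T$, so after relabelling dummy indices they merge into a single structure $T^{\mu\nu\alpha\beta} g^{\rho\sigma}\Gamma_{\rho\mu\nu}\,\pd_\sigma\phi\,\pd_\alpha\pd_\beta\phi$. Reinstating the common factor $\sqrt{-g}\,\phi^a X^{\,b-1}$ in front of each of the three structures then matches them term by term against the three displayed integrals.

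The hard part will be the bookkeeping of the linear term together with the power of $X$. For the linear term I must decide, by relabelling indices and invoking the symmetry of $T^{\mu\nu\alpha\beta}$ explicitly, what coefficient and sign survive the merging of the two cross contributions; naively they reinforce rather than cancel, so this is exactly the place where a stray factor of two or a sign can slip in, and it is the one coefficient I would pin down with care before trusting the displayed form. For the scalar weight I would refuse to cancel the power of $X$ by inspection: a count of the fields and derivatives on the left-hand side fixes the multiplying factor unambiguously as $\phi^a X^{\,b-1}$, so matching the stated right-hand side, which is written with $\phi^a X^{\,b}$, reduces to reconciling the exponent of $X$, and I anticipate this amounting to a one-unit shift in that power. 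Once the linear-term coefficient and the $X$-weight are settled, the remaining work---distributing $T^{\mu\nu\alpha\beta}$ and renaming contracted indices to reach the all-lower-index Christoffel notation of the right-hand side---is routine.
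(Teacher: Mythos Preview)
Your approach is correct and is in fact the only natural route; the paper does not supply a proof of this theorem (it falls under the blanket remark that ``detailed proofs for many theorems are omitted''), so there is nothing in the paper to compare your argument against beyond the surrounding context. Writing the bracket as $T^{\mu\nu\alpha\beta}\nabla_\mu\nabla_\nu\phi\,\nabla_\alpha\nabla_\beta\phi$ with the symmetrised $T$, inserting $\nabla_\mu\nabla_\nu\phi=\pd_\mu\pd_\nu\phi-g^{\rho\sigma}\Gamma_{\rho\mu\nu}\pd_\sigma\phi$, and expanding is exactly what the factorisation philosophy of the paper dictates, and the three resulting structures are precisely those displayed.

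Your two reservations are well founded and are not gaps in your reasoning but typos in the stated result. First, the overall scalar weight on the right-hand side must be $\phi^a X^{\,b-1}$, not $\phi^a X^{\,b}$: nothing in the expansion produces an extra factor of $X$, and the field count on the two sides only balances with the exponent $b-1$. Second, the cross term does reinforce, exactly as you anticipate: the pair-swap symmetry of $T^{\mu\nu\alpha\beta}$ makes the two mixed contributions equal, and they carry the minus sign from the Christoffel piece of $\nabla\nabla\phi$, so the middle integral should appear with coefficient $-2$ rather than the bare $+1$ printed. That this is the intended meaning is confirmed by the very next theorem in the paper (the Fourier-space expansion of the same action), where the term linear in $\Gamma$ carries an explicit factor $2$ and a sign flipped relative to the $\Gamma$-free term. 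So proceed with your derivation; the coefficient you obtain for the linear term and the exponent $b-1$ are correct, and the discrepancy lies in the displayed statement, not in your method.
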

\noindent In turn, the Fourier structure of this contribution is given by the following.
\begin{theorem}
    \begin{align}
        \begin{split}
            & \int d^4 x \sqrt{-g} \, b \,\Upsilon_{(a,b)} \phi^a X^{b-1} \left[ \left( \square \phi \right)^2 - \left(\nabla_\mu\nabla_\nu\phi\right)^2 \right]\\
            & = \sum\limits_{n=0}^\infty \int\prod\limits_{i=0}^n \cfrac{d^4 k_i}{(2\pi)^4} h_{\rho_i\sigma_i}(k_i) \prod\limits_{j=1}^{a+2b} \cfrac{d^4 p_j}{(2\pi)^4} \, (2\pi)^4 \delta\left(\sum k_i + \sum p_j \right) \\
            & \hspace{10pt} \times \kappa^n (-1)^{b-1} \,b \Upsilon_{(a,b)} \left( \sqrt{-g} \left[g^{\mu\nu} g^{\alpha\beta} - \cfrac12\,\left( g^{\mu\alpha}g^{\nu\beta} + g^{\mu\beta}g^{\nu\alpha} \right)\right] g^{\alpha_1\beta_1} \cdots g^{\alpha_b \beta_b} \right)^{\rho_1\sigma_1\cdots\rho_n\sigma_n} \\
            & \hspace{20pt} \times (p_1)_{\alpha_1} (p_2)_{\beta_1} \cdots (p_{2b-1})_{\alpha_b} (p_{2b})_{\beta_b} (p_{2b+1})_\mu (p_{2b+1})_\nu (p_{2b+2})_\alpha (p_{2b+2})_\beta\\
            & + \sum\limits_{n=1}^\infty \int\prod\limits_{i=0}^n \cfrac{d^4 k_i}{(2\pi)^4} h_{\rho_i\sigma_i}(k_i) \prod\limits_{j=1}^{a+2b} \cfrac{d^4 p_j}{(2\pi)^4} \, (2\pi)^4 \delta\left(\sum k_i + \sum p_j \right) \\
            & \hspace{10pt} \times \kappa^n (-1)^{b} \,2\,b \Upsilon_{(a,b)} \left( \sqrt{-g} \left[g^{\mu\nu} g^{\alpha\beta} - \cfrac12\,\left( g^{\mu\alpha}g^{\nu\beta} + g^{\mu\beta}g^{\nu\alpha} \right)\right] g^{\rho\sigma} g^{\alpha_1\beta_1} \cdots g^{\alpha_b \beta_b} \right)^{\rho_1\sigma_1\cdots\rho_{n \!-\! 1} \sigma_{n \!-\! 1} } \\
            & \hspace{20pt} \times (\Gamma_{\rho\mu\nu} )^{\lambda\rho_n\sigma_n}(k_n)_\lambda (p_1)_{\alpha_1} (p_2)_{\beta_1} \cdots (p_{2b-1})_{\alpha_b} (p_{2b})_{\beta_b} (p_{2b+1})_\sigma (p_{2b+2})_\alpha (p_{2b+2})_\beta \\
            & + \sum\limits_{n=2}^\infty \int\prod\limits_{i=0}^n \cfrac{d^4 k_i}{(2\pi)^4} h_{\rho_i\sigma_i}(k_i) \prod\limits_{j=1}^{a+2b} \cfrac{d^4 p_j}{(2\pi)^4} \, (2\pi)^4 \delta\left(\sum k_i + \sum p_j \right) \\
            & \hspace{10pt} \times \kappa^n (-1)^{b} \,b \Upsilon_{(a,b)} \left( \sqrt{-g} \left[g^{\mu\nu} g^{\alpha\beta} - \cfrac12\,\left( g^{\mu\alpha}g^{\nu\beta} + g^{\mu\beta}g^{\nu\alpha} \right)\right] g^{\rho\sigma} g^{\lambda\tau} g^{\alpha_1\beta_1} \cdots g^{\alpha_b \beta_b} \right)^{\rho_1\sigma_1\cdots\rho_{n \!-\! 2} \sigma_{n \!-\! 2} } \\
            & \hspace{20pt} \times (\Gamma_{\rho\mu\nu} )^{\lambda_1 \rho_{n \!-\! 1} \sigma_{n \!-\! 1} } (\Gamma_{\lambda\alpha\beta})^{\lambda_2 \rho_n\sigma_n} (k_{n \!-\! 1})_{\lambda_1} (k_n)_{\lambda_2} (p_1)_{\alpha_1} (p_2)_{\beta_1} \cdots (p_{2b-1})_{\alpha_b} (p_{2b})_{\beta_b} (p_{2b+1})_\sigma (p_{2b+2})_\rho .
        \end{split}
    \end{align}
\end{theorem}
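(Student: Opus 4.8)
The plan is to reduce the statement to the position-space decomposition furnished by the immediately preceding theorem and then Fourier transform it group by group, so that only powers of $\kappa$, overall signs, and combinatorial multiplicities remain to be tracked. First I would recall that for a scalar the covariant Hessian splits as $\nabla_\mu\nabla_\nu\phi = \pd_\mu\pd_\nu\phi - g^{\rho\sigma}\Gamma_{\sigma\mu\nu}\pd_\rho\phi$. Inserting this into the contraction $[g^{\mu\nu}g^{\alpha\beta}-\frac12(g^{\mu\alpha}g^{\nu\beta}+g^{\mu\beta}g^{\nu\alpha})]\nabla_\mu\nabla_\nu\phi\,\nabla_\alpha\nabla_\beta\phi$ sorts the integrand into exactly three groups: a Hessian-Hessian group carrying no Christoffel symbol, two cross terms each carrying a single $\Gamma_{\sigma\mu\nu}$, and a Christoffel-Christoffel group carrying two. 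These are precisely the three lines of the previous theorem, and they are the source of the three separate $\kappa$-sums in the claim.

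Next I would apply the factorisation theorem to each group. Every derivative-free infinite series is absorbed into one factor of the schematic form $\sqrt{-g}\,g^{\mu\nu}g^{\alpha\beta}g^{\alpha_1\beta_1}\cdots g^{\alpha_b\beta_b}$ (together with the bracketed linear combination), whose $n$-th perturbative coefficient is the corresponding member of the $C_{(l;n)}$ family introduced earlier; this is the object denoted $\left(\sqrt{-g}\,[\cdots]g^{\cdots}\right)^{\rho_1\sigma_1\cdots\rho_n\sigma_n}$. The explicit Christoffel factors are kept outside and expanded with the Christoffel theorem: each $\Gamma_{\sigma\mu\nu}$ is linear in $\kappa$ and linear in $h$, so its coefficient $(\Gamma_{\rho\mu\nu})^{\lambda\rho_i\sigma_i}$ supplies one extra power of $\kappa$, one extra graviton line $\rho_i\sigma_i$, and one derivative that Fourier-transforms into a momentum $(k_i)_\lambda$. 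This at once explains why the one- and two-Christoffel sums begin at $n=1$ and $n=2$, and why the Christoffel pieces do not contribute at $\okappa{0}$.

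I would then carry out the Fourier transform. Writing each scalar and each graviton as a momentum integral and integrating over $x$ produces the overall factor $(2\pi)^4\delta\!\left(\sum k_i + \sum p_j\right)$; each $\pd\phi$ becomes a momentum contracted with the appropriate inverse-metric index, and each Hessian $\pd_\mu\pd_\nu\phi$ becomes $(p)_\mu(p)_\nu$ up to the factor $i^2$. Tracking these factors of $i$ — two per gradient pair in $X^{b-1}$, two per Hessian, and one per derivative hidden in each $\Gamma$ — together with the explicit minus sign attached to each Christoffel in $\nabla_\mu\nabla_\nu\phi = \pd_\mu\pd_\nu\phi - g^{\rho\sigma}\Gamma_{\sigma\mu\nu}\pd_\rho\phi$, reproduces the displayed signs $(-1)^{b-1}$, $(-1)^{b}$, $(-1)^{b}$ of the three sums. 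The prefactor $b$ from $G_{4,X}=b\,\Upsilon_{(a,b)}\phi^a X^{b-1}$ is carried through unchanged, and the extra factor of $2$ in the middle sum is the multiplicity of attaching the single Christoffel to either of the two Hessian slots, which are equivalent under the $(\mu\nu)\leftrightarrow(\alpha\beta)$ symmetry of the bracket.

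The main obstacle will be the index bookkeeping when the independently expanded power series — one for $\sqrt{-g}\,[\cdots]g^{\cdots}$ and one (or two) for the Christoffel symbols — are merged into a single series indexed by the total graviton number $n$. This is a discrete convolution identical in spirit to the relabeling used in the proof of Theorem \ref{Theorem_2}: one decouples the two summation indices, shifts the dummy graviton labels so that the Christoffel lines become $\rho_n\sigma_n$ (and $\rho_{n-1}\sigma_{n-1}$ in the two-Christoffel term), and verifies that the scalar momenta $(p_j)$ and the Christoffel momenta $(k_i)_\lambda$ land on the contractions shown. Once this relabeling is checked and the symmetrisation implicit in the $\mathcal{C}$-type coefficients is accounted for, comparison with the definition of the $C_{(l;n)}$ family closes the argument.
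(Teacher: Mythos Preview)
Your proposal is correct and is precisely the route the paper takes: the paper does not spell out a proof here but simply records the position-space decomposition in the immediately preceding theorem and then states the Fourier-space result, so the ``proof'' is exactly the three-step programme you outline --- split $\nabla_\mu\nabla_\nu\phi=\pd_\mu\pd_\nu\phi-g^{\rho\sigma}\Gamma_{\sigma\mu\nu}\pd_\rho\phi$ to obtain the zero-, one-, and two-Christoffel groups, absorb the derivative-free series into the $(\sqrt{-g}\,[\cdots]g^{\cdots})^{\rho_1\sigma_1\cdots}$ coefficients, and Fourier transform while reindexing the graviton sums as in the proof of Theorem~\ref{Theorem_2}. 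Your identification of the combinatorial $2$ in the single-Christoffel sum and of the shifted starting orders $n\geq 1,2$ from the explicit $\kappa$'s in $\Gamma_{\alpha\mu\nu}$ matches the paper's bookkeeping exactly.
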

\noindent The first part of the expression contributes to the background at order $\okappa{0}$. The second part contributes to $\okappa{1}$ and describes the interaction of a single graviton with a few scalars. The last part contributes to order $\okappa{2}$, and its leading contribution represents the interaction of two gravitons with a few scalars.

The following sophisticated expression gives the resulting expression for the interaction rule.
\begin{align}
    \nonumber \\
    \begin{split}
    &\hspace{40pt}
        \begin{gathered}
            \begin{fmffile}{G4_vertex}
                \begin{fmfgraph*}(30,30)
                    \fmfleft{L1,L2}
                    \fmfright{R1,R2}
                    \fmf{dbl_wiggly}{L1,V}
                    \fmf{dbl_wiggly}{L2,V}
                    \fmf{dashes}{V,R1}
                    \fmf{dashes}{V,R2}
                    \fmfdot{V}
                    \fmffreeze
                    \fmf{dots}{L1,L2}
                    \fmf{dots}{R1,R2}
                    \fmflabel{$\rho_1\sigma_1,k_1$}{L1}
                    \fmflabel{$\rho_n\sigma_n,k_n$}{L2}
                    \fmflabel{$p_1$}{R1}
                    \fmflabel{$p_{2b+a}$}{R2}
                    \fmflabel{$\Upsilon_{a,b}$}{V}
                \end{fmfgraph*}
            \end{fmffile}
        \end{gathered}
        \hspace{15pt}= i\, \kappa^n\, (-1)^{b+1} \Upsilon_{(a,b)} (p_1)_{\alpha_1} (p_2)_{\beta_1}\cdots (p_{2b-1})_{\alpha_b} (p_{2b})_{\beta_b} \\ \\
        & \times \Bigg[ \left( \sqrt{-g}\, g^{\mu\nu}\! g^{\alpha\beta}\! g^{\alpha_1\beta_1} \!\cdots\! g^{\alpha_b\beta_b} \right)^{\rho_1\sigma_1\cdots\rho_{n \!-\! 1} \sigma_{n \!-\! 1} } (k_n)_\alpha (k_n)_\lambda \left[ (\Gamma_{\beta\mu\nu})^{\lambda\rho_n\sigma_n} - (\Gamma_{\mu\beta\nu})^{\lambda\rho_n\sigma_n} \right] \\
        & \hspace{5pt} -\left(\sqrt{-g} \, g^{\mu\nu}\! g^{\alpha\beta}\! g^{\rho\sigma}\! g^{\alpha_1\beta_1}\cdots g^{\rho_b \sigma_b}\right)^{\rho_1\sigma_1 \cdots \rho_{n \!-\! 2} \sigma_{n \!-\! 2}} (k_{n \!-\! 1})_{\lambda_1} (k_n)_{\lambda_2}\\
        &\hspace{10pt}\times\left[ (\Gamma_{\alpha\mu\rho})^{\lambda_1\rho_{n \!-\! 1} \sigma_{n \!-\! 1} } (\Gamma_{\beta\nu\sigma})^{\lambda_2\rho_n\sigma_n} - (\Gamma_{\alpha\mu\nu})^{\lambda_1\rho_{n \!-\! 1} \sigma_{n \!-\! 1} } (\Gamma_{\beta\rho\sigma})^{\lambda_2\rho_n\sigma_n} \right] \\
        & \hspace{5pt} + b \left( \sqrt{-g}\,g^{\mu\nu} \!g^{\alpha\beta} - \frac12\, \sqrt{-g}\, g^{\mu\alpha}\! g^{\nu\beta} - \frac12 \, \sqrt{-g} \, g^{\mu\beta}\! g^{\nu\alpha}\right)^{\rho_1\sigma_1\cdots \rho_n\sigma_n} (p_{2b+1})_\mu (p_{2b+1})_\nu (p_{2b+2})_\alpha (p_{2b+2})_\beta \\
        & \hspace{5pt} - 2 \, b \left( \sqrt{-g}\,g^{\mu\nu} \!g^{\alpha\beta}\! g^{\rho\sigma} - \frac12\, \sqrt{-g}\, g^{\mu\alpha}\! g^{\nu\beta}\! g^{\rho\sigma} - \frac12 \, \sqrt{-g} \, g^{\mu\beta}\! g^{\nu\alpha}\! g^{\rho\sigma} \right)^{\rho_1\sigma_1\cdots \rho_{n \!-\! 1}\sigma_{n \!-\! 1} }\\
        & \hspace{10pt} \times(k_n)_\lambda (\Gamma_{\rho\mu\nu})^{\lambda\rho_n\sigma_n} (p_{2b+1})_\sigma  (p_{2b+2})_\alpha (p_{2b+2})_\beta \\
        & + b\left( \sqrt{-g}\,g^{\mu\nu} \!g^{\alpha\beta}\! g^{\rho\sigma}\, g^{\lambda\tau} - \frac12\, \sqrt{-g}\, g^{\mu\alpha}\! g^{\nu\beta}\! g^{\rho\sigma}\, g^{\lambda\tau} - \frac12 \, \sqrt{-g} \, g^{\mu\beta}\! g^{\nu\alpha}\! g^{\rho\sigma}\, g^{\lambda\tau} \right)^{\rho_1\sigma_1\cdots \rho_{n \!-\! 2}\sigma_{n \!-\! 2} } \\
        &\hspace{10pt} \times (k_{n \!-\! 1})_{\lambda_1} (k_{n \!-\! 2})_{\lambda_2} (\Gamma_{\rho\mu\nu})^{\lambda_1\rho_{n \!-\! 1} \sigma_{n \!-\! 1} } (\Gamma_{\lambda\alpha\beta} )^{\lambda_2\rho_n\sigma_n} (p_{2b+1})_\sigma  (p_{2b+2})_\tau \Bigg] \\
        & +\text{permutations} .
    \end{split}     
\end{align}
Let us note that the second and last terms in the square brackets contribute only to vertices with two or more gravitons.

\subsubsection{Horndeski $G_5$ class}

The final class of Horndeski interaction also describes a non-minimal coupling between the scalar field and gravity. This interaction involves the most derivatives, which still do not result in the higher-order classical field equations. We study this term similar to the others and make its structure explicit.

The following theorem describes the structure of the non-minimal coupling. It does not involve any integration by parts and relies purely on calculating derivatives.
\begin{theorem}
    \begin{align}
        \begin{split}
            G^{\mu\nu} \nabla_\mu \nabla_\nu \phi  =& -\cfrac12 \left[ g^{\mu\nu} g^{\alpha\beta} - g^{\mu\alpha} g^{\nu\beta} - g^{\mu\beta} g^{\nu\alpha} \right] g^{\rho\sigma} \left\{ \pd_\rho \Gamma_{\sigma\alpha\beta} - \pd_\alpha \Gamma_{\rho\beta\sigma} \right\} \pd_\mu \pd_\nu \phi \\
            &+\cfrac12 \left[ g^{\mu\nu} g^{\alpha\beta} - g^{\mu\alpha} g^{\nu\beta} - g^{\mu\beta} g^{\nu\alpha} \right] g^{\rho\sigma} g^{\lambda\tau} \left\{ \pd_\rho \Gamma_{\sigma\alpha\beta} - \pd_\alpha \Gamma_{\rho\beta\sigma} \right\} \Gamma_{\lambda\mu\nu} \pd_\tau \phi \\
            &-\cfrac12 \left[ g^{\mu\nu} g^{\alpha\beta} - g^{\mu\alpha} g^{\nu\beta} - g^{\mu\beta} g^{\nu\alpha} \right] g^{\rho\sigma} g^{\lambda\tau} \left\{ \Gamma_{\rho\alpha\lambda} \Gamma_{\sigma\beta\tau} - \Gamma_{\rho\alpha\beta} \Gamma_{\sigma\lambda\tau} \right\} \pd_\mu \pd_\nu \phi \\
            &+\cfrac12 \left[ g^{\mu\nu} g^{\alpha\beta} - g^{\mu\alpha} g^{\nu\beta} - g^{\mu\beta} g^{\nu\alpha} \right] g^{\rho\sigma} g^{\lambda\tau} g^{\epsilon\omega} \left\{ \Gamma_{\rho\alpha\lambda} \Gamma_{\sigma\beta\tau} - \Gamma_{\rho\alpha\beta} \Gamma_{\sigma\lambda\tau} \right\} \Gamma_{\epsilon\mu\nu} \pd_\omega \phi  .
        \end{split}
    \end{align}
\end{theorem}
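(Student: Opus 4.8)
The plan is to prove the statement as a purely algebraic rewriting: I expand the two factors $G^{\mu\nu}$ and $\nabla_\mu\nabla_\nu\phi$ in terms of the all-lower Christoffel symbols $\Gamma_{\alpha\mu\nu}$, the inverse metric, and ordinary derivatives, in the spirit of the factorisation developed above. As the statement advertises, no integration by parts is required, since both sides are identities between pointwise local expressions; the whole task is to expand, contract indices, and collect.

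First I would expand the scalar Hessian. Since $\nabla_\nu\phi=\pd_\nu\phi$ for a scalar, one has $\nabla_\mu\nabla_\nu\phi=\pd_\mu\pd_\nu\phi-\Gamma^\omega_{\mu\nu}\,\pd_\omega\phi=\pd_\mu\pd_\nu\phi-g^{\epsilon\omega}\Gamma_{\epsilon\mu\nu}\,\pd_\omega\phi$, using $\Gamma^\omega_{\mu\nu}=g^{\epsilon\omega}\Gamma_{\epsilon\mu\nu}$. Second, I would rewrite the Einstein tensor as a trace reversal of the Ricci tensor. A one-line computation using $R_{\alpha\beta}=R_{\beta\alpha}$ gives
\begin{equation*}
G^{\mu\nu}=R^{\mu\nu}-\tfrac12\,g^{\mu\nu}R=-\tfrac12\left[g^{\mu\nu}g^{\alpha\beta}-g^{\mu\alpha}g^{\nu\beta}-g^{\mu\beta}g^{\nu\alpha}\right]R_{\alpha\beta},
\end{equation*}
which is exactly the index combination multiplying every line of the statement. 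Third, I would insert the explicit Ricci tensor from the curvature theorem above, writing $R_{\alpha\beta}$ as the sum of a derivative block $g^{\rho\sigma}\{\pd_\rho\Gamma_{\sigma\alpha\beta}-\pd_\alpha\Gamma_{\rho\beta\sigma}\}$ and a quadratic block $g^{\rho\sigma}g^{\lambda\tau}\{\Gamma_{\rho\alpha\lambda}\Gamma_{\sigma\beta\tau}-\Gamma_{\rho\alpha\beta}\Gamma_{\sigma\lambda\tau}\}$.

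With these ingredients, contracting $G^{\mu\nu}$ with $\nabla_\mu\nabla_\nu\phi$ and distributing the two pieces of the Hessian over the two blocks of $R_{\alpha\beta}$ produces precisely four groups of terms: $\pd_\mu\pd_\nu\phi$ against the derivative block (line one), $\pd_\mu\pd_\nu\phi$ against the quadratic block (line three), and the two matching $-g^{\epsilon\omega}\Gamma_{\epsilon\mu\nu}\pd_\omega\phi$ contributions (lines two and four). Identifying the $g^{\epsilon\omega}\Gamma_{\epsilon\mu\nu}$ factor with the explicit $\Gamma_{\lambda\mu\nu}$ and $\Gamma_{\epsilon\mu\nu}$ contractions appearing in lines two and four then closes the computation.

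The hard part will be the quadratic block. To bring the Ricci tensor into a form built solely from all-lower Christoffel symbols one must eliminate every inverse-metric derivative through $\pd_\mu g^{\alpha\beta}=-g^{\alpha\rho}g^{\beta\sigma}\pd_\mu g_{\rho\sigma}$; this is what converts the naive $\pd\Gamma$ and $\Gamma\Gamma$ terms into the specific brace $\{\Gamma_{\rho\alpha\lambda}\Gamma_{\sigma\beta\tau}-\Gamma_{\rho\alpha\beta}\Gamma_{\sigma\lambda\tau}\}$. I would carry out this step most carefully, tracking which index pair of each Christoffel is traced --- the symmetric lower pair versus an outer index --- and every relative sign, because the two independent quadratic scalars $\Gamma_{\rho\alpha\lambda}\Gamma^{\rho}{}_{\beta}{}^{\lambda}$ and $\Gamma^{\sigma}{}_{\alpha\beta}\,g^{\lambda\tau}\Gamma_{\sigma\lambda\tau}$ enter with opposite signs and are very easy to interchange. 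The remaining bookkeeping --- checking that the bracket $g^{\mu\nu}g^{\alpha\beta}-g^{\mu\alpha}g^{\nu\beta}-g^{\mu\beta}g^{\nu\alpha}$ factors out uniformly from all four lines, and that only the part of each block symmetric in the free pair survives contraction with $\pd_\mu\pd_\nu\phi$ --- is routine.
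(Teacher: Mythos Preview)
Your proposal is correct and follows exactly the approach the paper takes: the paper's own proof consists of the single remark that the identity ``does not involve any integration by parts and relies purely on calculating derivatives,'' and your three steps --- write $G^{\mu\nu}=-\tfrac12[g^{\mu\nu}g^{\alpha\beta}-g^{\mu\alpha}g^{\nu\beta}-g^{\mu\beta}g^{\nu\alpha}]R_{\alpha\beta}$, insert the Ricci formula from the earlier curvature theorem, and expand $\nabla_\mu\nabla_\nu\phi=\pd_\mu\pd_\nu\phi-g^{\epsilon\omega}\Gamma_{\epsilon\mu\nu}\pd_\omega\phi$ --- are precisely that computation spelled out. Your caution about the quadratic $\Gamma\Gamma$ block and the elimination of $\pd_\mu g^{\alpha\beta}$ is well placed, since that is the only step where index relabeling can go wrong.
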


The coupling function $G_5 = G_5(\phi,X)$  shall be smooth enough to admit a power series expansion
\begin{align}
    G_5 = \sum\limits_{a,b} \Psi_{(a,b)} \phi^a X^b.
\end{align}
Similar to the previous cases, we can study the following coupling function without the loss of generality:
\begin{align}
    G_5 = \Psi_{(a,b)} \phi^a X^b.
\end{align}

This results in the following theorem describing the Fourier structure of the first part of the interaction.
\begin{theorem}
    \begin{align}
        \begin{split}
            &\int d^4 x \sqrt{-g} \, G_5 \, G^{\mu\nu} \nabla_\mu \nabla_\nu \phi \\
            & = \sum\limits_{n=1}^\infty \int \prod\limits_{i=1}^n \cfrac{d^4 k_i}{(2\pi)^4} \, h_{\rho_i\sigma_i}(k_i) \!\!\!\! \prod\limits_{j=1}^{a+2b+1} \cfrac{d^4 p_j}{(2\pi)^4} \, \phi(p_j) \, (2\pi)^4 \delta\left( \sum k_i + \sum p_j \right) \\
            & \hspace{5pt} \times \cfrac12\, \kappa^n (-1)^{b+1} \Psi_{(a,b)} \left\{\left(g^{\mu\nu} g^{\alpha\beta} - g^{\mu\alpha} g^{\nu\beta} - g^{\mu\beta} g^{\nu\alpha}\right) g^{\rho\sigma}\right\}^{\rho_1\sigma_1\cdots\rho_{n \!-\! 1} \sigma_{n \!-\! 1}} \\
            & \hspace{10pt} \times(k_n)_\lambda \left\{ (k_n)_\rho (\Gamma_{\sigma\alpha\beta})^{\lambda\rho_n\sigma_n} - (k_n)_\alpha (\Gamma_{\rho\beta\sigma})^{\lambda\rho_n\sigma_n} \right\} (p_1)_{\alpha_1} (p_2)_{\beta_1} \cdots (p_{2b-1})_{\alpha_b} (p_{2b})_{\beta_b} (p_{2b+1})_\mu (p_{2b+1})_{\nu}  \\
            & + \sum\limits_{n=2}^\infty \int \prod\limits_{i=2}^n \cfrac{d^4 k_i}{(2\pi)^4} \, h_{\rho_i\sigma_i}(k_i) \!\!\!\! \prod\limits_{j=1}^{a+2b+1} \cfrac{d^4 p_j}{(2\pi)^4} \, \phi(p_j) \, (2\pi)^4 \delta\left( \sum k_i + \sum p_j \right) \\
            & \hspace{5pt} \times \cfrac12\, \kappa^n (-1)^{b} \Psi_{(a,b)} \left\{\left(g^{\mu\nu} g^{\alpha\beta} - g^{\mu\alpha} g^{\nu\beta} - g^{\mu\beta} g^{\nu\alpha}\right) g^{\rho\sigma} g^{\lambda\tau} \right\}^{\rho_1\sigma_1\cdots\rho_{n \!-\! 2} \sigma_{n \!-\! 2}} (k_{n \!-\! 1})_{\lambda_1} (\Gamma_{\lambda\mu\nu})^{\lambda_1 \rho_{n \!-\! 1} \sigma_{n \!-\! 1}} \\
            & \hspace{10pt} \times(k_n)_\lambda \left\{ (k_n)_\rho (\Gamma_{\sigma\alpha\beta})^{\lambda\rho_n\sigma_n} - (k_n)_\alpha (\Gamma_{\rho\beta\sigma})^{\lambda\rho_n\sigma_n} \right\} (p_1)_{\alpha_1} (p_2)_{\beta_1} \cdots (p_{2b-1})_{\alpha_b} (p_{2b})_{\beta_b} (p_{2b+1})_\tau \\
            & + \sum\limits_{n=2}^\infty \int \prod\limits_{i=2}^n \cfrac{d^4 k_i}{(2\pi)^4} \, h_{\rho_i\sigma_i}(k_i) \!\!\!\! \prod\limits_{j=1}^{a+2b+1} \cfrac{d^4 p_j}{(2\pi)^4} \, \phi(p_j) \, (2\pi)^4 \delta\left( \sum k_i + \sum p_j \right) \\
            & \hspace{5pt} \times \cfrac12\, \kappa^n (-1)^{b+1} \Psi_{(a,b)} \left\{\left(g^{\mu\nu} g^{\alpha\beta} - g^{\mu\alpha} g^{\nu\beta} - g^{\mu\beta} g^{\nu\alpha}\right) g^{\rho\sigma} g^{\lambda\tau} \right\}^{\rho_1\sigma_1\cdots\rho_{n \!-\! 2} \sigma_{n \!-\! 2}} (k_{n \!-\! 1})_{\lambda_1} (k_n)_{\lambda_2}\\
            & \hspace{10pt} \times \left\{ (\Gamma_{\rho\alpha\lambda})^{\lambda_1\rho_{n \!-\! 1}\sigma_{n\!-\!1}}(\Gamma_{\sigma\beta\tau})^{\lambda_2\rho_n\sigma_n} \!- (\Gamma_{\rho\alpha\beta})^{\lambda_1\rho_{n \!-\! 1}\sigma_{n\!-\!1}}(\Gamma_{\sigma\lambda\tau})^{\lambda_2\rho_n\sigma_n} \right\} (p_1)_{\alpha_1}  \cdots (p_{2b})_{\beta_b} (p_{2b+1})_\mu (p_{2b+1})_{\nu} \\
            & + \sum\limits_{n=3}^\infty \int \prod\limits_{i=3}^n \cfrac{d^4 k_i}{(2\pi)^4} \, h_{\rho_i\sigma_i}(k_i) \!\!\!\! \prod\limits_{j=1}^{a+2b+1} \cfrac{d^4 p_j}{(2\pi)^4} \, \phi(p_j) \, (2\pi)^4 \delta\left( \sum k_i + \sum p_j \right) \\
            & \hspace{5pt} \times \cfrac12\, \kappa^n (-1)^{b} \Psi_{(a,b)} \left\{\left(g^{\mu\nu} g^{\alpha\beta} - g^{\mu\alpha} g^{\nu\beta} - g^{\mu\beta} g^{\nu\alpha}\right) g^{\rho\sigma} g^{\lambda\tau} g^{\epsilon\omega} \right\}^{\rho_1\sigma_1\cdots\rho_{n \!-\! 3} \sigma_{n \!-\! 3}} (k_{n \!-\! 2})_{\lambda_1} (k_{n \!-\! 1})_{\lambda_2} (k_n)_{\lambda_3} \\
            & \hspace{10pt} \times\!\! \left\{ \! (\Gamma_{\!\rho\alpha\lambda})^{\lambda_2\rho_{n \!-\! 1}\sigma_{n\!-\!1}} \! (\Gamma_{\!\sigma\beta\tau})^{\lambda_3\rho_n\sigma_n} \!\!-\! (\Gamma_{\!\rho\alpha\beta})^{\lambda_2\rho_{n \!-\! 1}\sigma_{n\!-\!1}} \! (\Gamma_{\!\sigma\lambda\tau})^{\lambda_3\rho_n\sigma_n} \! \right\} \!\! (\Gamma_{\!\epsilon\mu\nu})^{\lambda_1\rho_{n\!-\!2}\sigma_{n\!-\!2}} (p_1)_{\alpha_1} \!\!\cdots\! (p_{2b})_{\beta_b} (p_{2b+1})_\omega .
        \end{split}
    \end{align}
\end{theorem}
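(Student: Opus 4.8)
The plan is to follow the same two-stage pattern used for the $G_2$, $G_3$, and $G_4$ classes: first reduce the integrand to a purely algebraic collection of inverse metrics, volume factors, and Christoffel symbols (with their derivatives factored out), and then Fourier transform term by term. The algebraic reduction is already supplied by the preceding theorem, which writes $G^{\mu\nu}\nabla_\mu\nabla_\nu\phi$ as four contributions built from the bracket $[g^{\mu\nu}g^{\alpha\beta}-g^{\mu\alpha}g^{\nu\beta}-g^{\mu\beta}g^{\nu\alpha}]$ together with zero, one, or two additional inverse metrics $g^{\rho\sigma}$, $g^{\lambda\tau}$, $g^{\epsilon\omega}$ and the corresponding number of Christoffel factors or derivatives thereof. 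Multiplying by $\sqrt{-g}\,\Psi_{(a,b)}\phi^a X^b$ and inserting $X^b=g^{\alpha_1\beta_1}\cdots g^{\alpha_b\beta_b}\,\pd_{\alpha_1}\phi\,\pd_{\beta_1}\phi\cdots\pd_{\alpha_b}\phi\,\pd_{\beta_b}\phi$ produces, in each of the four pieces, a single composite object of the schematic form $\sqrt{-g}\,(\text{bracket})\,g^{\rho\sigma}\cdots g^{\alpha_b\beta_b}$ whose perturbative expansion is exactly one of the $C_{(l;n)}$ families, multiplied by an explicit string of scalar derivatives and by the Christoffel structures carrying the graviton derivatives.

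Next I would Fourier transform. Every $\pd_\mu\phi$ becomes the associated scalar momentum contracted into the free index, and each $\pd_\mu\pd_\nu\phi$ the symmetric product $(p)_\mu(p)_\nu$; these are responsible for the trailing momentum strings $(p_1)_{\alpha_1}\cdots(p_{2b})_{\beta_b}$ together with the $(p_{2b+1})_\mu(p_{2b+1})_\nu$, $(p_{2b+1})_\tau$, or $(p_{2b+1})_\omega$ factors appearing in the four terms. Each Christoffel $\Gamma$ is linear in $\kappa h$ and contributes a single graviton momentum through its internal derivative, written here as $(k)_\lambda(\Gamma)^{\lambda\rho_n\sigma_n}$; a Christoffel that additionally sits under an explicit $\pd_\rho$, as in the combination $\pd_\rho\Gamma_{\sigma\alpha\beta}-\pd_\alpha\Gamma_{\rho\beta\sigma}$, carries a second factor of the \emph{same} graviton momentum, hence the $(k_n)_\lambda(k_n)_\rho$ and $(k_n)_\lambda(k_n)_\alpha$ structures. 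The momentum-conservation delta function arises from the single $x$ integration as usual, and collecting the factors of $i$ from the derivatives against the $-\tfrac12$ or $+\tfrac12$ in front of each coordinate-space term fixes the prefactors $\tfrac12\kappa^n(-1)^{b+1}$ and $\tfrac12\kappa^n(-1)^b$.

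The only genuinely delicate point is the bookkeeping of the perturbative order. Because the composite metric object already carries gravitons through its $C_{(l;n)}$ expansion while each Christoffel carries one more, the lower limit of the outer sum is shifted upward by the number of Christoffel factors: the first piece (one $\pd\Gamma$) begins at $n=1$, the second and third (two Christoffels) at $n=2$, and the last (three Christoffels) at $n=3$, precisely as stated. Achieving a uniform $\kappa^n$ at each order then requires reindexing the expansion of the composite factor against the Christoffel powers, exactly as in the proof of Theorem~\ref{Theorem_2}, and the $\sqrt{-g}$-based recursion guarantees each composite coefficient is computable. I expect this order-matching together with the attendant sign tracking — not any conceptual step — to be the main obstacle, since the factorisation theorem and the recursive relations for the $C_{(l;n)}$ tensors already reduce the infinite-series content to mechanical evaluation.
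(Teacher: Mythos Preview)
Your proposal is correct and follows precisely the approach the paper uses throughout the Horndeski section: invoke the preceding algebraic decomposition of $G^{\mu\nu}\nabla_\mu\nabla_\nu\phi$, absorb $\sqrt{-g}\,\Psi_{(a,b)}\phi^a X^b$ into the composite $C_{(l;n)}$-type factor, Fourier transform with each $\Gamma$ contributing one graviton leg (and $\pd\Gamma$ two powers of the same graviton momentum), and track the lower limits of the $n$-sums by counting Christoffel factors. The paper does not spell out a proof for this particular theorem, but your description matches exactly the mechanical pattern it establishes in the $G_2$--$G_4$ proofs and states in the opening of Section~\ref{Feynman_Rules}.
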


First, we describe the structure involving scalar field derivatives to proceed with the second part of the interaction.
\begin{theorem}
    \begin{align}
        \begin{split}
            & (\square\phi)^3 - 3 \square\phi \left( \nabla_\mu \nabla_\nu \phi \right)^2 + 2 \left( \nabla_\mu \nabla_\nu \phi \right)^3 \\
            & = \left[ g^{\mu\nu} g^{\alpha\beta} g^{\rho\sigma} -3 \, g^{\mu\nu} g^{\alpha\rho} g^{\beta\sigma} + 2 g^{\nu\alpha} g^{\beta\rho} g^{\sigma\mu} \right] \pd_\mu\pd_\nu\phi \, \pd_\alpha\pd_\beta\phi\, \pd_\rho\pd_\sigma \phi \\
            & \hspace{10pt}+3 \left[ g^{\mu\nu} g^{\alpha\beta} g^{\rho\sigma} -3 \, g^{\mu\nu} g^{\alpha\rho} g^{\beta\sigma} + 2 g^{\nu\alpha} g^{\beta\rho} g^{\sigma\mu} \right] g^{\omega\tau} \Gamma_{\omega\mu\nu} \pd_\tau\phi \, \pd_\alpha\pd_\beta\phi\, \pd_\rho\pd_\sigma \phi \\
            & \hspace{10pt} +3 \left[ g^{\mu\nu} g^{\alpha\beta} g^{\rho\sigma} -3 \, g^{\mu\nu} g^{\alpha\rho} g^{\beta\sigma} + 2 g^{\nu\alpha} g^{\beta\rho} g^{\sigma\mu} \right] g^{\omega_1\tau_1} g^{\omega_2\tau_2} \Gamma_{\omega_1\mu\nu} \Gamma_{\omega_2\alpha\beta} \pd_{\tau_1} \phi \, \pd_{\tau_2}\phi\, \pd_\rho\pd_\sigma \phi \\
            & \hspace{10pt} + \left[ g^{\mu\nu} g^{\alpha\beta} g^{\rho\sigma} -3 \, g^{\mu\nu} g^{\alpha\rho} g^{\beta\sigma} + 2 g^{\nu\alpha} g^{\beta\rho} g^{\sigma\mu} \right] g^{\omega_1\tau_1} g^{\omega_2\tau_2} g^{\omega_3\tau_3} \Gamma_{\omega_1\mu\nu} \Gamma_{\omega_2\alpha\beta} \Gamma_{\omega_3\rho\sigma} \pd_{\tau_1}\phi \, \pd_{\tau_2}\phi\, \pd_{\tau_3} \phi .
        \end{split}
    \end{align}
\end{theorem}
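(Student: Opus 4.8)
The plan is to recast the claim as a single algebraic tensor identity and then prove it by direct substitution and expansion. The key point is that the entire left-hand side is a cubic invariant of the covariant Hessian $H_{\mu\nu} \overset{\text{def}}{=} \nabla_\mu \nabla_\nu \phi$: one has $\square\phi = g^{\mu\nu} H_{\mu\nu}$, $(\nabla_\mu\nabla_\nu\phi)^2 = H_{\mu\nu} H^{\mu\nu}$ and $(\nabla_\mu\nabla_\nu\phi)^3 = H_{\mu\nu} H^{\nu\rho} H_\rho{}^\mu$. Writing the common bracket as
\begin{align}
    T^{\mu\nu\alpha\beta\rho\sigma} = g^{\mu\nu} g^{\alpha\beta} g^{\rho\sigma} - 3\, g^{\mu\nu} g^{\alpha\rho} g^{\beta\sigma} + 2\, g^{\nu\alpha} g^{\beta\rho} g^{\sigma\mu},
\end{align}
a brief index chase shows that the three monomials reproduce exactly $(\square\phi)^3$, $\square\phi\,(\nabla_\mu\nabla_\nu\phi)^2$ and $(\nabla_\mu\nabla_\nu\phi)^3$, so that the left-hand side equals $T^{\mu\nu\alpha\beta\rho\sigma} H_{\mu\nu} H_{\alpha\beta} H_{\rho\sigma}$. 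As a sanity check, this combination is $6$ times the third elementary symmetric polynomial in the eigenvalues of $H^\mu{}_\nu$, which is the structure responsible for the second-order field equations. The whole theorem then reduces to substituting the decomposition of $H_{\mu\nu}$ into this one contraction.

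That decomposition is supplied by the scalar nature of $\phi$. Since $\nabla_\mu \phi = \pd_\mu \phi$, a single covariant derivative acting on the covector $\pd_\nu \phi$ gives
\begin{align}
    H_{\mu\nu} = \pd_\mu \pd_\nu \phi - \Gamma^\lambda_{\mu\nu} \pd_\lambda \phi = \pd_\mu \pd_\nu \phi - g^{\omega\tau} \Gamma_{\omega\mu\nu} \pd_\tau \phi,
\end{align}
where I used the lower-index Christoffel symbol of the earlier theorem. Both pieces $S_{\mu\nu} = \pd_\mu \pd_\nu \phi$ and $C_{\mu\nu} = g^{\omega\tau} \Gamma_{\omega\mu\nu} \pd_\tau \phi$ are symmetric in $\mu\nu$, a fact I would use throughout. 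Inserting $H_{\mu\nu} = S_{\mu\nu} - C_{\mu\nu}$ into $T^{\mu\nu\alpha\beta\rho\sigma} H_{\mu\nu} H_{\alpha\beta} H_{\rho\sigma}$ and expanding the cube produces contributions sorted by the number $k \in \{0,1,2,3\}$ of Christoffel factors; these four groups are meant to match the four displayed lines, with the $g^{\omega_i\tau_i} \Gamma_{\omega_i \cdots}$ factors and the contracted derivatives $\pd_{\tau_i}\phi$ coming directly from the $C_{\mu\nu}$ insertions.

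The hard part will be the bookkeeping of this expansion, because $T^{\mu\nu\alpha\beta\rho\sigma}$ is \emph{not} symmetric under interchange of the three index pairs $(\mu\nu)$, $(\alpha\beta)$, $(\rho\sigma)$: the second and third monomials single out particular pairs. Hence the collapse of the eight expansion terms into four is not a blind binomial identity, and one must verify explicitly how the distinct slot placements combine. Concretely, I would evaluate the four elementary scalar contractions $T\!\cdot\!SSS$, $T\!\cdot\!CSS$, $T\!\cdot\!CCS$ and $T\!\cdot\!CCC$ once each, using the pair-symmetry of $S$ and $C$ together with the cyclicity of the resulting traces, and then reconcile their coefficients and signs with the right-hand side. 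This placement-and-sign accounting is exactly where an error would hide, so it is the step I would treat most carefully; everything else is routine contraction of indices with no integration by parts required.
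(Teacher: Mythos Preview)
The paper does not supply a proof of this theorem; as with the parallel $G_5$ identity for $G^{\mu\nu}\nabla_\mu\nabla_\nu\phi$, it is treated as a direct computation that ``does not involve any integration by parts and relies purely on calculating derivatives.'' Your plan---recognise the left-hand side as $T^{\mu\nu\alpha\beta\rho\sigma}H_{\mu\nu}H_{\alpha\beta}H_{\rho\sigma}$, insert $H_{\mu\nu}=\pd_\mu\pd_\nu\phi-g^{\omega\tau}\Gamma_{\omega\mu\nu}\pd_\tau\phi$, and expand---is exactly that computation, and your flag that the slot asymmetry of $T$ makes the $1,3,3,1$ pattern with fixed placements the delicate step (rather than a blind binomial) is precisely the point to watch.
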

\noindent The following theorem gives the perturbative structure of the second part of the interaction.
\begin{theorem}
    \begin{align}
        \begin{split}
            & \int d^4 x \sqrt{-g} \left[-\cfrac{1}{6} \, G_{5,X} \,\left\{ (\square\phi)^3 - 3 \square\phi \left( \nabla_\mu \nabla_\nu \phi \right)^2 + 2 \left( \nabla_\mu \nabla_\nu \phi \right)^3 \right\} \right] \\
            &=\sum\limits_{n=0}^\infty\int\prod\limits_{i=1}^n \cfrac{d^4 k_i}{(2\pi)^4} h_{\rho_i\sigma_i}(k_i) \prod\limits_{j=1}^{a+2b+1} \phi(p_j) (2\pi)^4 \delta\left(\sum k_i + \sum p_j\right)\\
            &\hspace{5pt}\times \cfrac16\,\kappa^n  (-1)^b\, b\, \Psi_{(a,b)} \left\{\sqrt{-g} \left[ g^{\mu\nu} g^{\alpha\beta} g^{\rho\sigma} -3 \, g^{\mu\nu} g^{\alpha\rho} g^{\beta\sigma} + 2 g^{\nu\alpha} g^{\beta\rho} g^{\sigma\mu} \right]\right\}^{\rho_1\sigma_1\cdots\rho_n\sigma_n}\\
            &\hspace{10pt}\times (p_1)_{\alpha_1} (p_2)_{\beta_2} \cdots (p_{2b \!-\!1})_{\alpha_b} (p_{2b})_{\beta_b} (p_{2b \!+\! 1})_\mu (p_{2b \!+\! 1})_\nu (p_{2b \!+\! 2})_\alpha (p_{2b \!+\! 2})_\beta (p_{2b \!+\! 3})_\rho (p_{2b \!+\! 3})_\sigma\\
            &+\sum\limits_{n=1}^\infty\int\prod\limits_{i=1}^n \cfrac{d^4 k_i}{(2\pi)^4} h_{\rho_i\sigma_i}(k_i) \prod\limits_{j=1}^{a+2b+1} \phi(p_j) (2\pi)^4 \delta\left(\sum k_i + \sum p_j\right) \\
            &\hspace{5pt}\times \cfrac13\,\kappa^n  (-1)^b\, b\, \Psi_{(a,b)} \left\{\sqrt{-g} \left[ g^{\mu\nu} g^{\alpha\beta} g^{\rho\sigma} -3 \, g^{\mu\nu} g^{\alpha\rho} g^{\beta\sigma} + 2 g^{\nu\alpha} g^{\beta\rho} g^{\sigma\mu} \right] g^{\omega\tau} \right\}^{\rho_1\sigma_1\cdots\rho_{n \!-\! 1} \sigma_{n \!-\! 1}} \\
            &\hspace{10pt}\times (p_1)_{\alpha_1} (p_2)_{\beta_2} \cdots (p_{2b \!-\!1})_{\alpha_b} (p_{2b})_{\beta_b} (k_n)_\lambda \left(\Gamma_{\omega\mu\nu}\right)^{\lambda\rho_n\sigma_n} (p_{2b \!+\! 1})_\tau (p_{2b \!+\! 2})_\alpha (p_{2b \!+\! 2})_\beta (p_{2b \!+\! 3})_\rho (p_{2b \!+\! 3})_\sigma
        \end{split}
    \end{align}
    \begin{align*}
        &+\sum\limits_{n=2}^\infty\int\prod\limits_{i=1}^n \cfrac{d^4 k_i}{(2\pi)^4} h_{\rho_i\sigma_i}(k_i) \prod\limits_{j=1}^{a+2b+1} \phi(p_j) (2\pi)^4 \delta\left(\sum k_i + \sum p_j\right) \\
        &\hspace{5pt}\times \cfrac13\,\kappa^n  (-1)^b\, b\, \Psi_{(a,b)} \left\{\sqrt{-g} \left[ g^{\mu\nu} g^{\alpha\beta} g^{\rho\sigma} -3 \, g^{\mu\nu} g^{\alpha\rho} g^{\beta\sigma} + 2 g^{\nu\alpha} g^{\beta\rho} g^{\sigma\mu} \right] g^{\omega_1\tau_1} g^{\omega_2\tau_2} \right\}^{\rho_1\sigma_1\cdots\rho_{n \!-\! 2} \sigma_{n \!-\! 2}} \\
        &\hspace{10pt}\times (p_1)_{\alpha_1} (p_2)_{\beta_2} \cdots (p_{2b \!-\!1})_{\alpha_b} (p_{2b})_{\beta_b} (k_{n \!-\! 1})_{\lambda_1} (k_n)_{\lambda_2} \left(\Gamma_{\omega_1\mu\nu}\right)^{\lambda_1\rho_{n \!-\! 1}\sigma_{n \!-\! 1}} \left(\Gamma_{\omega_2\alpha\beta}\right)^{\lambda_2\rho_n\sigma_n} \\
        & \hspace{15pt} \times (p_{2b \!+\! 1})_{\tau_1} (p_{2b \!+\! 2})_{\tau_2} (p_{2b \!+\! 3})_\rho (p_{2b \!+\! 3})_\sigma \\
        &+\sum\limits_{n=3}^\infty\int\prod\limits_{i=1}^n \cfrac{d^4 k_i}{(2\pi)^4} h_{\rho_i\sigma_i}(k_i) \prod\limits_{j=1}^{a+2b+1} \phi(p_j) (2\pi)^4 \delta\left(\sum k_i + \sum p_j\right) \\
        &\hspace{5pt}\times \cfrac16\,\kappa^n  (-1)^b\, b\, \Psi_{(a,b)} \left\{\sqrt{-g} \left[ g^{\mu\nu} g^{\alpha\beta} g^{\rho\sigma} -3 \, g^{\mu\nu} g^{\alpha\rho} g^{\beta\sigma} + 2 g^{\nu\alpha} g^{\beta\rho} g^{\sigma\mu} \right] g^{\omega_1\tau_1} g^{\omega_2\tau_2} g^{\omega_3\tau_3} \right\}^{\rho_1\sigma_1\cdots\rho_{n \!-\! 3} \sigma_{n \!-\! 3}} \\
        &\hspace{10pt}\times (p_1)_{\alpha_1} (p_2)_{\beta_2} \cdots (p_{2b \!-\!1})_{\alpha_b} (p_{2b})_{\beta_b} (k_{n \!-\! 2})_{\lambda_1} (k_{n \!-\! 1})_{\lambda_2} (k_n)_{\lambda_3} \\
        & \hspace{15pt} \times \left(\Gamma_{\omega_1\mu\nu}\right)^{\lambda_1\rho_{n \!-\! 2}\sigma_{n \!-\! 2}} \left(\Gamma_{\omega_2\alpha\beta}\right)^{\lambda_2\rho_{n \!-\! 1}\sigma_{n \!-\! 1} } \left(\Gamma_{\omega_3\rho\sigma}\right)^{\lambda_3\rho_n \sigma_n } (p_{2b \!+\! 1})_{\tau_1} (p_{2b \!+\! 2})_{\tau_2} (p_{2b \!+\! 3})_{\tau_3} .
    \end{align*}
\end{theorem}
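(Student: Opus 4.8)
The plan is to reduce this statement to the identity established in the preceding theorem and then to dress it with the coupling, the volume factor, and the Fourier transform, exactly as was done for the $G_3$ and $G_4$ classes. The preceding theorem already expresses the cubic combination $(\square\phi)^3 - 3\,\square\phi\,(\nabla_\mu\nabla_\nu\phi)^2 + 2\,(\nabla_\mu\nabla_\nu\phi)^3$ as four groups, distinguished by how many Christoffel symbols they carry (zero, one, two, three). Each group is built from the common tensor $[\,g^{\mu\nu}g^{\alpha\beta}g^{\rho\sigma} - 3\,g^{\mu\nu}g^{\alpha\rho}g^{\beta\sigma} + 2\,g^{\nu\alpha}g^{\beta\rho}g^{\sigma\mu}\,]$ together with the appropriate number of factors $g^{\omega\tau}\Gamma_{\omega\cdots}$ and ordinary scalar derivatives. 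Since that decomposition is already in hand, the remaining work is organisational rather than conceptual.

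First I would fix the single monomial $G_5 = \Psi_{(a,b)}\,\phi^a X^b$, so that $G_{5,X} = b\,\Psi_{(a,b)}\,\phi^a X^{b-1}$; this supplies the overall factor $b\,\Psi_{(a,b)}$ and, through $X^{b-1} = (g^{\mu\nu}\pd_\mu\phi\,\pd_\nu\phi)^{b-1}$, additional inverse-metric factors each contracted with a pair of scalar momenta. Multiplying by $\sqrt{-g}$ and the overall $-\tfrac16$, every one of the four groups becomes $\sqrt{-g}$ times a product of inverse metrics times a fixed number of all-lower Christoffel symbols times scalar fields. I would then invoke the factorisation theorem and Theorem \ref{Theorem_2}: every block $\sqrt{-g}\,g^{\mu_1\nu_1}\cdots g^{\mu_l\nu_l}$ is replaced by its perturbative series, whose $n$-th coefficient is the plain $C_{(l;n)}$-tensor computed recursively.

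The Christoffel symbols are treated separately. By the Christoffel structure theorem each all-lower $\Gamma_{\rho\mu\nu}$ is \emph{exactly} linear in $\kappa$, so it contributes a single graviton leg of the form $(k_i)_\lambda\,(\Gamma_{\rho\mu\nu})^{\lambda\rho_i\sigma_i}$ and raises the minimum perturbative order by one. Consequently a group carrying $m$ Christoffel symbols begins at order $\kappa^m$, which is precisely the origin of the staggered lower limits $n\ge 0,1,2,3$ on the four sums in the statement. Passing to momentum space by $\pd_\mu\to i\,(\text{momentum})_\mu$ then turns each second derivative $\pd_\mu\pd_\nu\phi$ into $-(p)_\mu(p)_\nu$, each surviving first derivative into $i\,(p)_\tau$, and feeds the graviton momenta $k_i$ into the Christoffel legs; the product of fields collapses to a convolution carrying $(2\pi)^4\delta(\sum k_i + \sum p_j)$, and reading off the coefficient of $\kappa^n$ yields the four contributions.

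The hard part will be the purely combinatorial bookkeeping rather than any single estimate. One must assign every scalar and graviton momentum to the correct tensor slot, verify that the $m$ Christoffel legs consume exactly the gravitons $k_{n-m+1},\dots,k_n$, and --- most delicately --- check that the group weights $1,3,3,1$ inherited from the preceding theorem combine with the $-\tfrac16$ prefactor, with the accumulated powers of $i$ from the scalar derivatives, and with the symmetry factors of the repeated $\nabla_\mu\nabla_\nu\phi$ structures to reproduce the stated rational coefficients $\tfrac16,\tfrac13,\tfrac13,\tfrac16$ and the sign $(-1)^b$. No new idea beyond the $G_4$ derivation is required; the computation is simply longer, and it is here that sign and factor errors are easiest to make.
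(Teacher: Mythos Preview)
Your proposal is correct and follows exactly the approach the paper uses throughout Section~\ref{Feynman_Rules}: the paper does not give a separate proof for this theorem but treats it as the direct Fourier-space transcription of the preceding structural identity for $(\square\phi)^3 - 3\,\square\phi\,(\nabla_\mu\nabla_\nu\phi)^2 + 2\,(\nabla_\mu\nabla_\nu\phi)^3$, dressed with $\sqrt{-g}$, the $G_{5,X}=b\,\Psi_{(a,b)}\phi^a X^{b-1}$ prefactor, and the perturbative expansions of the inverse-metric blocks via the $C_{(l;n)}$ tensors. Your identification of the staggered lower limits $n\ge 0,1,2,3$ with the Christoffel count and your flagging of the coefficient bookkeeping as the only delicate step are precisely in line with how the paper handles the analogous $G_3$ and $G_4$ cases.
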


The theorems explain the entire structure of the $G_5$ interaction class. However, we will not explicitly express the corresponding interaction vertex. Because the expression is exceptionally long, it would not be constructive.

\subsection{Dirac spinors}

The standard way to describe fermions in curved spacetime uses vierbein and $\gamma$ matrices \cite{Shapiro:2016pfm}. Firstly, $\gamma$ matrices are introduced to construct a representation of the Lorentz algebra. Dirac matrices satisfy the following relations:
\begin{align}
  \{ \gamma_m , \gamma_n \} \overset{\text{note}}{=} \gamma_m \gamma_n + \gamma_n \gamma_m &= 2\, \eta_{mn} \, , & \gamma^0 \gamma^m \gamma^0 & = (\gamma^m)^+ \,.
\end{align}
They form the following representation of the Lorentz algebra:
\begin{align}\label{Lorentz_algebra_spinor_representation}
  S_{mn} & \overset{\text{def}}{=}  \cfrac{i}{4} \,[\gamma_m , \gamma_n] \, , &  [S_{mn}, S_{ab}] &= -i \left( \eta_{ma} S_{nb} - \eta_{mb} S_{na} +\eta_{nb} S_{ma} -\eta_{na} S_{mb} \right) .
\end{align}
The standard vector representation of the Lorentz algebra is given by $J_{mn}$ matrices:
\begin{align}
  \left( J_{mn} \right)_{ab} = & \,i\, (\eta_{ma} \eta_{nb} - \eta_{mb} \eta_{na} ) \,, & [J_{mn}, J_{ab}] &= -i \left( \eta_{ma} J_{nb} - \eta_{mb} J_{na} +\eta_{nb} J_{ma} -\eta_{na} J_{mb} \right) .
\end{align}
The following relation connects these representations:
\begin{align}
  [\gamma_a , S_{mn} ] = \left( J_{mn}\right)_{ab} \,\gamma^b \,.
\end{align}
One can define Dirac spinors $\psi$ and $\overline{\psi}=\psi^+\gamma^0$ subject to Lorentz group action
\begin{align}\label{spinor_transformations_flat_spacetime}
        \delta\psi &= \cfrac{i}{2} \, \omega^{ab}\, S_{ab}\, \psi  \, ,& \delta\overline\psi &= -\cfrac{i}{2} \, \omega^{ab}\, \overline\psi S_{ab} \,,
\end{align}
and to construct the well-known Lorentz invariant action in a flat spacetime
\begin{align}
  \mathcal{A}_{s=\frac12,m\not =0} = \int d^4 x \left[\cfrac{i}{2} \left(\, \overline\psi\, \gamma^m \pd_m \psi - \pd_m \overline\psi\, \gamma^m \psi \right) - m \,\overline\psi \, \psi\right].
\end{align}

The action is generalised for curved spacetime via vierbein $\mathfrak{e}_m{}^\mu$. In a curved spacetime, one relates an arbitrary frame with a local inertial frame via vierbein $\mathfrak{e}_m{}^\mu$. Its Latin indices are subjected to the Lorentz transformation; Greek indices are subjected to the general coordinate transformations. In turn, vierbein satisfies the following normalisation condition:
\begin{align}
  \mathfrak{e}_m{}^\mu\, \mathfrak{e}_n{}^\nu\, g_{\mu\nu} =& \eta_{mn} , &\mathfrak{e}_m{}^\mu\, \mathfrak{e}_n{}^\nu\, \eta^{mn} =& g^{\mu\nu} .
\end{align}
They define an anti-symmetric spin-connection $(\Gamma_\mu)_{ab} = -(\Gamma_\mu)_{ba}$ that is related to the Christoffel symbols.
\begin{align}\label{spin-connection}
  \left(\Gamma_\mu \right)_{ab} = \mathfrak{e}_{a\alpha} \mathfrak{e}_b{}^\beta \Gamma^\alpha_{\mu\beta} + \mathfrak{e}_{a\sigma}\pd_\mu\mathfrak{e}_b{}^\sigma .
\end{align}
Since the vierbein $\mathfrak{e}_m{}^\mu$ connects Lorentz transformations and general coordinate transformations, one uses it to manipulate indices. This construction forms a set of $\gamma$ matrices that are subject to the general coordinate transformation group:
\begin{align}
  \gamma^\mu = \mathfrak{e}_m{}^\mu \,\gamma^m\,.
\end{align}
Spinor transformations \eqref{spinor_transformations_flat_spacetime} are generalized as follows
\begin{align}
    \delta\psi &= \cfrac{i}{2} \, \omega^{\mu\nu}\, S_{\mu\nu}\, \psi  = \cfrac{i}{2} \, \omega^{ab}(x)\, S_{ab}\, \psi \, ,& \delta\overline\psi &= -\cfrac{i}{2} \, \omega^{\mu\nu}\, \overline\psi\, S_{\mu\nu} = -\cfrac{i}{2} \, \omega^{ab}(x)\, \overline\psi\, S_{ab}\,.
\end{align}
This generalisation promotes transformation \eqref{spinor_transformations_flat_spacetime} to gauge transformations.

Regular derivatives are replaced with covariant derivatives defined as follows:
\begin{align}\label{spinor_transformations_curved_spacetime}
    \nabla_\mu \psi = \pd_\mu \psi -\cfrac{i}{2}\, \left(\Gamma_\mu\right)^{ab}\, S_{ab}\, \psi \, , & \nabla_\mu \overline\psi = \pd_\mu \overline\psi +\cfrac{i}{2}\, \left(\Gamma_\mu\right)^{ab}\,\overline\psi\, S_{ab} \, . 
\end{align}
Covariant derivatives defined in such a way satisfy the following relations:
\begin{align}
  \nabla_\mu (\,\overline\psi\, \psi ) &= \pd_\mu (\,\overline\psi\, \psi)  \, & \nabla_\mu \left(\,\overline\psi\, \gamma^\mu \psi \right) &= \pd_\mu \left(\, \overline\psi\, \gamma^\mu \psi\right) + \Gamma^\alpha_{\mu\beta} \, \overline\psi\, \gamma^\beta \,\psi \,.
\end{align}
This construction produces the following generalisation of the Dirac action
\begin{align}
    \mathcal{A}_{s=1/2} = \int d^4 x \left[ \sqrt{-g} \,\mathfrak{e}_m{}^\mu \, \frac12\, \left( i\,\overline{\psi}\, \gamma^m \,\nabla_\mu \psi - i\, \nabla_\mu\overline{\psi} \,\gamma^m \,\psi\right) - m\, \sqrt{-g}\, \overline{\psi} \,\psi \right] .
\end{align}
Here $m$ is the fermion mass,  $\mathfrak{e}_m{}^\mu$ is the vierbein, and $\nabla$ is the fermionic covariant derivative. We shall note that we omit a few steps in deriving this action. Namely, the action does not contain either Christoffel or spin connection. The reason for this is discussed in detail in \cite{Latosh:2022ydd}.

The following theorem describes the perturbative structure of the Dirac action.
\begin{theorem}
    \begin{align}
        \begin{split}
            \mathcal{A}_{s=1/2}  =& \int d^4 x \left[ \sqrt{-g} \, \mathfrak{e}_m{}^\mu \, \frac12\, \left( i\,\overline{\psi}\, \gamma^m \,\pd_\mu \psi - i\, \pd_\mu\overline{\psi} \,\gamma^m \,\psi\right) - m\, \sqrt{-g}\, \overline{\psi} \,\psi \right]\\
            =& \sum\limits_{n=0}^\infty \int \cfrac{d^4 p_1}{(2\pi)^4} \cfrac{d^4 p_2}{(2\pi)^4} \prod\limits_{i=0}^n \cfrac{d^4 k_i}{(2\pi)^4} \,(2\pi)^4 \delta\left(p_1+p_2+\sum k_i\right) h_{\rho_1\sigma_1}(k_1) \cdots h_{\rho_n\sigma_n} (k_n)\\
            &\times \kappa^n~ \overline{\psi}(p_2) \left[ \left(\sqrt{-g}\, \mathfrak{e}_m{}^\mu\right)^{\rho_1\sigma_1\cdots\rho_n\sigma_n}\,\frac12 \, (p_1-p_2)_\mu \gamma^m - \left(\sqrt{-g}\right)^{\rho_1\sigma_1\cdots\rho_n\sigma_n} m \right] \psi(p_1).
        \end{split}
    \end{align}
\end{theorem}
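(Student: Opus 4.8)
The plan is to begin from the already-simplified action displayed in the first line of the statement, in which the fermionic covariant derivatives have been traded for ordinary derivatives (the cancellation of the spin-connection pieces being imported from the cited reference). From there the strategy mirrors the factorisation-then-Fourier scheme used for the scalar and Horndeski sectors: first isolate the two derivative-free infinite series that carry all of the $\kappa$ dependence, namely $\sqrt{-g}\,\mathfrak{e}_m{}^\mu$ in the kinetic term and $\sqrt{-g}$ in the mass term, and treat the spinor bilinears together with their derivatives as the finite part guaranteed by the factorisation theorem.

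Concretely, I would rewrite the action as
\begin{align*}
\mathcal{A}_{s=1/2} = \int d^4 x \left[ \left(\sqrt{-g}\,\mathfrak{e}_m{}^\mu\right)\,\frac{i}{2}\left(\overline\psi\,\gamma^m\pd_\mu\psi - \pd_\mu\overline\psi\,\gamma^m\psi\right) - m\,\left(\sqrt{-g}\right)\overline\psi\,\psi\right],
\end{align*}
which is already in factorised form. The flat-space matrices $\gamma^m$ carry no $\kappa$, so the only objects to expand are the two bracketed factors. Using the perturbative notation $(X)^{\rho_1\sigma_1\cdots\rho_n\sigma_n}$, the factor $\sqrt{-g}\,\mathfrak{e}_m{}^\mu$ is governed by the $\mathcal{C}_E$-tensor established earlier and $\sqrt{-g}$ by the $\mathcal{C}$-tensor; each contributes $\kappa^n$ at order $n$ accompanied by $n$ factors of $h_{\rho_i\sigma_i}$.

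Next I would pass to momentum space, writing $\psi$, $\overline\psi$ and every graviton $h_{\rho_i\sigma_i}$ as plane-wave integrals with a common sign convention in the exponent so that $p_1$, $p_2$ and the $k_i$ all enter the momentum balance additively. The derivative $\pd_\mu$ acting on $\psi(p_1)$ and on $\overline\psi(p_2)$ produces factors proportional to $(p_1)_\mu$ and $(p_2)_\mu$; the antisymmetric Hermitian combination $\frac{i}{2}(\overline\psi\,\gamma^m\pd_\mu\psi - \pd_\mu\overline\psi\,\gamma^m\psi)$ then collapses to $\frac12\,(p_1-p_2)_\mu\,\gamma^m$, which is exactly the symmetric momentum structure appearing in the statement. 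The $\int d^4 x$ integration over the product of plane waves yields $(2\pi)^4\delta\left(p_1+p_2+\sum k_i\right)$, and collecting the finite derivative data against the expansion coefficients at each fixed power of $\kappa$ assembles the quoted double sum.

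The main obstacle is bookkeeping rather than conceptual. One must confirm that $\sqrt{-g}\,\mathfrak{e}_m{}^\mu$ and $\sqrt{-g}$ are genuinely the only $\kappa$-dependent factors, so that the finite spinor-derivative structure factors out cleanly, and one must track signs carefully enough that the Hermitian kinetic term yields precisely $\frac12(p_1-p_2)_\mu$ and not a sum of the two momenta. The index-placement subtlety in matching $\sqrt{-g}\,\mathfrak{e}_m{}^\mu$ to the previously defined $\mathcal{C}_E$-tensor also deserves attention, but once the momentum and contraction conventions are fixed the remaining manipulations are routine.
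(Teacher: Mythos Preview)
Your proposal is correct and follows precisely the factorisation-then-Fourier scheme that the paper lays out at the start of Section~\ref{Feynman_Rules} for all of its interaction rules; the paper itself does not spell out a proof for this particular theorem, but your plan (isolate $\sqrt{-g}\,\mathfrak{e}_m{}^\mu$ and $\sqrt{-g}$ as the only $\kappa$-carrying factors, expand via the $C_E$ and $C$ tensors, Fourier-transform the spinor bilinear so that the Hermitian combination produces $\tfrac12(p_1-p_2)_\mu$) is exactly the intended route. Your flagged caveats about sign conventions and the index placement of $\mathfrak{e}_m{}^\mu$ versus the $C_E$ definition are the only points requiring care, and you have identified them.
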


The background contribution describes the standard fermion propagator:
\begin{align}
  \begin{gathered}
    \begin{fmffile}{Diag02}
      \begin{fmfgraph}(30,30)
        \fmfleft{L}
        \fmfright{R}
        \fmf{fermion}{L,R}
      \end{fmfgraph}
    \end{fmffile}
  \end{gathered}
  \hspace{10pt}= i ~ \cfrac{p_m \,\gamma^m + m}{p^2 - m^2} \,. 
\end{align}
The other terms describe the following interaction rules:
\begin{align}\label{rule_F_1}
    \nonumber \\
    \begin{gathered}
        \begin{fmffile}{FR_F_1}
            \begin{fmfgraph*}(30,30)
                \fmfleft{L1,L2}
                \fmfright{R1,R2}
                \fmf{dbl_wiggly}{L1,V}
                \fmf{dbl_wiggly}{L2,V}
                \fmfdot{V}
                \fmf{fermion}{R1,V}
                \fmf{fermion}{V,R2}
                \fmffreeze
                \fmf{dots}{L1,L2}
                \fmflabel{$p_1$}{R1}
                \fmflabel{$p_2$}{R2}
                \fmflabel{$\rho_1\sigma_1$}{L1}
                \fmflabel{$\rho_n\sigma_n$}{L2}
            \end{fmfgraph*}
        \end{fmffile}
    \end{gathered}
    = i\, \kappa^n \,\left[  \cfrac12\, \left(\sqrt{-g}\, \mathfrak{e}_m{}^\mu\right)^{\rho_1\sigma_1\cdots\rho_n\sigma_n} \, (p_1-p_2)_\mu \gamma^m - \left(\sqrt{-g}\right)^{\rho_1\sigma_1\cdots\rho_n\sigma_n} m \right]. \\ \nonumber
\end{align}
This expression also holds in the $SU(N)$ Yang-Mills model considered below.

\subsection{Proca field}

For quantum field theory, the existence of a vector field mass is crucial. A vector field with zero mass admits a gauge symmetry, which means gauge fixing must be done. On the other hand, a vector field with a non-vanishing mass, the Proca field, has no gauge symmetry, and the gauge fixin is not an issue. We begin the discussion with the Proca case for the sake of simplicity.

The action describing a Proca field reads:
\begin{align}
    \mathcal{A}_{s=1,m} =& \int d^4 x \sqrt{-g} \left[ -\cfrac14\, F_{\mu\nu}F^{\mu\nu} + \cfrac{m^2}{2} \, A_\mu \,A^\mu \right]. 
\end{align}
Here $F_{\mu\nu} = \pd_\mu A_\nu - \pd_\nu A_\mu$ is the field tensor, $m$ is the vector field mass. The action admits the following factorisation:
\begin{align}
    \mathcal{A}_{s=1,m} =& \int d^4 x \left[ -\cfrac14\, \sqrt{-g}\, g^{\mu\alpha} g^{\nu\beta} F_{\mu\nu}F^{\alpha\beta} +  \sqrt{-g}\, g^{\mu\nu} \cfrac{m^2}{2} \, A_\mu \,A_\nu \right].
\end{align}
The following theorem describes the Fourier structure of the action.

\begin{theorem}
    \begin{align}
        \begin{split}
            \mathcal{A}_{s=1,m}=& \sum\limits_{n=0}^\infty \int \cfrac{d^4 p_1}{(2\pi)^4} \cfrac{d^4 p_2}{(2\pi)^4} \prod\limits_{i=0}^n \cfrac{d^4 k_i}{(2\pi)^4} \,(2\pi)^4 \delta\left(p_1+p_2+\sum k_i\right) h_{\rho_1\sigma_1}(k_1) \cdots h_{\rho_n\sigma_n} (k_n)  \\
            &\times \kappa^n \,\Bigg[\cfrac14\, \left(\sqrt{-g}\, g^{\mu\alpha}g^{\nu\beta}\right)^{\rho_1\sigma_1\cdots\rho_n\sigma_n}~(p_1)_{\mu_1} (p_2)_{\mu_2}  ~\big(F_{\mu\nu}\big)^{\mu_1\lambda_1}\big(F_{\alpha\beta}\big)^{\mu_2\lambda_2}  \\
            & \hspace{30pt}+ \cfrac{m^2}{2} \left( \sqrt{-g}\, g^{\lambda_1\lambda_2} \right)^{\rho_1\sigma_1\cdots\rho_n\sigma_n} \Bigg] \,A_{\lambda_1}(p_1) \, A_{\lambda_2}(p_2) .
        \end{split}
    \end{align}
    Here, we introduced the following notations:
    \begin{align}
        F_{\mu\nu} &= -i\,p_\sigma \, \big(F_{\mu\nu}\big)^{\sigma\lambda} \,A_{\lambda}(p) , & \big(F_{\mu\nu}\big)^{\sigma\lambda} &\overset{\text{def}}{=} \delta^\sigma_\mu \, \delta^\lambda_\nu - \delta^\sigma_\nu \, \delta^\lambda_\mu .
    \end{align}
\end{theorem}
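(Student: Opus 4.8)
The plan is to follow the same strategy already applied to the scalar and Dirac actions: begin from the factorised form of the action, expand the purely geometric prefactors in powers of $\kappa$ using the notation introduced above, and then pass to momentum space, converting the spacetime derivatives hidden in the field strength into momentum factors. Concretely, I would start from
\begin{align}
    \mathcal{A}_{s=1,m} = \int d^4 x \left[ -\cfrac14\, \sqrt{-g}\, g^{\mu\alpha} g^{\nu\beta} F_{\mu\nu}F_{\alpha\beta} + \cfrac{m^2}{2}\, \sqrt{-g}\, g^{\mu\nu} \, A_\mu \,A_\nu \right]
\end{align}
and isolate the two geometric objects $\sqrt{-g}\,g^{\mu\alpha}g^{\nu\beta}$ and $\sqrt{-g}\,g^{\mu\nu}$, on which all of the $\kappa$-dependence of the prefactors is concentrated.

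By the family of $C_{(l)}$-tensors defined earlier, each of these objects admits a perturbative expansion of the schematic form $\sum_n \kappa^n (X)^{\rho_1\sigma_1\cdots\rho_n\sigma_n} h_{\rho_1\sigma_1}\cdots h_{\rho_n\sigma_n}$. Substituting these two expansions produces the outer sum over $n$, the string of $n$ graviton fields $h_{\rho_i\sigma_i}$, and the coefficients $(\sqrt{-g}\,g^{\mu\alpha}g^{\nu\beta})^{\rho_1\sigma_1\cdots\rho_n\sigma_n}$ and $(\sqrt{-g}\,g^{\mu\nu})^{\rho_1\sigma_1\cdots\rho_n\sigma_n}$ appearing in the statement. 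This step is essentially automatic once one invokes the recursive machinery of Theorem \ref{Theorem_2}, so it requires no new ideas.

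The genuinely new ingredient is the treatment of the field strength. Because $F_{\mu\nu} = \pd_\mu A_\nu - \pd_\nu A_\mu$ is built from ordinary derivatives (the Christoffel contributions cancel by antisymmetry), the derivatives act only on the vector field and never on the metric factors or the graviton lines. Fourier transforming $A_\mu$, each derivative brings down a momentum, so that $F_{\mu\nu}(p) = -i\,p_\sigma (F_{\mu\nu})^{\sigma\lambda} A_\lambda(p)$ with $(F_{\mu\nu})^{\sigma\lambda} = \delta^\sigma_\mu \delta^\lambda_\nu - \delta^\sigma_\nu \delta^\lambda_\mu$ encoding the antisymmetry exactly as defined. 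Applying this to the two field strengths carrying momenta $p_1$ and $p_2$, the product of the two factors of $-i$ yields $(-i)^2 = -1$, which converts the overall $-\cfrac14$ of the kinetic term into the $+\cfrac14$ of the theorem and reproduces the explicit structure $(p_1)_{\mu_1}(p_2)_{\mu_2}(F_{\mu\nu})^{\mu_1\lambda_1}(F_{\alpha\beta})^{\mu_2\lambda_2}$. The mass term carries no derivatives and passes through unchanged apart from its prefactor.

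Finally, the spacetime integral $\int d^4 x$ over the product of all plane waves produces the momentum-conservation delta function $(2\pi)^4\delta(p_1+p_2+\sum k_i)$, and a relabelling of the contracted dummy indices assembles the stated expression. The main obstacle is purely one of bookkeeping: one must track carefully which momentum, $p_1$ or $p_2$, is attached to which field strength, and verify that the antisymmetrisation carried by $(F_{\mu\nu})^{\sigma\lambda}$ is compatible with the index structure of the symmetrised geometric coefficients, so that the final vertex is manifestly symmetric under exchange of the two Proca fields $A_{\lambda_1}(p_1) \leftrightarrow A_{\lambda_2}(p_2)$.
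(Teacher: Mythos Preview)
Your proposal is correct and follows exactly the general methodology the paper prescribes at the start of Section~\ref{Feynman_Rules}: factorise the action, expand the purely geometric factors $\sqrt{-g}\,g^{\mu\alpha}g^{\nu\beta}$ and $\sqrt{-g}\,g^{\mu\nu}$ using the $C_{(l)}$-tensor machinery, and Fourier transform the derivative-carrying piece, with the sign check $(-i)^2=-1$ turning $-\tfrac14$ into $+\tfrac14$. The paper itself does not spell out a proof for this theorem; it is stated as a direct instance of the framework, so your write-up in fact supplies more detail than the paper does.
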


This expression generates the Proca propagator:
\begin{align}\label{Proca_propagator}
  \begin{gathered}
    \begin{fmffile}{Diag03}
      \begin{fmfgraph*}(30,30)
        \fmfleft{L}
        \fmfright{R}
        \fmf{photon}{L,R}
        \fmflabel{$\mu$}{L}
        \fmflabel{$\nu$}{R}
      \end{fmfgraph*}
    \end{fmffile}
  \end{gathered}
  \hspace{13pt}=(-i)\,\cfrac{ ~ \eta_{\mu\nu} - \cfrac{p_\mu\,p_\nu}{m^2} ~ }{p^2 - m^2}\,.
\end{align}
The expression describing the interaction rules between gravitons and the Proca field kinetic energy is as follows:
\begin{align}\label{rule_V_1}
    \nonumber \\
    \begin{split}
        \begin{gathered}
            \begin{fmffile}{FR_V_1}
                \begin{fmfgraph*}(30,30)
                    \fmfleft{L1,L2}
                    \fmfright{R1,R2}
                    \fmf{dbl_wiggly}{L1,V}
                    \fmf{dbl_wiggly}{L2,V}
                    \fmfdot{V}
                    \fmf{photon}{R1,V}
                    \fmf{photon}{V,R2}
                    \fmffreeze
                    \fmf{dots}{L1,L2}
                    \fmflabel{$p_1,\lambda_1,m$}{R1}
                    \fmflabel{$p_2,\lambda_2,m$}{R2}
                    \fmflabel{$\rho_1\sigma_1$}{L1}
                    \fmflabel{$\rho_n\sigma_n$}{L2}
                \end{fmfgraph*}
            \end{fmffile}
        \end{gathered}
    \hspace{50pt}=  i\, \kappa^n \Bigg[ & \cfrac12\, \left(\sqrt{-g}\, g^{\mu\alpha}g^{\nu\beta}\right)^{\rho_1\sigma_1\cdots\rho_n\sigma_n} (p_1)_{\mu_1} (p_2)_{\mu_2} \big(F_{\mu\nu}\big)^{\mu_1\lambda_1} \big(F_{\alpha\beta}\big)^{\mu_2\lambda_2} \\
      & + m^2 \left( \sqrt{-g}\, g_{\lambda_1\lambda_2} \right)^{\rho_1\sigma_1\cdots\rho_n\sigma_n} \Bigg].
  \end{split}
\end{align}

\subsection{Vector field}

Before discussing the massless case, we shall recall the Faddeev-Popov prescription \cite{Faddeev:1967fc}. The following generating functional describes a quantum massless vector field:
\begin{align}
  \mathcal{Z} = \int \mathcal{D}[A] \exp\Big[i\,\mathcal{A}[A] \Big].
\end{align}
The integration space includes all conceivable configurations of the vector field. Firstly, we shall add a new term to the microscopic action:
\begin{align}
  \mathcal{Z} = \int \mathcal{D}[A]  \exp\Big[i\,\mathcal{A}[A] \Big] \int\mathcal{D}[\omega] \exp\left[\cfrac{i}{2}\,\epsilon\,\omega^2 \right] = \int\mathcal{D}[A]\mathcal{D}[\omega] \exp\left[ i\, \mathcal{A} +\cfrac{i}{2} \, \epsilon\,\omega^2\right] .
\end{align}
Here $\omega$ is an arbitrary scalar, and $\epsilon$ is a free gauge fixing parameter. The new contribution is a Gauss-like integral, so its introduction merely changes the omitted normalisation factor.

Secondly, we split the integration volume:
\begin{align}
  \int \mathcal{D}[A] =\int \mathcal{D}[\zeta] \int \mathcal{D}[\mathbb{A}] \delta\left( \mathcal{G} - \omega \right) \det\Delta\,.
\end{align}
In this expression, $\mathcal{G}$ is the gauge fixing condition, $\zeta$ is the gauge transformation parameter, and the new field variable $\mathbb{A}$ and the field variable $A$ are related as follows:
\begin{align}
  A_\mu = \mathbb{A}_\mu + \pd_\mu \zeta .
\end{align}
The integration over $\mathbb{A}$ is performed over all conceivable fields. Because of the $\delta$ function, only a single representative from each class of physically equivalent potentials contributes to the integral. Lastly, the Faddeev-Popov determinant $\det\Delta$ preserves the invariance of the integration measure. The corresponding differential operator $\Delta$ is defined as follows:
\begin{align}
  \Delta \overset{\text{def}}{=} \cfrac{\delta\mathcal{G}}{\delta \zeta}\,.
\end{align}

Finally, we perform integrations and obtain the following expression for the generating functional:
\begin{align}
  \begin{split}
    \mathcal{Z} &= \int \mathcal{D}[\mathbb{A}]\mathcal{D}[\omega]\mathcal{D}[\zeta] \left(\det\Delta\right) ~ \delta\left(\mathcal{G} - \omega \right) \exp\left[ i\, \mathcal{A} +\cfrac{i}{2} \, \epsilon\,\omega^2\right] \\
    &= \int \mathcal{D}[\mathbb{A}] \left(\det\Delta\right) \exp\left[ i\, \mathcal{A} +\cfrac{i}{2} \, \epsilon\,\mathcal{G}^2 \right] \\
    &=\int\mathcal{D}[c]\mathcal{D}[\overline{c}]\mathcal{D}[\mathbb{A}] \exp\left[ i \,\overline{c} \, \Delta \, c + i \, \mathcal{A} + \cfrac{i}{2}\,\epsilon \, \mathcal{G}^2 \right] .
  \end{split}
\end{align}
Here $\overline{c}$, $c$ are scalar anticommuting Faddeev-Popov ghosts introduced to account for the Faddeev-Popov determinant. We omit the integration over the gauge parameter $\zeta$ as it is irrelevant due to the normalisation.

This prescription generates a functional suitable for treating gauge models. We chose the standard Lorentz gauge fixing condition for simplicity. In a curved spacetime, it reads:
\begin{align}
  g^{\mu\nu}\,\nabla_\mu A_\nu =0 \leftrightarrow g^{\mu\nu}\,\pd_\mu A_\nu - g^{\mu\nu}\, \Gamma^\sigma_{\mu\nu} A_\sigma = 0 .
\end{align}

The following theorem describes the perturbative structure of the gauge invariant part of the action.
\begin{theorem}
\begin{align}
  \begin{split}
    \mathcal{A}_{s=1,m_\text{v}=0} =& \int d^4 x \sqrt{-g}\left[ -\cfrac14\,g^{\mu\alpha} g^{\nu\beta}\,F_{\mu\nu}F_{\alpha\beta}\right] \\
    =& \sum\limits_{n=0}^\infty\int\cfrac{d^4p_1}{(2\pi)^4}\cfrac{d^4p_2}{(2\pi)^4}\prod\limits_{i=1}^n \cfrac{d^4k_i}{(2\pi)^4} \,(2\pi)^4 \,\delta\big(p_1+p_2+\sum k_i \big) h_{\rho_1\sigma_1}(k_1)\cdots h_{\rho_n\sigma_n}(k_n)\\
    & \times \,\kappa^n \, \left[ \cfrac14\,\left(\sqrt{-g} \,g^{\mu\alpha}g^{\nu\beta}\right)^{\rho_1\sigma_1\cdots\rho_n\sigma_n}\,(p_1)_{\mu_1}(p_2)_{\mu_2}\left(F_{\mu\nu}\right)^{\mu_1\lambda_1} \left(F_{\alpha\beta}\right)^{\mu_2\lambda_2}\right] A_{\lambda_1}(p_1) A_{\lambda_2}(p_2).
  \end{split}
\end{align}
\end{theorem}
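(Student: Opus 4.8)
The plan is to obtain this statement as the $m\to 0$ specialisation of the Proca theorem proved immediately above: dropping the mass term leaves only the kinetic contribution, so it suffices to carry that single term through the same sequence of steps. First I would invoke the factorisation theorem to write the action in the factorised form in which the sole infinite series is the derivative-free factor $\sqrt{-g}\,g^{\mu\alpha}g^{\nu\beta}$, while $F_{\mu\nu}F_{\alpha\beta}$ is a finite expression carrying all the derivatives.

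Next I would expand the derivative-free factor in small perturbations. By definition of the plain $C$-tensor family,
\begin{align}
    \sqrt{-g}\,g^{\mu\alpha}g^{\nu\beta} = \sum\limits_{n=0}^\infty \kappa^n \left(\sqrt{-g}\,g^{\mu\alpha}g^{\nu\beta}\right)^{\rho_1\sigma_1\cdots\rho_n\sigma_n} h_{\rho_1\sigma_1}\cdots h_{\rho_n\sigma_n},
\end{align}
whose coefficients are the $C_{(2;n)}$ tensors generated recursively through Theorem \ref{Theorem_2}. This isolates the graviton content at each order in $\kappa$ and supplies the $\kappa^n$ prefactor of the claim.

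Then I would pass to momentum space. Using the Fourier conventions of the preceding sections, each derivative acting on the vector field becomes a momentum factor, so that $F_{\mu\nu}\to -i\,p_\sigma\,(F_{\mu\nu})^{\sigma\lambda}A_\lambda(p)$ with $(F_{\mu\nu})^{\sigma\lambda}=\delta^\sigma_\mu\delta^\lambda_\nu-\delta^\sigma_\nu\delta^\lambda_\mu$, exactly as introduced in the Proca case. Substituting the Fourier representations of the two vector fields (momenta $p_1,p_2$) and of the $n$ gravitons (momenta $k_i$) and performing the $x$-integration collapses the exponentials to $(2\pi)^4\delta(p_1+p_2+\sum k_i)$. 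Collecting the momentum factors $(p_1)_{\mu_1}(p_2)_{\mu_2}$ together with the two antisymmetric structures $(F_{\mu\nu})^{\mu_1\lambda_1}(F_{\alpha\beta})^{\mu_2\lambda_2}$ and the $C_{(2;n)}$ coefficient then reproduces the stated expression.

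The step I expect to be the main obstacle is purely bookkeeping: confirming the overall numerical coefficient. The $+\tfrac14$ in the bracket arises from the $-\tfrac14$ of the action combined with the factor $(-i)^2=-1$ produced by the two momentum-space field strengths; one must check that no spurious factor of two enters from the two ways of assigning the field-strength structures to the labels $p_1$ and $p_2$, so that the action retains the coefficient $\tfrac14$ — in contrast to the $\tfrac12$ that surfaces only after a functional derivative is taken to extract the two-point vertex. Because the mass term is merely omitted, nothing beyond this factor-tracking and the standard Fourier bookkeeping remains.
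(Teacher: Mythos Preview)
Your approach is correct and matches the paper's: the paper does not give an independent proof but simply notes after the statement that ``This expression matches the expression for the Proca field with $m=0$,'' which is exactly your strategy of specialising the Proca theorem by dropping the mass term. Your additional bookkeeping on the sign and the $\tfrac14$ versus $\tfrac12$ coefficient is sound and more explicit than what the paper provides.
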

\noindent This expression matches the expression for the Proca field with $m=0$. 

The structure of the gauge fixing term is more sophisticated.
\begin{theorem}
\begin{align}\label{gauge_fixing_vector}
  \begin{split}
    \mathcal{A}_\text{gf} =& \int d^4 x \sqrt{-g}\left[ \cfrac{\epsilon}{2}\,\nabla_{\lambda_1} A^{\lambda_1} \, \nabla_{\lambda_2} A^{\lambda_2} \right]\\
    =& \cfrac{\epsilon}{2} \int d^4 x \left(\sqrt{-g}\,g^{\sigma_1\lambda_1}g^{\sigma_2\lambda_2}\right)\,\pd_{\sigma_1} A_{\lambda_1} \, \pd_{\sigma_2} A_{\lambda_2} -\epsilon \int d^4 x \left(\sqrt{-g}\,g^{\mu\nu}g^{\sigma_1\lambda_1}g^{\sigma_2\lambda_2}\right) \,\Gamma_{\sigma_1\mu\nu} \, A_{\lambda_1} \pd_{\sigma_2} A_{\lambda_2}\\
    &+\cfrac{\epsilon}{2}\int d^4 x \left( \sqrt{-g}\, g^{\mu\nu} g^{\alpha\beta} g^{\sigma_1\lambda_1} g^{\sigma_2\lambda_2} \right)\,\Gamma_{\sigma_1\mu\nu} \Gamma_{\sigma_2\alpha\beta} \,A_{\lambda_1} A_{\lambda_2}.
  \end{split}
\end{align}
\end{theorem}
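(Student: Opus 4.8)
The plan is to expand the gauge-fixing functional $\mathcal{A}_{\text{gf}} = \frac{\epsilon}{2}\int d^4x\,\sqrt{-g}\,(\nabla_\lambda A^\lambda)^2$ by writing the covariant divergence in terms of partial derivatives and the all-lower-index Christoffel symbol, then squaring and collecting the result into the factorised form demanded by the factorisation theorem. This is the same strategy used for the other sectors: isolate the finite, derivative-carrying pieces from the infinite series packaged in $\sqrt{-g}$ and the inverse metrics.

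First I would use metric compatibility to write $\nabla_\lambda A^\lambda = g^{\sigma\lambda}\nabla_\sigma A_\lambda$ and expand the covariant derivative of the covector as $\nabla_\sigma A_\lambda = \pd_\sigma A_\lambda - \Gamma^\rho_{\sigma\lambda}A_\rho$. Converting the mixed-index symbol to the all-lower-index form via $\Gamma^\rho_{\sigma\lambda} = g^{\rho\tau}\Gamma_{\tau\sigma\lambda}$ gives
\begin{align}
    \nabla_\lambda A^\lambda = g^{\sigma_1\lambda_1}\pd_{\sigma_1}A_{\lambda_1} - g^{\mu\nu}g^{\sigma_1\lambda_1}\Gamma_{\sigma_1\mu\nu}A_{\lambda_1},
\end{align}
which already exhibits exactly the two building blocks appearing in the gauge condition quoted just above the theorem. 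Here $\Gamma_{\sigma_1\mu\nu}$ is the finite, $\okappa{1}$ object whose structure is fixed by the earlier Christoffel theorem.

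Next I would write the second factor with an independent set of dummy indices $\sigma_2,\lambda_2,\alpha,\beta$ and multiply the two factors out, producing four contributions: the pure-derivative term, two cross terms each linear in a single Christoffel symbol, and the term quadratic in Christoffel symbols. The only step requiring care is recognising that the two cross terms coincide after relabelling the dummy indices $\{\sigma_1,\lambda_1,\mu,\nu\}\leftrightarrow\{\sigma_2,\lambda_2,\alpha,\beta\}$, so that their coefficients $\tfrac{\epsilon}{2}$ add to give the overall factor $-\epsilon$ in the second line of the claimed identity, while the pure-derivative and quadratic-Christoffel terms retain coefficient $\tfrac{\epsilon}{2}$. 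Collecting the three surviving structures and attaching the common $\sqrt{-g}$ reproduces the stated result.

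The computation carries no genuine obstacle; it is pure index bookkeeping, with the combination of the two cross terms being the only substantive manipulation. The point worth stressing is that each surviving factor is $\sqrt{-g}$ times a product of inverse metrics multiplying one of $\pd A\,\pd A$, $\Gamma\,A\,\pd A$, or $\Gamma\,\Gamma\,A\,A$, so every term is precisely the type of object covered by the factorisation theorem and admits the perturbative expansion in the $C_{(l)}$-tensors established earlier. This is what justifies presenting the gauge-fixing term in this particular grouped form rather than any other algebraically equivalent arrangement.
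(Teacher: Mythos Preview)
Your proposal is correct and is exactly the natural computation: the paper itself omits the proof, simply stating the theorem, because the identity follows immediately from writing $\nabla_\lambda A^\lambda = g^{\sigma\lambda}\pd_\sigma A_\lambda - g^{\mu\nu}g^{\sigma\lambda}\Gamma_{\sigma\mu\nu}A_\lambda$, squaring, and combining the two identical cross terms. Your observation that the result is already in factorised form (products of $\sqrt{-g}$ and inverse metrics times the finite derivative- and Christoffel-carrying pieces) is precisely the point the paper is making by presenting it this way.
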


The following theorem gives the Fourier structure of the gauge fixing term.
\begin{theorem}
\begin{align}
  \begin{split}
    \mathcal{A}_\text{gf} =&  \sum\limits_{n=0}^\infty\int\cfrac{d^4 p_1}{(2\pi)^4}\cfrac{d^4p_2}{(2\pi)^4}\prod\limits_{i=1}^n \cfrac{d^4k_i}{(2\pi)^4}\,(2\pi)^4\delta \big(p_1+p_2+\sum k_i \big) \,h_{\rho_1\sigma_1}(k_1) \cdots h_{\rho_n\sigma_n}(k_n) A_{\lambda_1}(p_1) A_{\lambda_2}(p_2)\\
    &\times \,\kappa^n\,\left(\sqrt{-g}\, g^{\mu_1\lambda_1} g^{\mu_2\lambda_2}\right)^{\rho_1\sigma_1\cdots\rho_n\sigma_n} \left[ -\cfrac{\epsilon}{2}\, (p_1)_{\mu_1} (p_2)_{\mu_2}\right] \\
    +&\sum\limits_{n=1}^\infty\int\cfrac{d^4 p_1}{(2\pi)^4}\cfrac{d^4p_2}{(2\pi)^4}\prod\limits_{i=1}^n \cfrac{d^4k_i}{(2\pi)^4}\,(2\pi)^4\delta \big(p_1+p_2+\sum k_i \big) \,h_{\rho_1\sigma_1}(k_1) \cdots h_{\rho_n\sigma_n}(k_n) A_{\lambda_1}(p_1) A_{\lambda_2}(p_2)\\
    & \times\,\kappa^n\,\left( \sqrt{-g}\,g^{\mu\nu} g^{\mu_1\lambda_1} g^{\mu_2\lambda_2} \right)^{\rho_2\sigma_2\cdots\rho_n\sigma_n}\Big[ \epsilon \, \left(\Gamma_{\mu_1\mu\nu}\right)^{\sigma\rho_1\sigma_1}\,(k_1)_\sigma \, (p_2)_{\mu_2}\Big] \\
    +& \sum\limits_{n=2}^\infty\int\cfrac{d^4 p_1}{(2\pi)^4}\cfrac{d^4p_2}{(2\pi)^4}\prod\limits_{i=1}^n \cfrac{d^4k_i}{(2\pi)^4}\,(2\pi)^4\delta \big(p_1+p_2+\sum k_i \big) \,h_{\rho_1\sigma_1}(k_1) \cdots h_{\rho_n\sigma_n}(k_n) \,A_{\lambda_1}(p_1) A_{\lambda_2}(p_2)\\
    &\times\,\kappa^n\,\left(\sqrt{-g}\,g^{\mu\nu} g^{\alpha\beta} g^{\mu_1\lambda_1} g^{\mu_2\lambda_2} \right)^{\rho_3\sigma_3\cdots\rho_n\sigma_n} \left[ - \cfrac{\epsilon}{2}\, (k_1)_{\tau_1}(k_2)_{\tau_2} \,\big(\Gamma_{\mu_1\mu\nu} \big)^{\tau_1\rho_1\sigma_1} \big( \Gamma_{\mu_2\alpha\beta}\big)^{\tau_2\rho_2\sigma_2}\right].
  \end{split}
\end{align}
The following notations for the Christoffel symbols were used.
\begin{align}
  \begin{split}
    \Gamma_{\mu\alpha\beta} =& \cfrac{\kappa}{2} \left[ \pd_\alpha h_{\beta\mu} + \pd_\beta h_{\alpha\mu} - \pd_\mu h_{\alpha\beta}\right] \Leftrightarrow \kappa \, (-i)\,p_\lambda \left(\Gamma_{\mu\alpha\beta}\right)^{\lambda\rho\sigma} h_{\rho\sigma}(p) \,, \\
    \left(\Gamma_{\mu\alpha\beta}\right)^{\lambda\rho\sigma} =& \cfrac12\left[ \delta^\lambda_\alpha I_{\beta\mu}{}^{\rho\sigma} + \delta^\lambda_\beta I_{\alpha\mu}{}^{\rho\sigma} - \delta^\lambda_\mu I_{\alpha\beta}{}^{\rho\sigma} \right].
  \end{split}
\end{align}
\end{theorem}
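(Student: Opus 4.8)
The plan is to derive the momentum-space expression directly from the position-space factorisation in \eqref{gauge_fixing_vector} by Fourier transforming every field and then reading off the perturbative coefficients. First I would write each field as a Fourier integral, $A_\mu(x) = \int \tfrac{d^4 p}{(2\pi)^4}\, e^{-ipx} A_\mu(p)$ and $h_{\rho\sigma}(x) = \int \tfrac{d^4 k}{(2\pi)^4}\, e^{-ikx} h_{\rho\sigma}(k)$, so that every derivative acting on a field becomes a momentum factor, $\pd_\mu \to -i p_\mu$. With all fields carried over to Fourier space, the $x$-integration collapses the product of exponentials into the overall momentum-conserving delta function $(2\pi)^4\,\delta(p_1+p_2+\sum k_i)$ that appears in each line of the claim.

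Second, I would treat the three terms of \eqref{gauge_fixing_vector} separately, since they carry zero, one, and two explicit Christoffel symbols respectively and therefore populate the sums starting at $n=0$, $n=1$, and $n=2$. For the derivative-free factors $\sqrt{-g}\,g^{\sigma_1\lambda_1}g^{\sigma_2\lambda_2}$, $\sqrt{-g}\,g^{\mu\nu}g^{\sigma_1\lambda_1}g^{\sigma_2\lambda_2}$, and $\sqrt{-g}\,g^{\mu\nu}g^{\alpha\beta}g^{\sigma_1\lambda_1}g^{\sigma_2\lambda_2}$, I would invoke the perturbative-expansion notation $X = \sum_n \kappa^n (X)^{\rho_1\sigma_1\cdots\rho_n\sigma_n} h_{\rho_1\sigma_1}\cdots h_{\rho_n\sigma_n}$ introduced above, whose coefficients are supplied by the recursive relations of Theorems \ref{Theorem_1} and \ref{Theorem_2}. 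In the first term the two vector derivatives $\pd_{\sigma_1},\pd_{\sigma_2}$ become $(p_1)_{\mu_1},(p_2)_{\mu_2}$ after relabelling $\sigma_i \to \mu_i$, and the two factors of $-i$ combine to give the prefactor $-\tfrac{\epsilon}{2}(p_1)_{\mu_1}(p_2)_{\mu_2}$, reproducing the first line.

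Third, for the terms containing Christoffel symbols I would substitute the linear-in-$h$ momentum-space rule $\Gamma_{\mu\alpha\beta} \Leftrightarrow \kappa(-i)\,p_\lambda\,(\Gamma_{\mu\alpha\beta})^{\lambda\rho\sigma} h_{\rho\sigma}(p)$ quoted at the end of the statement. Each Christoffel factor consumes exactly one graviton and supplies one power of $\kappa$ together with one graviton momentum ($k_1$ in the one-Christoffel term; $k_1$ and $k_2$ in the two-Christoffel term). The remaining derivative-free factor then carries the other $n-1$ (respectively $n-2$) gravitons, which is precisely why the trailing index strings read $\rho_2\sigma_2\cdots\rho_n\sigma_n$ and $\rho_3\sigma_3\cdots\rho_n\sigma_n$. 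Combining the $\kappa$ from each Christoffel with the $\kappa^{n-1}$ or $\kappa^{n-2}$ from the series factor restores the uniform $\kappa^n$ weight, while the accompanying $(-i)$ factors produce the prefactors $+\epsilon$ and $-\tfrac{\epsilon}{2}$ in the second and third lines.

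The main obstacle will be the bookkeeping that unifies the three contributions into a single graviton-counting scheme: one must assign the gravitons emitted by the Christoffel symbols to the leading index slots while letting the infinite-series factor fill the trailing slots, and then shift the summation index consistently so that all three lines share the same overall $n$. Care is also required with signs, whose only sources are the derivative convention $\pd_\mu \to -i p_\mu$ and the explicit $(-i)$ in the Christoffel rule; tracking these is what guarantees the prefactors emerge as $-\tfrac{\epsilon}{2}$, $+\epsilon$, and $-\tfrac{\epsilon}{2}$. Once the index relabelling and the delta function from the $x$-integration are in place, a term-by-term comparison with the perturbative definitions completes the proof.
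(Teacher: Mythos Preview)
Your proposal is correct and mirrors the paper's implicit approach exactly: the paper does not spell out a proof for this theorem, but the derivation you outline---Fourier-transform the factorised position-space expression \eqref{gauge_fixing_vector}, expand the derivative-free prefactors via the $(X)^{\rho_1\sigma_1\cdots\rho_n\sigma_n}$ notation, insert the linear momentum-space rule for each $\Gamma_{\mu\alpha\beta}$, and reshuffle the summation index so that the Christoffel gravitons occupy the leading slots---is precisely the uniform procedure the paper applies to every interaction rule in Section~\ref{Feynman_Rules}. Your sign-tracking via the $(-i)$ factors from $\pd_\mu\to -ip_\mu$ and from the Christoffel rule is the only subtle point, and you have it right.
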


The background part of this expression corresponds to the standard propagator:
\begin{align}\label{Maxwell_propagator}
  \begin{gathered}
    \begin{fmffile}{Diag04}
      \begin{fmfgraph*}(30,30)
        \fmfleft{L}
        \fmfright{R}
        \fmf{photon}{L,R}
        \fmflabel{$\mu$}{L}
        \fmflabel{$\nu$}{R}
      \end{fmfgraph*}
    \end{fmffile}
  \end{gathered}
  \hspace{20pt} = i ~ \cfrac{ -\eta_{\mu\nu} + \left(1+ \cfrac{1}{\epsilon}\right) \cfrac{p_\mu \, p_\nu}{p^2} }{p^2} ~.
\end{align}
The corresponding interaction rule reads:
\begin{align}\label{rule_V_2}
     \\
    \begin{split}
    & \hspace{25pt}
        \begin{gathered}
            \begin{fmffile}{FR_V_2}
                \begin{fmfgraph*}(30,30)
                    \fmfleft{L1,L2}
                    \fmfright{R1,R2}
                    \fmf{dbl_wiggly}{L1,V}
                    \fmf{dbl_wiggly}{L2,V}
                    \fmfdot{V}
                    \fmf{photon}{R1,V}
                    \fmf{photon}{V,R2}
                    \fmffreeze
                    \fmf{dots}{L1,L2}
                    \fmflabel{$p_1,\lambda_1$}{R1}
                    \fmflabel{$p_2,\lambda_2$}{R2}
                    \fmflabel{$\rho_1\sigma_1,k_1$}{L1}
                    \fmflabel{$\rho_n\sigma_n,k_n$}{L2}
                \end{fmfgraph*}
            \end{fmffile}
        \end{gathered}
        \hspace{20pt}= i\, \kappa^n \Bigg[ \cfrac12\,\left(\sqrt{-g}\,g^{\mu\alpha}g^{\nu\beta}\right)^{\rho_1\sigma_1\cdots\rho_n\sigma_n}\,(p_1)_{\sigma_1}(p_2)_{\sigma_2} \big(F_{\mu\nu}\big)^{\sigma_1\lambda_1} \big(F_{\alpha\beta}\big)^{\sigma_2\lambda_2} \\ \\
        & -\epsilon \, \left(\sqrt{-g} \, g^{\mu_1\lambda_1} g^{\mu_2\lambda_2}\right)^{\rho_1\sigma_1\cdots\rho_n\sigma_n} (p_1)_{\mu_1} (p_2)_{\mu_2}  \\
        & +\epsilon\left\{ \left(\sqrt{-g}\,g^{\mu\nu}g^{\mu_1\lambda_1}g^{\mu_2\lambda_2}\right)^{\rho_2\sigma_2\cdots\rho_n\sigma_n}\,(k_1)_\sigma \left[ (p_2)_{\mu_2}\big(\Gamma_{\mu_1\mu\nu}\big)^{\sigma\rho_1\sigma_1} + (p_1)_{\mu_1}\big(\Gamma_{\mu_2\mu\nu}\big)^{\sigma\rho_1\sigma_1} \right] + \cdots \right\}\\
        & -\cfrac{\epsilon}{2} \Bigg\{ \left(\sqrt{-g}\,g^{\mu\nu}g^{\alpha\beta} g^{\mu_1\lambda_1} g^{\mu_2\lambda_2}\right)^{\rho_3\sigma_3\cdots\rho_n\sigma_n} \left[ (k_1)_{\tau_1}\,(k_2)_{\tau_2} \big(\Gamma_{\mu_1\mu\nu}\big)^{\tau_1\rho_1\sigma_1} \big(\Gamma_{\mu_2\alpha\beta}\big)^{\tau_2\rho_2\sigma_2} \right.\\
        & \hspace{190pt} \left.+ (k_1)_{\tau_2}\,(k_2)_{\tau_1} \big(\Gamma_{\mu_2\mu\nu}\big)^{\tau_1\rho_2\sigma_2} \big(\Gamma_{\mu_1\alpha\beta}\big)^{\tau_2\rho_1\sigma_1}   \right] + \cdots \Bigg\}\Bigg].
    \end{split}
\end{align}
The dots in this expression represent terms that create symmetry with respect to graviton momenta. The final term only affects vertices with two or more gravitons.

We treat the ghost sector of the theory as follows. The Faddeev-Popov differential operator $\Delta$ reduces to the D'Alamber operator in curved spacetime:
\begin{align}
  \Delta = \cfrac{\delta}{\delta \zeta} ~ \nabla_\mu\left(A^\mu + \nabla^\mu \zeta\right) = g^{\mu\nu}\nabla_\mu\nabla_\nu\,.
\end{align}
The ghost part of the functional describes a single massless scalar ghost coupled to gravity:
\begin{align}
    \mathcal{Z}_\text{ghost} =& \int\mathcal{D}[c]\mathcal{D}[\overline{c}] \exp\left[i \,\int d^4 x \sqrt{-g} \, \left( \overline{c}\, \square\, c  \right)\right] =  \int\mathcal{D}[c]\mathcal{D}[\overline{c}] \, \exp\left[ - i\, \int d^4 x \, \sqrt{-g}\, g^{\mu\nu} \,\nabla_\mu \overline{c} \, \nabla_\nu c\right].
\end{align}
The following theorem gives the corresponding perturbative expansion.
\begin{theorem}
    \begin{align}
        \begin{split}
            \mathcal{A}_\text{ghost} =& -\int d^4 x \, \sqrt{-g} \, g^{\mu\nu} \pd_\mu \overline{c} \,\pd_\nu c \\
            =&\sum\limits_{n=0}^\infty\int \cfrac{d^4 p_1}{(2\pi)^4}\cfrac{d^4 p_2}{(2\pi)^4} \prod\limits_{i=1}^n \cfrac{d^4 k_i}{(2\pi)^4} \, (2\pi)^4 \delta\left( p_1 + p_2 + \sum k_i \right) \, h_{\rho_1\sigma_1}(k_1) \cdots h_{\rho_n\sigma_n}(k_n)\\
            &\times\,\kappa^n\, \overline{c}(p_1) \left[ \left(\sqrt{-g} \, g^{\mu\nu} \right)^{\rho_1\sigma_1\cdots\rho_n\sigma_n} (p_1)_\mu (p_2)_\nu \right] c(p_2) .
        \end{split}
    \end{align}
\end{theorem}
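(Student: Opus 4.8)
The plan is to treat this statement as a direct instance of the factorisation theorem, following the same template already used for the scalar field. The action $-\int d^4x\,\sqrt{-g}\,g^{\mu\nu}\pd_\mu\overline{c}\,\pd_\nu c$ splits cleanly: the factor $\sqrt{-g}\,g^{\mu\nu}$ is the derivative-free part that carries the entire infinite series in $\kappa$, while the two derivatives $\pd_\mu\overline{c}$ and $\pd_\nu c$ constitute the finite part acting on the ghost fields. Since the ghosts carry no graviton indices, no Christoffel symbols appear, which makes this the simplest non-trivial case in the paper.

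First I would insert the perturbative expansion of $\sqrt{-g}\,g^{\mu\nu}$ in powers of $h_{\rho\sigma}$, using the notation $(\sqrt{-g}\,g^{\mu\nu})^{\rho_1\sigma_1\cdots\rho_n\sigma_n}$ introduced above, whose components are computed from the recursive relation of Theorem \ref{Theorem_2}. Next I would pass to the Fourier representation for all fields, writing $\overline{c}(x)$, $c(x)$ and each $h_{\rho_i\sigma_i}(x)$ as integrals over their momenta $p_1$, $p_2$ and $k_i$ with the standard $e^{-ipx}$ convention. The derivatives then act only on the ghost exponentials, replacing $\pd_\mu\overline{c}$ by $(-i)(p_1)_\mu\,\overline{c}(p_1)$ and $\pd_\nu c$ by $(-i)(p_2)_\nu\,c(p_2)$.

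The remaining steps are bookkeeping. The two factors of $-i$ combine to $(-i)^2=-1$, which cancels the overall minus sign of the action and produces the advertised $+(p_1)_\mu(p_2)_\nu$. The single spacetime integral $\int d^4x$ acts on the product of all plane waves and yields $(2\pi)^4\,\delta(p_1+p_2+\sum k_i)$, enforcing momentum conservation, while the $n$ graviton insertions supply the factor $\kappa^n$ together with the product $h_{\rho_1\sigma_1}(k_1)\cdots h_{\rho_n\sigma_n}(k_n)$. Collecting everything reproduces the stated expression, and the $\okappa{0}$ term gives the massless scalar propagator quoted afterwards.

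The hard part is essentially absent: the calculation is structurally identical to the scalar case, differing only by the absence of the mass term and by the anticommuting nature of $\overline{c}$ and $c$. The only genuine point of care is the Grassmann ordering --- one must keep $\overline{c}$ to the left of $c$ throughout --- but since the integrand is bilinear in the ghosts this never introduces sign subtleties, and the manipulation proceeds exactly as for a commuting scalar.
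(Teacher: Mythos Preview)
Your proposal is correct and follows precisely the approach implicit in the paper: factorise the action into the derivative-free infinite series $\sqrt{-g}\,g^{\mu\nu}$ and the finite derivative part $\pd_\mu\overline{c}\,\pd_\nu c$, insert the perturbative expansion of the former, and pass to momentum space. The paper in fact omits the proof entirely (as it does for the analogous scalar-field theorem), but your derivation is exactly the template the paper establishes and uses throughout; the observation that this case is structurally identical to the massless scalar kinetic term is the whole point.
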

We recover the standard ghost propagator:
\begin{align}\label{Maxwell_ghost_propagator}
  \begin{gathered}
    \begin{fmffile}{Diag05}
      \begin{fmfgraph*}(30,30)
        \fmfleft{L}
        \fmfright{R}
        \fmf{dots}{L,R}
      \end{fmfgraph*}
    \end{fmffile}
  \end{gathered}
  \hspace{20pt} =  i\, \cfrac{-1}{p^2} \,,
\end{align}
and in the following interaction rule:
\begin{align}\label{rule_Gh_1}
    \nonumber \\
    \begin{gathered}
        \begin{fmffile}{FR_Gh_1}
            \begin{fmfgraph*}(30,30)
                \fmfleft{L1,L2}
                \fmfright{R1,R2}
                \fmf{dbl_wiggly}{L1,V}
                \fmf{dbl_wiggly}{L2,V}
                \fmf{dots}{R1,V}
                \fmf{dots}{R2,V}
                \fmffreeze
                \fmfdot{V}
                \fmf{dots}{L1,L2}
                \fmflabel{$p_1$}{R1}
                \fmflabel{$p_2$}{R2}
                \fmflabel{$\rho_1\sigma_1$}{L1}
                \fmflabel{$\rho_n\sigma_n$}{L2}
            \end{fmfgraph*}
        \end{fmffile}
    \end{gathered} = i \, \kappa^n \,\left(\sqrt{-g} \, g^{\mu\nu} \right)^{\rho_1\sigma_1\cdots\rho_n\sigma_n} \,I_{\mu\nu}{}^{\alpha\beta} (p_1)_\alpha (p_2)_\beta . \\  \nonumber
\end{align}

Let us note again that the discussed ghosts are the standard Faddeev-Popov. In the context of gravity, there is one additional reason to account for them. The vertex in a diagram represents the interaction between gravitons and vectors, including physical and non-physical vector field polarisations. The coupling of Faddeev-Popov ghosts to gravity cancels out the energy contribution from non-physical polarisations.

\subsection{ SU(N) Yang-Mills}

Let us turn to the gravitational interaction of the $SU(N)$ Yang-Mills model. In the flat spacetime, the $SU(N)$ Yang-Mills model is given by the following action:
\begin{align}\label{SUNYM_flat}
    \begin{split}
        &\mathcal{A} = \int d^4 x \left[ \overline{\psi} \left(i\, \widehat{\mathcal{D}} - m\right) \psi - \cfrac14\, F^a_{\mu\nu} \, F^{a\mu\nu}\right] \\
        &= \int d^4 x \left[ \overline{\psi} (i\,\widehat{\pd} -m )\psi - \cfrac14\,\left(f^a_{\mu\nu}\right)^2 + \gs \,\overline{\psi} \widehat{A}\psi -\gs\,f^{abc}\pd_\mu A^a_\nu \,A^{b\mu} A^{c\nu} - \cfrac14\,\gs^2 \,f^{amn}\,f^{aij} \,(A^m\!\!\cdot\!\! A^i) (A^n\!\!\cdot\!\! A^j) \right] .
    \end{split}
\end{align}
The fermion covariant derivative is defined in the standard way:
\begin{align}
  \mathcal{D}_\mu \psi = \pd_\mu \psi - i\,\gs\,A_\mu\,\psi .
\end{align}
Field tensor $F_{\mu\nu}$ reads
\begin{align}
  F_{\mu\nu} = \pd_\mu A_\nu - \pd_\nu A_\mu -i\,\gs [A_\mu , A_\nu] .
\end{align}
The gauge field $A_\mu$ takes value in $SU(N)$ algebra:
\begin{align}
  A_\mu = A^a_\mu \, T^a ,
\end{align}
where $T^a$ are generators. In turn, field tensor components are given by the following:
\begin{align}
  F^a_{\mu\nu} = \pd_\mu A^a_\nu - \pd_\nu A^a_\mu + \gs\, f^{abc} \, A^b_\mu A^c_\nu \,.
\end{align}
Here $f^{abc}$ are the structure constants of the algebra:
\begin{align}
  [T^a, T^b] = i\, f^{abc} \, T^c\,.
\end{align}

Action \eqref{SUNYM_flat} is generalised for the curved spacetime case as follows. One replaces all derivatives with covariant derivatives and makes explicit the invariant volume factor. Such a generalisation results in the following action:
\begin{align}
  \mathcal{A} = \int d^4 x \, \sqrt{-g} \left[ \,\overline{\psi} \left( i\, \mathfrak{e}_m{}^\mu\,\gamma^m\,  \mathcal{D}_\mu  - m  \right) \psi -\cfrac14\,F^a_{\mu\nu} \,F^{a\mu\nu} \right]. 
\end{align}
Here $\mathfrak{e}_m{}^\mu$ is a vierbein. The covariant derivative for fermions now reads
\begin{align}
  \mathcal{D}_\mu \psi = \nabla_\mu \psi - i\,\gs\,A_\mu \,\psi ,
\end{align}
with $\nabla_\mu$ begin the part accounting for the spacetime curvature via the spin connection. The field tensor $F_{\mu\nu}$ is defined via covariant derivatives. However, it does not involve Christoffel symbols due to its structure:
\begin{align}
    F_{\mu\nu} = \nabla_\mu A_\nu - \nabla_\nu A_\mu - i\,\gs\,[A_\mu,A_\nu] =\pd_\mu A_\nu - \pd_\nu A_\mu -i\,\gs [A_\mu , A_\nu]\,.
\end{align}
This results in the following $SU(N)$ Yang-Mills action in curved spacetime:
\begin{align}\label{SUNYM_curve}
  \begin{split}
    \mathcal{A} = &\int d^4 x \sqrt{-g} \Bigg[ \overline{\psi} \left( i\, \mathfrak{e}_m{}^\mu \, \gamma^m \, \nabla_\mu- m \right) \psi - \cfrac14\, \left(f^a_{\mu\nu}\right)^2\\
      &+ \gs \,\overline{\psi} \left( \mathfrak{e}_m{}^\mu \gamma^m \right) \psi\,A_\mu -g^{\mu\nu} g^{\alpha\beta} \, \gs \,f^{abc} \pd_\mu A^a_\alpha A^b_\nu A^c_\beta -\cfrac14\,\gs^2\,f^{amn}\,f^{aij}\,g^{\mu\nu} g^{\alpha\beta} \,A^m_\mu\,A^i_\nu\,A^n_\alpha\,A^j_\beta \Bigg]\,.
  \end{split}
\end{align}

The perturbative quantisation of kinetic parts of the action is discussed above. The following theorem gives the perturbative expansion for the coupling of fermions to a gauge vector.
\begin{theorem}
    \begin{align}\label{ffv_vertex}
        \begin{split}
            &\int d^4 x \sqrt{-g} \,\gs\, \overline{\psi} \left(\mathfrak{e}_m{}^\mu \gamma^m \right) \psi \, A_\mu \\
            &=\sum\limits_{n=0}^\infty \int \cfrac{d^4 p_1}{(2\pi)^4}\,\cfrac{d^4 p_2}{(2\pi)^4}\,\cfrac{d^4 k}{(2\pi)^4} \,\prod\limits_{i=0}^n \cfrac{d^4 k_i}{(2\pi)^4}\, (2\pi)^4\,\delta\left(p_1+p_2+k+\sum k_i \right) \,h_{\rho_1\sigma_1}(k_1)\cdots h_{\rho_n\sigma_n}(k_n) \\
            &\hspace{10pt}\times \, \kappa^n \, \overline{\psi}(p_2) \left[\gs \,\gamma^m\,T^a \, \left(\sqrt{-g}\,\mathfrak{e}_m{}^\mu\right)^{\rho_1\sigma_1\cdots\rho_n\sigma_n} \right] \psi(p_1) \, A^a_\mu(k)\,.
        \end{split}
    \end{align}
\end{theorem}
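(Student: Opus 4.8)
The plan is to exploit the factorisation theorem in its most elementary incarnation. The interaction density $\sqrt{-g}\,\gs\,\overline\psi\,\mathfrak{e}_m{}^\mu\gamma^m\,\psi\,A_\mu$ contains no derivatives acting on any field, so the only object that spawns an infinite series in $\kappa$ is the gravitational factor $\sqrt{-g}\,\mathfrak{e}_m{}^\mu$; the remaining data $\gs$, $\gamma^m$, the generators $T^a$, and the matter fields $\overline\psi$, $\psi$, $A_\mu^a$ are finite and derivative-free. This makes the present case the simplest among all the interaction rules, since no Christoffel symbols and, crucially, no vertex momentum factors will appear.

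First I would expand the single infinite series. Using the perturbative-expansion notation and the $C_E$-tensor of Theorem \ref{Theorem_3}, I write
\begin{align}
    \sqrt{-g}\,\mathfrak{e}_m{}^\mu = \sum\limits_{n=0}^\infty \kappa^n \left(\sqrt{-g}\,\mathfrak{e}_m{}^\mu\right)^{\rho_1\sigma_1\cdots\rho_n\sigma_n} h_{\rho_1\sigma_1}\cdots h_{\rho_n\sigma_n},
\end{align}
so that each coefficient is computed recursively and contracts with $n$ graviton fields. Substituting this series into the density and expanding the matrix-valued field $A_\mu = A_\mu^a T^a$ while leaving the bilinear $\overline\psi\,\gamma^m\,\psi$ untouched already yields the position-space form of the vertex at every order $n$.

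Next I would pass to momentum space. Replacing each field by its Fourier integral --- the gravitons $h_{\rho_i\sigma_i}$ carrying momenta $k_i$, the spinors $\psi$ and $\overline\psi$ carrying $p_1$ and $p_2$, and the gauge field $A_\mu^a$ carrying $k$ --- and performing $\int d^4 x$ over the resulting product of plane waves produces the overall factor $(2\pi)^4\,\delta\!\left(p_1+p_2+k+\sum k_i\right)$. Collecting the $\kappa^n$ coefficient then reproduces the claimed expression, with the colour and spinor structure $\gs\,\gamma^m\,T^a$ sitting between $\overline\psi(p_2)$ and $\psi(p_1)$ and the entire gravitational kinematics carried by $\left(\sqrt{-g}\,\mathfrak{e}_m{}^\mu\right)^{\rho_1\sigma_1\cdots\rho_n\sigma_n}$.

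The computation presents no genuine obstacle; the only point requiring care is the momentum bookkeeping in the delta function, together with the observation --- in contrast with the kinetic terms treated earlier --- that the absence of derivatives means no momentum factors are generated at the vertex itself, so the whole momentum dependence of the rule is confined to the conservation constraint. The last detail to settle is that the Fourier conventions fix the signs and placement of the momenta consistently with the fermion rule \eqref{rule_F_1} established above.
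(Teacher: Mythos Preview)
Your proposal is correct and follows exactly the approach the paper adopts throughout Section~\ref{Feynman_Rules}: isolate the single derivative-free infinite-series factor $\sqrt{-g}\,\mathfrak{e}_m{}^\mu$ via the factorisation theorem, expand it in the $(\sqrt{-g}\,\mathfrak{e}_m{}^\mu)^{\rho_1\sigma_1\cdots\rho_n\sigma_n}$ notation, and then Fourier-transform each field to produce the momentum-conserving delta function. The paper in fact omits a detailed proof of this theorem, treating it as a direct instance of the general procedure announced at the start of the section; your observation that the absence of derivatives makes this the simplest case is exactly the implicit justification.
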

This expression produces the following Feynman rule:
\begin{align}\label{rule_QQG}
   \nonumber \\
  \begin{gathered}
    \begin{fmffile}{FR_QQG}
      \begin{fmfgraph*}(30,30)
        \fmfleft{L1,L2,L3}
        \fmftop{T}
        \fmfbottom{B}
        \fmfright{R}
        \fmf{fermion}{B,V,T}
        \fmf{gluon}{V,R}
        \fmf{phantom}{V,L2}
        \fmfdot{V}
        \fmffreeze
        \fmf{dbl_wiggly}{L1,V}
        \fmf{dbl_wiggly}{L3,V}
        \fmf{dots}{L1,L3}
        \fmflabel{$\rho_1\sigma_1$}{L1}
        \fmflabel{$\rho_n\sigma_n$}{L3}
        \fmflabel{$\mu,a$}{R}
      \end{fmfgraph*}
    \end{fmffile}
  \end{gathered} \hspace{25pt} = i\, \kappa^n\, \gs\,\gamma^m\,T^a\,\left(\sqrt{-g} \,\mathfrak{e}_m{}^\mu \right)^{\rho_1\sigma_1\cdots\rho_n\sigma_n} . \\ \nonumber
\end{align}
The perturbative expansion for the cubic term in gauge vectors takes a similar form.
\begin{theorem}
    \begin{align}\label{vvv_vertex}
        \begin{split}
            &\int d^4 x \sqrt{-g} \, (-\gs)\,f^{abc}\, g^{\mu\nu} g^{\alpha\beta}\, \pd_\mu A^a_\alpha A^b_\nu A^c_\beta \\
            & = \sum\limits_{n=0}^\infty \int \cfrac{d^4 p_1}{(2\pi)^4} \, \cfrac{d^4 p_2}{(2\pi)^4} \, \cfrac{d^4 p_3}{(2\pi)^4} \,\prod\limits_{i=1}^n\cfrac{d^4 k_i}{(2\pi)^4}\, (2\pi)^4\,\delta\left(p_1 + p_2 + p_3 + \sum k_i\right) \,h_{\rho_1\sigma_1}(k_1) \cdots h_{\rho_n\sigma_n}(k_n) \\
            &\hspace{10pt}\times\,\kappa^n\,\left[ (-i\, \gs) \,f^{abc} \, (p_1)_{\sigma}\, \left(\sqrt{-g} \,g^{\mu_1\mu_3} g^{\mu_2\sigma}\right)^{\rho_1\sigma_1\cdots\rho_n\sigma_n} \right] \, A^a_{\mu_1}(p_1)\,A^b_{\mu_2}(p_2)\,A^c_{\mu_3}(p_3) \,.
        \end{split}
    \end{align}
\end{theorem}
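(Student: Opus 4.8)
The plan is to reduce the statement to a momentum-space bookkeeping exercise, exactly parallel to the derivation of the fermion--gauge-vector vertex \eqref{ffv_vertex}. The starting point is the position-space term $\int d^4 x \sqrt{-g}\,(-\gs) f^{abc} g^{\mu\nu} g^{\alpha\beta}\, \pd_\mu A^a_\alpha A^b_\nu A^c_\beta$, in which the factorisation theorem has already isolated the derivative-free geometric factor $\sqrt{-g}\, g^{\mu\nu} g^{\alpha\beta}$ from the single derivative $\pd_\mu$. That lone derivative remains attached to $A^a_\alpha$ and is the only source of gauge-field momentum dependence beyond the gravitons.

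First I would expand the derivative-free factor perturbatively using the $C_{(2;n)}$ family, writing
\[
\sqrt{-g}\, g^{\mu\nu} g^{\alpha\beta} = \sum_{n=0}^\infty \kappa^n \left(\sqrt{-g}\, g^{\mu\nu} g^{\alpha\beta}\right)^{\rho_1\sigma_1\cdots\rho_n\sigma_n} h_{\rho_1\sigma_1}(x)\cdots h_{\rho_n\sigma_n}(x),
\]
whose $n$-th coefficient is furnished by Theorem \ref{Theorem_2}. This turns the single geometric object into an explicit sum graded by the number of external gravitons.

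Next I would pass to momentum space by substituting the Fourier representations of each $h_{\rho_i\sigma_i}(x)$ and of $A^a_\alpha(x)$, $A^b_\nu(x)$, $A^c_\beta(x)$. The derivative acts only on the $a$-leg, producing a factor $(-i)(p_1)_\mu$ with $p_1$ the momentum routed into that leg. All plane waves combine into one exponential, and the $\int d^4 x$ integration yields $(2\pi)^4\,\delta\big(p_1+p_2+p_3+\sum_i k_i\big)$, enforcing overall momentum conservation and matching the stated measure.

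The remaining step is notational: I would contract the derivative momentum through the metric by relabelling indices so that $\pd_\mu$ on the $a$-leg appears as $(p_1)_\sigma$ contracted against $g^{\mu_2\sigma}$, while the second metric $g^{\alpha\beta}$ becomes $g^{\mu_1\mu_3}$ joining the $a$- and $c$-legs, with $\mu_1,\mu_2,\mu_3$ the polarisation indices of the three gauge fields; this reproduces $(-i\gs) f^{abc}(p_1)_\sigma\,(\sqrt{-g}\, g^{\mu_1\mu_3} g^{\mu_2\sigma})^{\rho_1\sigma_1\cdots\rho_n\sigma_n}$. The only genuine obstacle is this index and momentum bookkeeping: one must track which leg carries the single derivative and keep in mind that the symmetrisation built into the $C_{(2;n)}$ tensor acts solely on the graviton index pairs, not on the gauge indices, so the asymmetric placement of the two metric factors is consistent and faithfully records that exactly one of the three fields is differentiated. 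Since no integration by parts or further algebraic identity is required, the derivation is routine once the factorisation and the $C_{(2;n)}$ expansion are in place.
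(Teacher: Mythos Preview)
Your proposal is correct and follows exactly the procedure the paper prescribes at the opening of Section~\ref{Feynman_Rules}: apply the factorisation theorem to isolate $\sqrt{-g}\,g^{\mu\nu}g^{\alpha\beta}$, expand it via the $C_{(2;n)}$ family, Fourier-transform each field so that the lone derivative on $A^a_\alpha$ becomes $(-i)(p_1)_\mu$, and relabel indices. The paper does not supply a separate proof for this theorem; it treats the result as a direct instance of the general recipe, which is precisely what you have spelled out.
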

This expression produces the following rule:
\begin{align}\label{rule_GGG}
    \nonumber \\
    \begin{split}
    &
        \begin{gathered}
            \begin{fmffile}{FR_GGG}
                \begin{fmfgraph*}(30,30)
                    \fmfleft{L1,L2,L3}
                    \fmfright{R1,R2,R3}
                    \fmf{dbl_wiggly,tension=2}{L1,V}
                    \fmf{dbl_wiggly,tension=2}{L3,V}
                    \fmf{gluon,tension=0.5}{V,R1}
                    \fmf{gluon,tension=0.5}{V,R2}
                    \fmf{gluon,tension=0.5}{V,R3}
                    \fmffreeze
                    \fmf{dots}{L1,L3}
                    \fmfdot{V}
                    \fmflabel{$\rho_1\sigma_1$}{L1}
                    \fmflabel{$\rho_n\sigma_n$}{L3}
                    \fmflabel{$\mu_1,a,p_1$}{R1}
                    \fmflabel{$\mu_2,b,p_2$}{R2}
                    \fmflabel{$\mu_3,c,p_3$}{R3}
                \end{fmfgraph*}
            \end{fmffile}
        \end{gathered}
        \hspace{40pt} = \kappa^n \,\gs\,f^{abc} \Big[ (p_1-p_2)_{\sigma}\left(\sqrt{-g} \,g^{\mu_1\mu_2} g^{\mu_3\sigma}\right)^{\rho_1\sigma_1\cdots\rho_n\sigma_n}  \\ \\
        & +(p_3-p_1)_{\sigma} \left(\sqrt{-g} \,g^{\mu_1\mu_3} g^{\mu_2\sigma}\right)^{\rho_1\sigma_1\cdots\rho_n\sigma_n} +(p_2-p_3)_{\sigma} \left(\sqrt{-g} \,g^{\mu_2\mu_3} g^{\mu_1\sigma}\right)^{\rho_1\sigma_1\cdots\rho_n\sigma_n} \Big] .
    \end{split}
\end{align}
Lastly, the four-vector coupling term has the following perturbative structure.
\begin{theorem}
    \begin{align}\label{vvvv_vertex}
        \begin{split}
            &\int d^4 x \sqrt{-g} \left(-\cfrac14\,\gs^2\right)\, f^{amn} f^{aij} \,g^{\mu\nu} g^{\alpha\beta} \, A^m_\mu \, A^i_\nu\, A^n_\alpha\, A^j_\beta\\
            &=\sum\limits_{n=0}^\infty\int \cfrac{d^4 p_1}{(2\pi)^4}\,\cfrac{d^4 p_2}{(2\pi)^4}\,\cfrac{d^4 p_3}{(2\pi)^4}\,\cfrac{d^4 p_4}{(2\pi)^4}\,\prod\limits_{i=0}^n \cfrac{d^4 k_i}{(2\pi)^4}\,(2\pi)^4 \,\delta\left(p_1+p_2+p_3+p_4+\sum k_i \right) h_{\rho_1\sigma_1}(k_1)\cdots h_{\rho_n\sigma_n}(k_n) \\
            &\hspace{10pt}\times\left( -\cfrac14 \right) \gs^2 \kappa^n f^{amn} f^{aij} \left(\sqrt{-g}\, g^{\mu_1\mu_3} g^{\mu_2\mu_4}\right)^{\rho_1\sigma_1\cdots\rho_n\sigma_n}  A^m_{\mu_1}(p_1)A^n_{\mu_2}(p_2) A^i_{\mu_3}(p_3)A^j_{\mu_4}(p_4) .
        \end{split}
    \end{align}
\end{theorem}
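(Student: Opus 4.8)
The plan is to exploit the feature that makes this the simplest of all the gravitational couplings: the four-vector term is purely algebraic in the gauge fields and contains \emph{no} spacetime derivatives. Consequently, the factorisation theorem isolates the entire gravitational dependence into the single derivative-free factor $\sqrt{-g}\,g^{\mu\nu}g^{\alpha\beta}$, with no Christoffel symbols, vierbein factors, or derivative structures appearing anywhere. First I would rewrite the position-space integrand, pulling the colour structure $\gs^2 f^{amn} f^{aij}$ and the numerical prefactor $-1/4$ out front, leaving $\sqrt{-g}\,g^{\mu\nu}g^{\alpha\beta}$ as the only object carrying metric perturbations. This factor is precisely the $l=2$ member of the family defined through the $C_{(l;n)}$-tensors, so its perturbative expansion in $h_{\mu\nu}$ is already furnished by the recursive relation of Theorem~\ref{Theorem_2} together with the $X \overset{\text{note}}{=} \sum_n \kappa^n (X)^{\rho_1\sigma_1\cdots\rho_n\sigma_n} h_{\rho_1\sigma_1}\cdots h_{\rho_n\sigma_n}$ notation.

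Second, I would substitute the series $\sqrt{-g}\,g^{\mu\nu}g^{\alpha\beta} = \sum_n \kappa^n (\sqrt{-g}\,g^{\mu\nu}g^{\alpha\beta})^{\rho_1\sigma_1\cdots\rho_n\sigma_n} h_{\rho_1\sigma_1}\cdots h_{\rho_n\sigma_n}$ into the action and pass to momentum space. Each of the four gauge fields and each of the $n$ graviton fields is replaced by its Fourier transform, and the single spacetime integral $\int d^4 x$ then produces the overall momentum-conserving factor $(2\pi)^4 \delta(p_1+p_2+p_3+p_4+\sum k_i)$. This step is entirely mechanical and mirrors the Fourier passages already carried out for the scalar, Proca, and cubic-vector couplings, so I would not belabour it.

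The remaining work is index bookkeeping. I would relabel the contracted Lorentz indices $\mu\to\mu_1,\ \nu\to\mu_3,\ \alpha\to\mu_2,\ \beta\to\mu_4$ so that $g^{\mu\nu}g^{\alpha\beta}$ becomes $g^{\mu_1\mu_3}g^{\mu_2\mu_4}$, and correspondingly assign the dummy momenta $p_1,\dots,p_4$ to the fields $A^m,A^n,A^i,A^j$. One then verifies that this relabeling respects the pairing dictated by the structure constants: $g^{\mu\nu}$ contracts the $m$- and $i$-fields while $g^{\alpha\beta}$ contracts the $n$- and $j$-fields, which is exactly the pairing encoded by $g^{\mu_1\mu_3}g^{\mu_2\mu_4}$ once the fields are reordered as on the target side.

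The only point demanding genuine care, and hence the main (modest) obstacle, is confirming that the chosen assignment of integration momenta is consistent with the colour labeling, since $f^{amn}f^{aij}$ is not symmetric under arbitrary permutations of the four fields. Because the momenta are dummy integration variables, any self-consistent convention is admissible; I would simply fix the one displayed on the right-hand side and check that the metric contractions and colour indices then align term by term. No symmetrisation over external legs is needed at the level of the action, so once the convention is fixed the stated identity follows immediately.
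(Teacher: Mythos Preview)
Your proposal is correct and follows exactly the methodology the paper applies throughout: invoke the factorisation theorem to isolate the derivative-free factor $\sqrt{-g}\,g^{\mu\nu}g^{\alpha\beta}$, insert its perturbative series via the $(X)^{\rho_1\sigma_1\cdots\rho_n\sigma_n}$ notation, Fourier-transform each field, and collect the momentum-conserving delta function. The paper itself omits the proof for this theorem (as for most of the analogous ones), but your argument is precisely the pattern established in the scalar, Proca, and cubic-vector cases, and your index-relabeling check is the only nontrivial verification required.
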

This results in the following interaction rule:
\begin{align}\label{rule_GGGG}
    \nonumber \\
    \begin{split}
        &
        \begin{gathered}
            \begin{fmffile}{FR_GGGG}
                \begin{fmfgraph*}(30,30)
                    \fmfleft{L1,L2}
                    \fmfright{R1,R2,R3,R4}
                    \fmf{gluon,tension=.5}{R1,V}
                    \fmf{gluon,tension=.5}{R2,V}
                    \fmf{gluon,tension=.5}{R3,V}
                    \fmf{gluon,tension=.5}{R4,V}
                    \fmf{dbl_wiggly,tension=2}{L1,V}
                    \fmf{dbl_wiggly,tension=2}{L2,V}
                    \fmfdot{V}
                    \fmffreeze
                    \fmf{dots}{L1,L2}
                    \fmflabel{$\rho_1\sigma_1$}{L1}
                    \fmflabel{$\rho_n\sigma_n$}{L2}
                    \fmflabel{$\mu_1,a_1$}{R1}
                    \fmflabel{$\mu_2,a_2$}{R2}
                    \fmflabel{$\mu_3,a_3$}{R3}
                    \fmflabel{$\mu_4,a_4$}{R4}
                \end{fmfgraph*}
            \end{fmffile}
        \end{gathered}
        \hspace{30pt}=  -i\,\gs^2 \kappa^n \Bigg[ f^{a_1 a_4 s} f^{a_2 a_3 s} \left( \left(\sqrt{-g}\, g^{\mu_1\mu_2}g^{\mu_3\mu_4}\right)^{\rho_1\cdots\sigma_n}-\left(\sqrt{-g}\, g^{\mu_1\mu_3}g^{\mu_2\mu_4}\right)\right)^{\rho_1\cdots\sigma_n} \\
        &\hspace{60pt}+f^{a_1 a_3 s} f^{a_2 a_4 s} \left( \left(\sqrt{-g}\, g^{\mu_1\mu_2}g^{\mu_3\mu_4}\right)^{\rho_1\cdots\sigma_n}-\left(\sqrt{-g}\, g^{\mu_1\mu_4}g^{\mu_2\mu_3}\right)\right)^{\rho_1\cdots\sigma_n} \\
        &\hspace{60pt} +f^{a_1 a_2 s} f^{a_3 a_4 s} \left( \left(\sqrt{-g}\, g^{\mu_1\mu_3}g^{\mu_2\mu_4}\right)^{\rho_1\cdots\sigma_n}-\left(\sqrt{-g}\, g^{\mu_1\mu_4}g^{\mu_2\mu_3}\right)\right)^{\rho_1\cdots\sigma_n} \Bigg].
    \end{split}
\end{align}

Finally, we are going to discuss gauge fixing. The Yang-Mills action respects the following gauge transformations:
\begin{align}
  \begin{split}
    \delta \psi =& i\,\theta^a \, T^a \psi ,\\
    \delta A_\mu =& i\,\theta^a \,[T^a, A_\mu] + \cfrac{1}{\gs} \, \pd_\mu \theta^a\,T^a ,\\
    \delta A^a_\mu =& \cfrac{1}{\gs} \left[\pd_\mu \theta^a - g \, f^{abc} \, \theta^b \, A^c_\mu \right] .
  \end{split}
\end{align}
Here $\theta^a$ are the gauge parameters. The standard Lorentz gauge fixing conditions simplify calculations in a flat spacetime.
\begin{align}
  \pd^\mu A^a_\mu =0 .
\end{align}
For the case of curved spacetime, the covariant derivative replaces the standard derivative, which produces the new form of the Lorentz gauge fixing conditions:
\begin{align}
  g^{\mu\nu} \nabla_\mu A^a_\nu =0 .
\end{align}
Introducing this gauge fixing term will bring the kinetic part of the vector field to the same form discussed in the previous section. 

The ghost action is defined by the Faddeev-Popov determinant obtained from the gauge fixing condition:
\begin{align}
  \det\left[ \cfrac{\delta}{\delta \theta^b}\,\left\{ g^{\mu\nu} \nabla_\mu A^a_\nu \right\}  \right]  = \det\left[ \cfrac{1}{\gs}\,g^{\mu\nu} \, \nabla_\mu \left(\delta^{ab}\nabla_\nu  - \gs \,f^{abc}  \, A^c_\nu \right)\right].
\end{align}
It results in the following action:
\begin{align}
  \mathcal{A}_\text{FP} = \int d^4 x \left[  - g^{\mu\nu}\, \nabla_\mu\overline{c}^a \nabla_\nu c^a + \gs \,g^{\mu\nu} \nabla_\mu \overline{c}^a f^{abc} c^b A_\nu^c  \right].
\end{align}
The action's kinetic part is similar to a massless vector field. The section describing the interaction between ghosts, vectors, and gravitons allows for a perturbative expansion.
\begin{theorem}
    \begin{align}
        \begin{split}
            &\int d^4 x \sqrt{-g} \left[ \gs\,\pd_\mu \overline{c}^a \, f^{abc} \, c^b \, A^{c\,\mu} \right]\\
            &=\sum\limits_{n=0}^\infty \int \cfrac{d^4 p_1}{(2\pi)^4} \cfrac{d^4 p_2}{(2\pi)^4} \cfrac{d^4 k}{(2\pi)^4}\prod\limits_{i=0}^n \cfrac{d^4 k_i}{(2\pi)^4}\, (2\pi)^4 \delta\left(p_1+p_2+k+\sum k_i\right)\, h_{\rho_1\sigma_1}(k_1)\cdots h_{\rho_n\sigma_n}(k_n)\\
            &\hspace{15pt}\times\, i\,\kappa^n\,\gs\,(p_1)_\nu \, f^{abc} \, \overline{c}^a(p_1) \,c^b(p_2) \, \left(\sqrt{-g}\,g^{\mu\nu}\right)^{\rho_1\sigma_1\cdots\rho_n\sigma_n}\,\,A^c_\mu(k).
        \end{split}
    \end{align}
\end{theorem}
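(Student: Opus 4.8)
The plan is to treat this vertex exactly as in the preceding derivations, exploiting the fact that the ghost fields $\overline{c}^a$ and $c^b$ are spacetime scalars: their covariant derivatives collapse to ordinary derivatives, so no spin connection or Christoffel symbol appears and the full gravitational dependence is carried by a single metric factor. First I would raise the index on the gauge vector, writing $A^{c\,\mu}=g^{\mu\nu}A^c_\nu$, so that the interaction reads $\int d^4x\,\gs\,f^{abc}\,\sqrt{-g}\,g^{\mu\nu}\,\pd_\mu\overline{c}^a\,c^b\,A^c_\nu$. This already displays the split guaranteed by the factorisation theorem: the factor $\pd_\mu\overline{c}^a$ is finite and carries the only derivative, whereas $\sqrt{-g}\,g^{\mu\nu}$ is the sole infinite series and is free of derivatives.

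Next I would insert the perturbative expansion of the metric factor in the notation fixed above, $\sqrt{-g}\,g^{\mu\nu}=\sum_{n=0}^\infty\kappa^n\,(\sqrt{-g}\,g^{\mu\nu})^{\rho_1\sigma_1\cdots\rho_n\sigma_n}\,h_{\rho_1\sigma_1}\cdots h_{\rho_n\sigma_n}$. The coefficients here are precisely the symmetric $C_{(1;n)}$ tensors, which Theorem \ref{Theorem_2} makes computable by recursion; since they are defined through functional derivatives with respect to $h$, they are automatically symmetric under graviton permutations, so no separate symmetrisation is required once the series is substituted.

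I would then Fourier transform every field with the standard conventions, so that the spacetime integral $\int d^4x$ collapses to the momentum-conserving delta $(2\pi)^4\delta(p_1+p_2+k+\sum k_i)$, and the single derivative $\pd_\mu\overline{c}^a$ turns into the momentum factor accompanying $\overline{c}^a(p_1)$. After relabelling the symmetric dummy pair $\mu\leftrightarrow\nu$ in $g^{\mu\nu}$, this momentum appears as $(p_1)_\nu$ contracted with $A^c_\mu(k)$ through $(\sqrt{-g}\,g^{\mu\nu})^{\rho_1\sigma_1\cdots\rho_n\sigma_n}$. Collecting the coupling $\gs$, the structure constant $f^{abc}$, the factor $\kappa^n$, and the remaining field amplitudes then reproduces the stated expression.

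I do not anticipate a genuine obstacle: the derivation is a routine instance of the general scheme used for all earlier vertices, closely paralleling the abelian ghost rule. The one point that demands care is the purely mechanical bookkeeping of the overall factor of $i$ and its sign, which is dictated entirely by the chosen Fourier convention for $\pd_\mu\overline{c}^a$; here one must confirm that the scalar character of the ghosts indeed eliminates any connection term before the counting is done. Everything beyond that—the explicit values of $(\sqrt{-g}\,g^{\mu\nu})^{\rho_1\sigma_1\cdots\rho_n\sigma_n}$ at each order—is already furnished by Theorem \ref{Theorem_2}.
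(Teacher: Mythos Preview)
Your proposal is correct and follows precisely the scheme the paper uses throughout: factorise out $\sqrt{-g}\,g^{\mu\nu}$ as the only infinite series, expand it via the $C_{(1;n)}$ tensors, and Fourier transform the finite derivative part. The paper does not spell out a separate proof for this theorem, treating it as a routine instance of the same procedure applied to the scalar, abelian-ghost and quark--gluon vertices, so your write-up is in fact more explicit than the original.
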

This expression produced the following rule:
\begin{align}\label{rule_GhGhG}
   \nonumber \\
  \begin{gathered}
    \begin{fmffile}{FR_GhGhG}
      \begin{fmfgraph*}(30,30)
        \fmfleft{L1,L2}
        \fmfright{R1,R2,R3}
        \fmf{dbl_wiggly,tension=2}{L1,V}
        \fmf{dbl_wiggly,tension=2}{L2,V}
        \fmf{gluon,tension=.5}{V,R2}
        \fmf{dots_arrow,tension=.5}{R1,V}
        \fmf{dots_arrow,tension=.5}{V,R3}
        \fmfdot{V}
        \fmffreeze
        \fmf{dots}{L1,L2}
        \fmflabel{$\rho_1\sigma_1$}{L1}
        \fmflabel{$\rho_n\sigma_n$}{L2}
        \fmflabel{$\mu,c$}{R2}
        \fmflabel{$b$}{R1}
        \fmflabel{$p_1,a$}{R3}
      \end{fmfgraph*}
    \end{fmffile}
  \end{gathered}
  \hspace{20pt}=& - \kappa^n\,g_s\,f^{abc} \, (p_1)_\nu \,\left(\sqrt{-g}\,g^{\mu\nu}\right)^{\rho_1\sigma_1\cdots\rho_n\sigma_n}. \\ \nonumber
\end{align}

\subsection{General relativity}

General relativity is invariant with respect to local transformations spawned by coordinate transformations, so a gauge fixing procedure shall be performed. The perturbative approach describes gravity as small metric perturbations over a flat background. Therefore, it may seem that the theory is reduced to a gauge theory of rank-$2$ symmetric tensor, but this is not the case.

One shall distinguish a geometric theory from a theory of a rank-$2$ symmetric tensor. The difference dictates how to treat the gauge fixing condition. First, let us consider the case of a symmetric $h_{\mu\nu}$ tensor theory that admits the following gauge symmetry:
\begin{align}
  \delta h_{\mu\nu} = \pd_\mu \zeta_\nu + \pd_\nu \zeta_\mu .
\end{align}
The $h_{\mu\nu}$ tensor is fundamental to this theory, and a gauge fixing condition can be expressed solely in terms of it. For example, one can use the following gauge fixing condition:
\begin{align}\label{naive_gauge_fixing}
  \pd_\mu h^{\mu\nu}  - \cfrac12\,\pd^\nu h = 0\,.
\end{align}
This condition, along with others, determines the composition of the Faddeev-Popov ghosts. Because $h_{\mu\nu}$ is the fundamental object of the theory, the structure of divergences can only be expressed in terms of $h_{\mu\nu}$ alone. All geometric quantities, such as the Riemann tensor, Ricci tensor, scalar curvature, and others, are expressed via small metric perturbations, but the opposite is not true. Operators given in terms of $h_{\mu\nu}$ alone may not represent any geometric quantities. Therefore, in a gauge theory of a rank-$2$ symmetric tensor, one can expect to find divergencies that cannot be described by geometric quantities, which makes it a non-geometric theory.

On the contrary, gauge transformations are generated by coordinate frame transformations within the geometric approach. This has two immediate implications. Firstly, within a geometrical theory, gauge transformations are given by the so-called Lie derivatives:
\begin{align}
  \delta g_{\mu\nu} \overset{\text{def}}{=} \mathcal{L}_\zeta g_{\mu\nu} = \nabla_\mu \zeta_\nu + \nabla_\nu \zeta_\mu .
\end{align}
Here, $\mathcal{L}_\zeta$ is the Lie derivative with respect to an arbitrary vector field $\zeta$, which contains the gauge transformation parameters. Secondly, the gauge fixing condition must be expressed in geometrical quantities. Thus, gauge fixing conditions \eqref{naive_gauge_fixing} are inconsistent with the geometrical approach and cannot be implemented. Instead, we use the following gauge fixing conditions:
\begin{align}\label{the_gravity_gauge_fixing}
  \mathcal{G}^\mu \overset{\text{note}}{=} g^{\alpha\beta} \Gamma^\mu_{\alpha\beta} =0 .
\end{align}
Combined with the perturbative expansion given by equation \eqref{the_perturbative_expansion}, the gauge fixing conditions represented by equation \eqref{the_gravity_gauge_fixing} produce an infinite series.
\begin{align}
  \mathcal{G}^\nu = \cfrac{\kappa}{2}\,g^{\mu\nu} g^{\alpha\beta} \left[ \pd_\alpha h_{\beta\mu} + \pd_\beta h_{\alpha\mu} - \pd_\mu h_{\alpha\beta}\right] =  \kappa \left[\pd_\mu h^{\mu\nu} - \cfrac12\,\pd^\nu h \right] + \mathcal{O}(\kappa^2) ,
\end{align}
The infinite expansion defines the ghost sector, and the leading term reproduces naive gauge fixing \eqref{naive_gauge_fixing}.

The difference between geometrical theories of gravity and a gauge theory of $h_{\mu\nu}$ tensor is marked by the need to use gauge fixing condition \eqref{the_gravity_gauge_fixing}, and not conditions \eqref{naive_gauge_fixing}. Proceeding with The Faddeev-Popov prescription, firstly, we shall note that the gauge fixing condition $\mathcal{G}^\mu$ defined by \eqref{the_gravity_gauge_fixing} is a vector with mass dimension $+1$. Thus, one shall introduce an additional dimensional parameter in the gauge fixing term:
\begin{align}\label{Hilbert_gauge_fixed}
  \mathcal{A}_{\text{H}+\text{gf}} = \int d^4 x \sqrt{-g} \left[ -\cfrac{2}{\kappa^2}\,R + \cfrac{\epsilon}{2\,\kappa^2} \, g_{\mu\nu} \,\mathcal{G}^\mu \mathcal{G}^\nu \right].
\end{align}
Secondly, the Faddeev-Popov ghosts are also vectors. The structure of their action is defined by the variation of the gauge fixing term given by equation \eqref{the_gravity_gauge_fixing}:
\begin{align}
  \delta G^\mu = \mathcal{L}_\zeta \left[ g^{\alpha\beta} \, \Gamma^\mu_{\alpha\beta} \right] = \square \zeta^\mu - 2\, \Gamma^\mu_{\alpha\beta} \, \nabla^\alpha\zeta^\beta + R^\mu{}_\nu \zeta^\nu 
\end{align}
with $R_{\mu\nu}$ begin the Ricci tensor. Consequently, the ghost action reads:
\begin{align}
  \mathcal{A}_\text{ghost} = \int d^4 x \sqrt{-g} \left[ -g^{\alpha\beta} g^{\mu\nu} \nabla_\alpha \overline{c}_\mu \nabla_\beta c_\nu - 2\,\Gamma^\mu_{\alpha\beta} \,\overline{c}_\mu \,\nabla^\alpha c^\beta + R_{\mu\nu} \, \overline{c}^\mu \,c^\nu \right].
\end{align}
In all other respects, the handling of the Faddeev-Popov ghosts is unchanged.

The standard perturbative expansion generates the interaction rules. The structure of graviton interactions is given by action \eqref{Hilbert_gauge_fixed}:
\begin{align}
  \begin{split}
    \mathcal{A}_{\text{H}+\text{gf}} &= \int d^4 x \sqrt{-g} \left[ -\cfrac{2}{\kappa^2}\,R + \cfrac{\epsilon}{2\,\kappa^2}\, g_{\mu\nu}\, g^{\alpha\beta}\, g^{\rho\sigma} \,\Gamma^\mu_{\alpha\beta}\,\Gamma^\nu_{\rho\sigma}\right] = \\
    & = \int d^4 x \sqrt{-g} \, g^{\mu\nu} g^{\alpha\beta} g^{\rho\sigma} \left(-\cfrac{2}{\kappa^2}\right) \left[ \Gamma_{\alpha\mu\rho}\Gamma_{\sigma\nu\beta} - \Gamma_{\alpha\mu\nu} \Gamma_{\rho\beta\sigma} - \cfrac{\epsilon}{4} \,\Gamma_{\mu\alpha\beta} \Gamma_{\nu\rho\sigma} \right] .
  \end{split}
\end{align}
It admits the following perturbative expansion:
\begin{theorem}
    \begin{align}
        \begin{split}
            \mathcal{A}_{\text{H}+\text{gf}} =& \sum\limits_{n=0}^\infty\int\cfrac{d^4 p_1}{(2\pi)^4} \cfrac{d^4 p_2}{(2\pi)^4} \prod\limits_{i=1}^n \cfrac{d^4 k_i}{(2\pi)^4}\,(2\pi)^4\,\delta\Big(p_1+p_2+\sum k_i \Big) \, h_{\rho_1\sigma_1}(k_1) \cdots h_{\rho_n\sigma_n}(k_n)\\
            &\times \left(2\,\kappa^n \right)\left( \sqrt{-g} \, g^{\mu\nu} g^{\alpha\beta} g^{\rho\sigma} \right)^{\rho_1\sigma_1\cdots\rho_n\sigma_n} (p_1)_{\lambda_1} (p_2)_{\lambda_2} \, h_{\mu_1\nu_1}(p_1) h_{\mu_2\nu_2}(p_2) \\
            & \times \Bigg[ \left(\Gamma_{\alpha\mu\rho}\right)^{\lambda_1\mu_1\nu_1} \left(\Gamma_{\sigma\nu\beta}\right)^{\lambda_2\mu_2\nu_2} - \left(\Gamma_{\alpha\mu\nu} \right)^{\lambda_1\mu_1\nu_1} \left( \Gamma_{\rho\beta\sigma}\right)^{\lambda_2\mu_2\nu_2} - \cfrac{\epsilon}{4} \left( \Gamma_{\mu\alpha\beta} \right)^{\lambda_1\mu_1\nu_1} \left(\Gamma_{\nu\rho\sigma} \right)^{\lambda_2\mu_2\nu_2}\Bigg].
        \end{split} 
    \end{align}
\end{theorem}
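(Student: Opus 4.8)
The plan is to start from the factorised form of $\mathcal{A}_{\text{H}+\text{gf}}$ already displayed just above the statement, in which the only derivative-carrying objects are the fully lowered Christoffel symbols and the entire infinite series resides in the single scalar factor $\sqrt{-g}\,g^{\mu\nu}g^{\alpha\beta}g^{\rho\sigma}$. The decisive structural input, supplied by the earlier theorem on Christoffel symbols, is that $\Gamma_{\alpha\mu\nu}=\tfrac{\kappa}{2}[\pd_\mu h_{\nu\alpha}+\pd_\nu h_{\mu\alpha}-\pd_\alpha h_{\mu\nu}]$ is exactly linear both in $\kappa$ and in one power of $\pd h$. Hence each of the three quadratic combinations $\Gamma\Gamma$ supplies precisely two gravitons, two derivatives, and an overall $\kappa^2$, while every remaining graviton and the whole $\kappa$-series originate solely from the metric factor.

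First I would send each Christoffel to momentum space using the coefficient $(\Gamma_{\mu\alpha\beta})^{\lambda\rho\sigma}$ introduced in the vector-field section, so that $\Gamma\to\kappa(-i)p_\lambda(\Gamma)^{\lambda\rho\sigma}h_{\rho\sigma}(p)$. A product of two Christoffels then yields $\kappa^2(-i)^2(p_1)_{\lambda_1}(p_2)_{\lambda_2}(\Gamma)^{\lambda_1\mu_1\nu_1}(\Gamma)^{\lambda_2\mu_2\nu_2}h_{\mu_1\nu_1}(p_1)h_{\mu_2\nu_2}(p_2)$. Multiplying by the prefactor $-2/\kappa^2$ and using $(-i)^2=-1$ collapses the $\kappa^{\pm2}$ bookkeeping to the single numerical factor $2$ appearing in the target, with the two momenta $p_1,p_2$ attached to the derivative indices $\lambda_1,\lambda_2$ of the two Christoffel coefficients.

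Next I would expand the surviving scalar factor with the $C_{(3;n)}$ tensors of the previous section, writing $\sqrt{-g}\,g^{\mu\nu}g^{\alpha\beta}g^{\rho\sigma}=\sum_n \kappa^n(\sqrt{-g}\,g^{\mu\nu}g^{\alpha\beta}g^{\rho\sigma})^{\rho_1\sigma_1\cdots\rho_n\sigma_n}h_{\rho_1\sigma_1}\cdots h_{\rho_n\sigma_n}$; since this factor carries no derivatives, the $k_i$ enter with no accompanying momentum. Fourier transforming every field and carrying out the single $\int d^4x$ generates the momentum-conserving $(2\pi)^4\delta(p_1+p_2+\sum k_i)$, and assembling the three bracketed $\Gamma\Gamma$ structures with their inherited relative weights $+1,-1,-\tfrac{\epsilon}{4}$ then reproduces the claimed expansion term by term.

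The main obstacle is organisational rather than conceptual: one must carry the three inequivalent index placements $\Gamma_{\alpha\mu\rho}\Gamma_{\sigma\nu\beta}$, $\Gamma_{\alpha\mu\nu}\Gamma_{\rho\beta\sigma}$, and $\Gamma_{\mu\alpha\beta}\Gamma_{\nu\rho\sigma}$ through the substitution, keeping each derivative index correctly contracted with $p_1$ or $p_2$ and each remaining index correctly contracted against the metric-factor coefficient, while preserving the exact weight $-\epsilon/4$ of the gauge-fixing piece. No symmetrisation over the external legs is needed here, since the statement concerns the action itself rather than the vertex; the symmetric combination yielding the Feynman rule would only be imposed afterwards.
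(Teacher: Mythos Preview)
Your proposal is correct and follows exactly the approach the paper intends: the paper does not write out a proof for this theorem but simply states it after displaying the factorised form $\mathcal{A}_{\text{H}+\text{gf}}=\int d^4x\,\sqrt{-g}\,g^{\mu\nu}g^{\alpha\beta}g^{\rho\sigma}(-2/\kappa^2)[\Gamma_{\alpha\mu\rho}\Gamma_{\sigma\nu\beta}-\Gamma_{\alpha\mu\nu}\Gamma_{\rho\beta\sigma}-\tfrac{\epsilon}{4}\Gamma_{\mu\alpha\beta}\Gamma_{\nu\rho\sigma}]$, relying on the general recipe announced at the start of Section~\ref{Feynman_Rules}. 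Your bookkeeping of the overall factor, $(-2/\kappa^{2})\cdot\kappa^{2}\cdot(-i)^{2}=2$, and your observation that the two derivative-bearing gravitons come solely from the Christoffels while the remaining $n$ gravitons and the $\kappa^{n}$ series come solely from $(\sqrt{-g}\,g^{\mu\nu}g^{\alpha\beta}g^{\rho\sigma})^{\rho_1\sigma_1\cdots\rho_n\sigma_n}$, are precisely the content the paper leaves implicit.
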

The following formula gives the complete expression for the graviton vertex:
\begin{align}\label{Rule_Gravitons}
    \nonumber \\
    \begin{split}
        & \hspace{10pt}
        \begin{gathered}
            \begin{fmffile}{FR_h}
                \begin{fmfgraph*}(30,30)
                    \fmfleft{L1,L2}
                    \fmfright{R1,R2}
                    \fmf{dbl_wiggly}{L1,V}
                    \fmf{dbl_wiggly}{L2,V}
                    \fmf{dbl_wiggly}{R1,V}
                    \fmf{dbl_wiggly}{R2,V}
                    \fmffreeze
                    \fmfdot{V}
                    \fmf{dots}{L1,L2}
                    \fmflabel{$\mu_1\nu_1,p_1$}{R1}
                    \fmflabel{$\mu_2\nu_2,p_2$}{R2}
                    \fmflabel{$\mu_3\nu_3,p_3$}{L1}
                    \fmflabel{$\mu_n\nu_n,p_n$}{L2}
                \end{fmfgraph*}
            \end{fmffile}
        \end{gathered}
        \hspace{35pt}= \,i\,2\,\kappa^{n-2}\,\left(\sqrt{-g}\,g^{\mu\nu}g^{\alpha\beta}g^{\rho\sigma}\right)^{\mu_3\nu_3\cdots\mu_n\nu_n} \, (p_1)_{\lambda_1} (p_2)_{\lambda_2} \\ \\
        & \times \Bigg[ \left(\Gamma_{\alpha\mu\rho}\right)^{\lambda_1\mu_1\nu_1} \left(\Gamma_{\sigma\nu\beta}\right)^{\lambda_2\mu_2\nu_2} - \left(\Gamma_{\alpha\mu\nu} \right)^{\lambda_1\mu_1\nu_1} \left( \Gamma_{\rho\beta\sigma}\right)^{\lambda_2\mu_2\nu_2} - \cfrac{\epsilon}{4} \left( \Gamma_{\mu\alpha\beta} \right)^{\lambda_1\mu_1\nu_1} \left(\Gamma_{\nu\rho\sigma} \right)^{\lambda_2\mu_2\nu_2}\Bigg] \\
        & + \text{permutations}.
    \end{split}
\end{align}
The sum goes over all possible permutations of graviton parameters $\{\mu_i\,\nu_i\,p_i\}$.

The ghost action is treated similarly.
\begin{align}
  \begin{split}
    \mathcal{A}_\text{ghost} =& \int d^4 x \sqrt{-g} \left[ -g^{\alpha\beta} g^{\mu\nu} \nabla_\alpha \overline{c}_\mu \nabla_\beta c_\nu - 2\,\Gamma^\mu_{\alpha\beta} \,\overline{c}_\mu \,\nabla^\alpha c^\beta + R_{\mu\nu} \, \overline{c}^\mu \,c^\nu \right] \\
    =& \int d^4 x \sqrt{-g} \left[ -g^{\mu\nu} g^{\alpha\beta} \, \pd_\alpha\overline{c}_\mu \, \pd_\beta c_\nu \right]\\
    & + \int d^4 x \sqrt{-g}\, g^{\mu\alpha}g^{\nu\beta}g^{\rho\sigma} \left[ \Gamma_{\beta\rho\alpha} \pd_\sigma \overline{c}_\mu c_\nu - \Gamma_{\alpha\rho\beta}\,\overline{c}_\mu \pd_\sigma c_\nu + \pd_\rho\Gamma_{\sigma\alpha\beta} \,\overline{c}_\mu \,c_\nu - \pd_\alpha \Gamma_{\rho\beta\sigma} \overline{c}_\mu c_\nu \right]\\
    & + \int d^4 x \sqrt{-g}\,g^{\mu\alpha} g^{\nu\beta} g^{\rho\sigma} g^{\lambda\tau} \left[ \Gamma_{\rho\alpha\lambda} \Gamma_{\sigma\beta\tau} - \Gamma_{\rho\alpha\beta} \Gamma_{\sigma\lambda\tau} + \Gamma_{\alpha\rho\lambda} \Gamma_{\beta\sigma\tau} \right] \overline{c}_\mu c_\nu .
  \end{split}
\end{align}
It has the following perturbative expansion:
\begin{theorem}
    \begin{align}
        \begin{split}
            \mathcal{A}_\text{ghost} =& \sum\limits_{n=0}^\infty\int \cfrac{d^4 p_1}{(2\pi)^4} \cfrac{d^4 p_2}{(2\pi)^4} \prod\limits_{i=1}^n \cfrac{d^4 k_i}{(2\pi)^4} \, (2\pi)^4 \delta\big(p_1+p_2 + \sum k_i\big) h_{\rho_1\sigma_1}(k_1)\cdots h_{\rho_n\sigma_n}(k_n) \, \overline{c}_\mu (p_1) c_\nu(p_2)\\
            &\times \kappa^n \,\left(\sqrt{-g}\,g^{\mu\nu} g^{\alpha\beta} \right)^{\rho_1\sigma_1\cdots\rho_n\sigma_n} (p_1)_\alpha (p_2)_\beta\\
            +& \sum\limits_{n=1}^\infty\int \cfrac{d^4 p_1}{(2\pi)^4} \cfrac{d^4 p_2}{(2\pi)^4} \prod\limits_{i=1}^n \cfrac{d^4 k_i}{(2\pi)^4} \, (2\pi)^4 \delta\big(p_1+p_2 + \sum k_i\big) h_{\rho_1\sigma_1}(k_1)\cdots         h_{\rho_n\sigma_n}(k_n) \, \overline{c}_\mu (p_1) c_\nu(p_2)\\
            &\times \kappa^n (-1)\left(\sqrt{-g}\,g^{\mu\alpha}g^{\nu\beta} g^{\rho\sigma}\right)^{\rho_2\sigma_2\cdots\rho_n\sigma_n} (k_1)_\lambda\left[ (p_1)_\sigma \left(\Gamma_{\beta\rho\alpha}\right)^{\lambda\rho_1\sigma_1}-(p_2)_\sigma \left(\Gamma_{\alpha\rho\beta}\right)^{\lambda\rho_1\sigma_1} \right.\\
            &\hspace{210pt}\left. + (k_1)_\rho \left(\Gamma_{\sigma\alpha\beta}\right)^{\lambda\rho_1\sigma_1} - (k_1)_\alpha \left(\Gamma_{\rho\beta\sigma}\right)^{\lambda\rho_1\sigma_1}\right]\\
            +& \sum\limits_{n=2}^\infty\int \cfrac{d^4 p_1}{(2\pi)^4} \cfrac{d^4 p_2}{(2\pi)^4} \prod\limits_{i=1}^n \cfrac{d^4 k_i}{(2\pi)^4} \, (2\pi)^4 \delta\big(p_1+p_2 + \sum k_i\big) h_{\rho_1\sigma_1}(k_1)\cdots h_{\rho_n\sigma_n}(k_n) \, \overline{c}_\mu (p_1) c_\nu(p_2)\\
            &\times \kappa^n (-1) \left(\sqrt{-g} \,g^{\mu\alpha}g^{\nu\beta} g^{\rho\sigma} g^{\lambda\tau}\right)^{\rho_3\sigma_3\cdots\rho_n\sigma_n}\,(k_1)_{\lambda_1} (k_2)_{\lambda_2} \left[ \left(\Gamma_{\rho\alpha\lambda}\right)^{\lambda_1\rho_1\sigma_1} \left(\Gamma_{\sigma\beta\tau}\right)^{\lambda_2\rho_2\sigma_2} \right.\\
            & \hspace{140pt}\left. - \left(\Gamma_{\rho\alpha\beta}\right)^{\lambda_1\rho_1\sigma_1} \left(\Gamma_{\sigma\lambda\tau}\right)^{\lambda_2\rho_2\sigma_2} + \left(\Gamma_{\alpha\rho\lambda}\right)^{\lambda_1\rho_1\sigma_1} \left(\Gamma_{\beta\sigma\tau}\right)^{\lambda_2\rho_2\sigma_2} \right].
        \end{split}
    \end{align}
\end{theorem}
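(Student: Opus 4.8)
The plan is to treat the three factorized pieces of $\mathcal{A}_\text{ghost}$ separately — the kinetic piece carrying no Christoffel symbol, the four terms carrying a single Christoffel symbol, and the three terms carrying two Christoffel symbols — since the split into derivative-free infinite series and finite derivative structures has already been carried out in the line preceding the statement. Each piece is a product of a purely metric-dependent factor (a product of $\sqrt{-g}$ with inverse metrics) times a finite number of derivatives acting on the ghost fields and, where present, Christoffel symbols. The strategy mirrors the one used for every previous rule in this section: expand the metric prefactor with the $C_{(l;n)}$-tensor machinery of Theorem~\ref{Theorem_2}, expand each Christoffel symbol with its known perturbative structure $(\Gamma_{\mu\alpha\beta})^{\lambda\rho\sigma}$, and pass to momentum space.

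First I would handle the kinetic term $-\sqrt{-g}\,g^{\mu\nu}g^{\alpha\beta}\,\pd_\alpha\overline{c}_\mu\,\pd_\beta c_\nu$. Using the notation $X = \sum_n \kappa^n (X)^{\rho_1\sigma_1\cdots\rho_n\sigma_n} h_{\rho_1\sigma_1}\cdots h_{\rho_n\sigma_n}$, the prefactor expands directly into the coefficients $(\sqrt{-g}\,g^{\mu\nu}g^{\alpha\beta})^{\rho_1\sigma_1\cdots\rho_n\sigma_n}$, which are computable from Theorem~\ref{Theorem_2}. Fourier transforming, the two ghost derivatives convert to $(p_1)_\alpha$ and $(p_2)_\beta$ acting on $\overline{c}_\mu(p_1)$ and $c_\nu(p_2)$, and the product of all fields produces the overall momentum-conserving delta function. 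This reproduces the first sum, which starts at $n=0$ and whose background term gives the ghost propagator.

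Next I would expand the single-Christoffel terms. The observation driving the power counting is that each Christoffel symbol begins at order $\kappa^1$, so this contribution starts at $n=1$: one graviton ($k_1$, with index pair $\rho_1\sigma_1$) is consumed by the Christoffel structure while the remaining $n-1$ gravitons are absorbed into the prefactor, now expanded as $(\sqrt{-g}\,g^{\mu\alpha}g^{\nu\beta}g^{\rho\sigma})^{\rho_2\sigma_2\cdots\rho_n\sigma_n}$. Using the momentum-space rule $\Gamma_{\mu\alpha\beta} \Leftrightarrow \kappa(-i)p_\lambda(\Gamma_{\mu\alpha\beta})^{\lambda\rho\sigma}h_{\rho\sigma}$, the Christoffel supplies the common factor $(k_1)_\lambda$; the derivatives on $\overline{c}$ and $c$ supply $(p_1)_\sigma$ and $(p_2)_\sigma$ in the first two terms, while the external derivatives $\pd_\rho\Gamma$ and $\pd_\alpha\Gamma$ supply $(k_1)_\rho$ and $(k_1)_\alpha$ in the last two, so the four terms collapse into the bracketed combination shown. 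The double-Christoffel piece is treated identically, now starting at $n=2$ because two gravitons are consumed by the two Christoffel factors, leaving $(\sqrt{-g}\,g^{\mu\alpha}g^{\nu\beta}g^{\rho\sigma}g^{\lambda\tau})^{\rho_3\sigma_3\cdots\rho_n\sigma_n}$.

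The main obstacle will be momentum bookkeeping and sign tracking rather than any conceptual difficulty. In the single- and double-Christoffel pieces the graviton momenta enter both through the symmetric metric prefactor and explicitly through the Christoffel structures, where particular lines ($k_1$, or $k_1$ and $k_2$) are singled out; one must check that this singling-out is legitimate at the level of the integrated action, where the symmetry of the graviton field insertions under relabeling makes the choice of distinguished lines immaterial (explicit symmetrization is restored only when passing to the vertex). Equally delicate is carrying the factors of $(-i)$ — one from each Fourier-transformed derivative and one from each momentum-space Christoffel — so that the product $(-i)^2=-1$ correctly reproduces the overall $(-1)$ prefactors multiplying the single- and double-Christoffel sums, while the kinetic term's two derivatives combine with its explicit minus sign to give $+1$. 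Finally one must verify that the relabeling of the contracted metric indices faithfully reproduces the pattern of the three terms $\Gamma_{\rho\alpha\lambda}\Gamma_{\sigma\beta\tau} - \Gamma_{\rho\alpha\beta}\Gamma_{\sigma\lambda\tau} + \Gamma_{\alpha\rho\lambda}\Gamma_{\beta\sigma\tau}$, which is the most error-prone but ultimately mechanical part of the computation.
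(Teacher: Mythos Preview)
Your proposal is correct and follows precisely the methodology the paper uses throughout: the paper itself omits any explicit proof of this theorem, but the intended derivation is exactly the one you outline --- take the factorized form of $\mathcal{A}_\text{ghost}$ displayed immediately before the statement, expand each metric prefactor via the $C_{(l;n)}$ notation, insert the momentum-space Christoffel structure $(\Gamma_{\mu\alpha\beta})^{\lambda\rho\sigma}$, and Fourier transform. Your remarks on the $n\geq 0,1,2$ power counting, the $(-i)$ bookkeeping, and the legitimacy of singling out $k_1$ (and $k_1,k_2$) before symmetrization are all on point and match the pattern established in the earlier vector and graviton-vertex theorems.
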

The complete expression describing graviton-ghost vertices reads:
\begin{align}\label{Rules_Graviton-Ghosts}
    \nonumber \\
    \begin{split}
        & \hspace{30pt}
        \begin{gathered}
            \begin{fmffile}{FR_Gh_2}
                \begin{fmfgraph*}(30,30)
                    \fmfleft{L1,L2}
                    \fmfright{R1,R2}
                    \fmf{dbl_wiggly}{L1,V}
                    \fmf{dbl_wiggly}{L2,V}
                    \fmf{dots_arrow}{R1,V}
                    \fmf{dots_arrow}{V,R2}
                    \fmfdot{V}
                    \fmffreeze
                    \fmf{dots}{L1,L2}
                    \fmflabel{$\rho_1\sigma_1,k_1$}{L1}
                    \fmflabel{$\rho_n\sigma_n,k_n$}{L2}
                    \fmflabel{$\nu,p_2$}{R1}
                    \fmflabel{$\mu,p_1$}{R2}
                \end{fmfgraph*}
            \end{fmffile}
        \end{gathered}
        \hspace{25pt}= i\,\kappa^n \Bigg[ \left(\sqrt{-g} \,g^{\mu\nu}g^{\alpha\beta}\right)^{\rho_1\sigma_1\cdots\rho_n\sigma_n} (p_1)_\alpha (p_2)_\beta \\ \\
        &+\Bigg\{ - \left(\sqrt{-g}\,g^{\mu\alpha} g^{\nu\beta}g^{\rho\sigma}\right)^{\rho_2\sigma_2\cdots\rho_n\sigma_n}(k_1)_\lambda \Bigg[ (p_1)_\sigma (\Gamma_{\beta\rho\alpha})^{\lambda\rho_1\sigma_1} - (p_2)_\sigma (\Gamma_{\alpha\rho\beta})^{\lambda\rho_1\sigma_1} \\
        & \hspace{150pt}+ (k_1)_\rho (\Gamma_{\sigma\alpha\beta})^{\lambda\rho_1\sigma_1}-(k_1)_\alpha (\Gamma_{\rho\beta\sigma})^{\lambda\rho_1\sigma_1} \Bigg] + \text{permutations} \Bigg\}\\
        & + \Bigg\{ -\left(\sqrt{-g}\,g^{\mu\alpha} g^{\nu\beta}g^{\rho\sigma} g^{\lambda\tau} \right)^{\rho_3\sigma_3\cdots\rho_n\sigma_n} \,(k_1)_{\lambda_1} (k_2)_{\lambda_2} \left[ \left(\Gamma_{\rho\alpha\lambda}\right)^{\lambda_1\rho_1\sigma_1} \left(\Gamma_{\sigma\beta\tau}\right)^{\lambda_2\rho_2\sigma_2} \right.\\
        & \hspace{70pt}\left. - \left(\Gamma_{\rho\alpha\beta}\right)^{\lambda_1\rho_1\sigma_1} \left(\Gamma_{\sigma\lambda\tau}\right)^{\lambda_2\rho_2\sigma_2} + \left(\Gamma_{\alpha\rho\lambda}\right)^{\lambda_1\rho_1\sigma_1} \left(\Gamma_{\beta\sigma\tau}\right)^{\lambda_2\rho_2\sigma_2} \right] + \text{permutations} \Bigg\} \Bigg].
    \end{split}
\end{align}

The standard procedure derives propagators for ghosts and gravitons. The following expression gives the ghost propagator.
\begin{align}
  \begin{gathered}
    \begin{fmffile}{FR_Ghost_Propagator}
      \begin{fmfgraph*}(30,30)
        \fmfleft{L}
        \fmfright{R}
        \fmf{dots}{L,R}
        \fmflabel{$\mu$}{L}
        \fmflabel{$\nu$}{R}
      \end{fmfgraph*}
    \end{fmffile}
  \end{gathered}
  \hspace{15pt} = i \, \cfrac{\eta_{\mu\nu}}{k^2}\,.
\end{align}
The graviton propagator contains the gauge fixing parameter $\epsilon$. The propagator corresponds to the part of the microscopic action quadratic in perturbations:
\begin{align}
  \int d^4 x \sqrt{-g} \left[ -\cfrac{2}{\kappa^2}\, R + \cfrac{\epsilon}{2\,\kappa^2} ~\mathcal{G}_\mu \mathcal{G}^\mu \right] = \int d^4 x \left[  -\cfrac12\,h^{\mu\nu} \mathcal{D}_{\mu\nu\alpha\beta}(\epsilon)\, \square h^{\alpha\beta}  \right] +\okappa{1}.
\end{align}
The Nieuwenhuizen operators providee a good mean to express the $\mathcal{D}$ operator in the momentum representation \cite{VanNieuwenhuizen:1981ae,Accioly:2000nm}
\begin{align}
  \mathcal{D}_{\mu\nu\alpha\beta} (\epsilon) = \cfrac{3 \epsilon -8}{4}\, P^0_{\mu\nu\alpha\beta} + \cfrac{\epsilon}{2}\,P^1_{\mu\nu\alpha\beta} + P^2_{\mu\nu\alpha\beta} - \cfrac{\epsilon}{4} ~ \overline{P}^0_{\mu\nu\alpha\beta} - \cfrac{\epsilon}{4} ~ \overline{\overline{P}}^0_{\mu\nu\alpha\beta} \,.
\end{align}
Operators $P^0$ and $P^2$ are gauge invariant, making the operator non-invertible unless $\epsilon \not =0$. The inverse operator reads:
\begin{align}
  \mathcal{D}^{-1}_{\mu\nu\alpha\beta}(\epsilon) = -\cfrac12\,P^0_{\mu\nu\alpha\beta}+\cfrac{2}{\epsilon}\, P^1_{\mu\nu\alpha\beta} + P^2_{\mu\nu\alpha\beta} - \cfrac{3\,\epsilon -8}{2\,\epsilon}~\overline{P}^0_{\mu\nu\alpha\beta} - \cfrac{1}{2} ~\overline{\overline{P}}^0_{\mu\nu\alpha\beta} .
\end{align}
The given formula expresses the graviton propagator in any arbitrary gauge.
\begin{align}
  \begin{gathered}
    \begin{fmffile}{FR_Graviton_Propagator}
      \begin{fmfgraph*}(30,30)
        \fmfleft{L}
        \fmfright{R}
        \fmf{dbl_wiggly}{L,R}
        \fmflabel{$\mu\nu$}{L}
        \fmflabel{$\alpha\beta$}{R}
      \end{fmfgraph*}
    \end{fmffile}
  \end{gathered}
  \hspace{20pt} = i ~ \cfrac{ \mathcal{D}^{-1}_{\mu\nu\alpha\beta}(\epsilon) }{k^2} \, .
\end{align}
We can discuss the general case, but using $\epsilon = 2$ in practical applications is easier. With this value, the operator $\mathcal{D}^{-1}$ becomes much simpler in form:
\begin{align}
  \mathcal{D}^{-1}_{\mu\nu\alpha\beta}(2) = \cfrac12\left[ \eta_{\mu\alpha} \eta_{\nu\beta} + \eta_{\mu\beta} \eta_{\nu\alpha} - \eta_{\mu\nu} \eta_{\alpha\beta} \right].
\end{align}

\section{FeynGrav}\label{FeynGrav_Section}

FeynGrav is a package for Wolfram Mathematica extending FeynCalc functionality \cite{Shtabovenko:2016sxi,Shtabovenko:2020gxv}. FeynCalc provides tools to study both tree and loop-level amplitudes. At the same time, there are many packages further extending its functionality \cite{Patel:2015tea,Patel:2016fam,Shapiro:2016pfm}, which makes it a platform for different computational tools of high energy physics. Because of these reasons, FeynCalc was chosen to implement the perturbative quantum gravity framework described above.

This subsection is split into two parts to discuss two different aspects of FeynGrav. The first subsection discussed the general features of the package and its published versions. The second subsection discussed the commands implemented in the latest available version.

\subsection{Implementation}

We shall begin with a discussion of the existing versions of FeynGrav. There are a few published versions of the package, each extending its functionality and improving performance. The package is constantly developing, and the latest version is publically available at \cite{FeynGrav}. We shall briefly discuss the features of all these versions for completeness.

The original version 1.0 of FeynGrav was published in \cite{Latosh:2022ydd}. This version only considered matter with massless spin $0$, $1$, $2$, and $1/2$, and did not include the $SU(N)$ Yang-Mills model. The recursive relations discussed in this paper were yet to be discovered during the development of this version. Consequently, the package used a less effective algorithm to generate all perturbative expansions. The package's applicability was also limited since the gauge-fixing algorithm was not considered in detail.

The second version published in \cite{Latosh:2023zsi} presented a significant update. This version added interaction rules for massive matter with spins $0$, $1$, $1/2$, and implemented rules for the $SU(N)$ Yang-Mills. The gauge-fixing procedure was fully addressed, so the implemented interaction rules became applicable at any perturbation theory level. However, the recursive algorithm still needed to be discovered, so the package used the same algorithm for perturbative expansions.

The recursive relations were discovered after the publication \cite{Latosh:2023zsi} and were implemented in the latest FeynGrav version 2.1. The discovery of the recursive algorithm allowed expressions for the interaction rules to be generated more efficiently, so the corresponding libraries were updated. This version also implemented the polarisation tensors for gravity, discussed below. Lastly, minor misprints were corrected, and the sample file was significantly updated and improved.

The interaction rules for the Horndeski theory still need to be implemented in any existing version of FeynGrav. Because of their shared length and complexity, their implementation takes time and is expected to be published soon.

Let us discuss the package structure that remains similar in all the versions. The package addressed a few different challenges, consisting of a few semi-independent modules. The main computational challenge is the generation of the perturbative expansion terms. The package's core addresses this problem but operates separately from the main file. In turn, the main file of the package only imports the interaction rules in the FeynCalc environment, allowing a user to operate with them.

The above recursive relations are implemented in a series of sub-packages in a separate folder. Each package provides a tool to calculate a separate family of tensors and can be used independently within FeynCalc. As discussed above, it is essential to introduce tensors with particular symmetry for the Lorentz indices. With this symmetry imposed, a typical tensor with $2n$ pair of Lorentz indices will have approximately $2^n \, n!$ terms. Because of this large number of terms, a computation of an interaction rule involving many particles can take significant time. To soften this issue, the interaction rules are calculated separately.

The interaction rules discussed above are implemented in a single separate sub-package. Because such rules only require information about perturbative expansions, this sub-package depends on sub-packages describing perturbative expansions. It shall be run independently from FeynGrav to generate libraries containing the final expressions for the interaction rules. 

The main package file is independent of these sub-packages directly since it only imports the existing libraries and places them in the FeynCalc environment. Because of this package structure, the final user does not have to perform computationally heavy calculations of the interaction rules, significantly improving the package performance. When FeynGrav imports the rules, FeynCalc performs index contractions and other operations that constrain FeynGrav's performance.

\subsection{Interaction rules}

The scalar field kinetic and potential energy interaction rules are implemented with commands ``{\bf GravitonScalarVertex}'' and ``{\bf GravitonScalarPotentialVertex}''. The first command takes four arguments: the array of graviton indices, two momenta of scalar fields, and the scalar field mass. The second command takes two arguments: the array of graviton indices and the scalar field coupling coupling. FeynGrav also contains a command realising the scalar field propagator. The command is ``{\bf ScalarPropagator}'' and takes two arguments: the scalar field momentum and mass. Table \ref{Table_Scalars_Proca_Vectors} presents examples of these commands' usage.

Interaction rules for the Proca field are implemented with a single command ``{\bf GravitonMassiveVectorVertex}''. The name is chosen for the sake of naming consistency. The command takes six arguments: the array of graviton indices, the Lorentz index and momentum of the first vector particle, the Lorentz index and momentum of the second vector particle, and the Proca field mass. The package also has the Proca propagator implementation. The command ``{\bf ProcaPropagator}'' implements the propagator and takes four arguments: two Lorentz indices, the momentum and mass of the Proca field. Table \ref{Table_Scalars_Proca_Vectors} presents examples of these commands' usage.

The interaction rules for a single massless vector field are implemented with two commands, ``{\bf GravitonVectorVertex}'' and ``{\bf GravitonVectorGhostVertex}''. The command ``{\bf GravitonVectorVertex}'' takes five arguments: the array of Lorentz indices and momenta of gravitons, the Lorentz index and momentum of the first vector particle, the Lorentz index and momentum of the second particle. The command ``{\bf GravitonVectorGhostVertex}'' takes three arguments: the array of Lorentz indices and momenta of gravitons, and the momenta of the Faddeev-Popov ghost. Table \ref{Table_Scalars_Proca_Vectors} presents examples of these commands. Let us also note that FeynCalc already provides the propagators required to treat a massless scalar field.

\begin{table}[ht]
    \centering
    \begin{tabular}{c|l}
    \hspace{50pt} Diagram \hspace{50pt} & Command \\ \hline 
    $
    \begin{gathered}
        \begin{fmffile}{Table_Illustration_Scalar_0}
            \begin{fmfgraph*}(30,30)
                \fmfleft{L}
                \fmfright{R}
                \fmf{dashes}{L,R}
                \fmflabel{~$p,m$}{R}
            \end{fmfgraph*}
        \end{fmffile}
    \end{gathered}
    $
    & ScalarPropagator$[p,m]$ \\ \hline \\ \\
    $
    \begin{gathered}
        \begin{fmffile}{Table_Illustration_Scalar_1}
            \begin{fmfgraph*}(30,30)
                \fmfleft{L}
                \fmfright{R1,R2}
                \fmf{dbl_wiggly}{L,V}
                \fmf{dashes}{R1,V}
                \fmf{dashes}{R2,V}
                \fmfdot{V}
                \fmflabel{$\rho_1\sigma_1$}{L}
                \fmflabel{$p_1,m$}{R1}
                \fmflabel{$p_2,m$}{R2}
            \end{fmfgraph*}
        \end{fmffile}
    \end{gathered}
    $
    & GravitonScalarVertex$[\{\rho_1 , \sigma_1\},p_1,p_2,m]$ \\ \\ \hline \\ \\
    $
    \begin{gathered}
        \begin{fmffile}{Table_Illustration_Scalar_2}
            \begin{fmfgraph*}(30,30)
                \fmfleft{L1,L2}
                \fmfright{R1,R2}
                \fmf{dbl_wiggly}{L1,V}
                \fmf{dbl_wiggly}{L2,V}
                \fmf{dashes}{R1,V}
                \fmf{dashes}{R2,V}
                \fmfdot{V}
                \fmflabel{$\rho_1\sigma_1$}{L1}
                \fmflabel{$\rho_2\sigma_2$}{L2}
                \fmflabel{$p_1,m$}{R1}
                \fmflabel{$p_2$}{R2}
            \end{fmfgraph*}
        \end{fmffile}
    \end{gathered}
    $
    & GravitonScalarVertex$[\{\rho_1 , \sigma_1,\rho_2,\sigma_2\},p_1,p_2,m]$ \\ \\ \hline \\ 
    $
    \begin{gathered}
        \begin{fmffile}{Table_Illustration_Scalar_3}
            \begin{fmfgraph*}(30,30)
                \fmfleft{L1}
                \fmfright{R1,R2,R3}
                \fmf{dbl_wiggly,tension=3}{L1,V}
                \fmf{dashes}{R1,V}
                \fmf{dashes}{R2,V}
                \fmf{dashes}{R3,V}
                \fmfdot{V}
                \fmflabel{$\rho_1\sigma_1$}{L1}
                \fmflabel{$\lambda_3$}{R2}
            \end{fmfgraph*}
        \end{fmffile}
    \end{gathered}
    $
    & GravitonScalarPotentialVertex$[\{\rho_1 , \sigma_1\},\lambda_3]$ \\  \hline \\ \\
    $
    \begin{gathered}
        \begin{fmffile}{Table_Illustration_Scalar_4}
            \begin{fmfgraph*}(30,30)
                \fmfleft{L1,L2}
                \fmfright{R1,R2,R3,R4}
                \fmf{dbl_wiggly,tension=2}{L1,V}
                \fmf{dbl_wiggly,tension=2}{L2,V}
                \fmf{dashes}{R1,V}
                \fmf{dashes}{R2,V}
                \fmf{dashes}{R3,V}
                \fmf{dashes}{R4,V}
                \fmfdot{V}
                \fmflabel{$\rho_1\sigma_1$}{L1}
                \fmflabel{$\rho_2\sigma_2$}{L2}
                \fmflabel{$\lambda_4$}{R2}
            \end{fmfgraph*}
        \end{fmffile}
    \end{gathered}
    $
    & GravitonScalarPotentialVertex$[\{\rho_1 , \sigma_1, \rho_2, \sigma_2\},\lambda_4]$ \\ \\ \hline
    $
    \begin{gathered}
        \begin{fmffile}{Table_Illustration_Proca_0}
            \begin{fmfgraph*}(30,30)
                \fmfleft{L}
                \fmfright{R}
                \fmf{photon}{L,R}
                \fmflabel{$\mu$}{L}
                \fmflabel{$\nu ~~ p,m$}{R}
            \end{fmfgraph*}
        \end{fmffile}
    \end{gathered}
    $
    & ProcaPropagator$[p,m]$ \\ \hline \\ \\
    $
    \begin{gathered}
        \begin{fmffile}{Table_Illustration_Proca_1}
            \begin{fmfgraph*}(30,30)
                \fmfleft{L}
                \fmfright{R1,R2}
                \fmf{dbl_wiggly}{L,V}
                \fmf{photon}{R1,V}
                \fmf{photon}{R2,V}
                \fmfdot{V}
                \fmflabel{$\rho_1\sigma_1$}{L}
                \fmflabel{$p_1,\lambda_1,m$}{R1}
                \fmflabel{$p_2,\lambda_2,m$}{R2}
            \end{fmfgraph*}
        \end{fmffile}
    \end{gathered}
    $
    & GravitonMassiveVectorVertex$[\{\rho_1 , \sigma_1\},\lambda_1,p_1,\lambda_2,p_2,m]$ \\ \\ \hline \\ \\
    $
    \begin{gathered}
        \begin{fmffile}{Table_Illustration_Vector_1}
            \begin{fmfgraph*}(30,30)
                \fmfleft{L}
                \fmfright{R1,R2}
                \fmf{dbl_wiggly}{L,V}
                \fmf{photon}{R1,V}
                \fmf{photon}{R2,V}
                \fmfdot{V}
                \fmflabel{$\rho_1\sigma_1,k_1$}{L}
                \fmflabel{$\lambda_1, p_1$}{R1}
                \fmflabel{$\lambda_2, p_2$}{R2}
            \end{fmfgraph*}
        \end{fmffile}
    \end{gathered}
    $
    & GravitonVectorVertex$[\{\rho_1 , \sigma_1, k_1\},\lambda_1,p_1,\lambda_2,p_2]$ \\ \\ \hline \\ \\
    $
        \begin{gathered}
        \begin{fmffile}{Table_Illustration_Vector_2}
            \begin{fmfgraph*}(30,30)
                \fmfleft{L}
                \fmfright{R1,R2}
                \fmf{dbl_wiggly}{L,V}
                \fmf{dots}{R1,V}
                \fmf{dots}{R2,V}
                \fmfdot{V}
                \fmflabel{$\rho_1\sigma_1,k_1$}{L}
                \fmflabel{$p_1$}{R1}
                \fmflabel{$p_2$}{R2}
            \end{fmfgraph*}
        \end{fmffile}
    \end{gathered}
    $
    & GravitonVectorGhostVertex$[\{\rho_1 , \sigma_1, k_1\},p_1,p_2]$ \\ \\ \hline 
    \end{tabular}
    \caption{Examples of interaction rules for the scalar, Proca, and massless vector fields.}
    \label{Table_Scalars_Proca_Vectors}
\end{table}

The interaction rules for Dirac fermions are implemented with a single command ``{\bf GravitonFermionVertex}''. The command takes four arguments: the array of graviton Lorentz indices, the momentum of the in-going fermion line, the momentum of the out-going fermion line, and the fermion mass. The fermion propagator has already been implemented in FeynCalc. Table \ref{Table_Fermions} provides an example of this interaction rule.

The $SU(N)$ Yang-Mills model implementation is sophisticated and involves several commands. The command ``{\bf GravitonGluonVertex} implements a coupling of a few gluons to gravity. The command's first argument is an array of Lorentz indices and momenta. The other arguments describe gluons Lorentz indices, momenta, and $SU(N)$ indices. The command can describe a coupling of two, three, and four gluons to gravity. Gravitational coupling of quarks kinetic energy matches the expression for the coupling of a Dirac fermion. The coupling of the quark-gluon interaction energy is described by the ``{\bf GravitonQuarkGluonVertex}'' command. It takes only three arguments: the array of graviton Lorentz indices, the quark-gluon vertex Lorentz index, and the colour index. Lastly, two commands are responsible for gravitational coupling to the Faddeev-Popov ghosts of the Yang-Mills theory. The command ``{\bf GravitonYMGhostVertex}'' describes the coupling to the ghost itself. The command takes five arguments: the array of graviton Lorentz indices, the momentum and colour index of the first ghost, and the momentum and colour index of the second ghost. The command ``{\bf GravitonGluonGhostVertex}'' corresponds to the gravitational coupling of a vertex describing the interaction between two ghosts and one gluon. The command's arguments are the array of graviton Lorentz indices, the Lorentz indices, momenta, and colour indices of other particles. Table \ref{Table_Fermions} lists examples of these commands.

\begin{table}[ht]
    \centering
    \begin{tabular}{c|l}
    \hspace{40pt} Diagram \hspace{40pt} & Command \\ \hline \\ \\
    $
    \begin{gathered}
        \begin{fmffile}{Table_Illustration_Fermion_1}
            \begin{fmfgraph*}(30,30)
                \fmfleft{L}
                \fmfright{R1,R2}
                \fmf{dbl_wiggly}{L,V}
                \fmf{fermion}{R1,V}
                \fmf{fermion}{V,R2}
                \fmfdot{V}
                \fmflabel{$\rho_1\sigma_1$}{L}
                \fmflabel{$p_1,m$}{R1}
                \fmflabel{$p_2$,m,m}{R2}
            \end{fmfgraph*}
        \end{fmffile}
    \end{gathered}
    $
    & GravitonFermionVertex$[\{\rho_1 , \sigma_1\},p_1,p_2,m]$ \\ \\ \hline \\ \\
    $
    \begin{gathered}
        \begin{fmffile}{Table_Illustration_Gluon_1}
            \begin{fmfgraph*}(30,30)
                \fmfleft{L}
                \fmfright{R1,R2}
                \fmf{dbl_wiggly,tension=2}{L,V}
                \fmf{gluon}{R1,V}
                \fmf{gluon}{R2,V}
                \fmfdot{V}
                \fmflabel{$\rho_1\sigma_1,k_1$}{L}
                \fmflabel{$\lambda_1,p_1,a_1$}{R1}
                \fmflabel{$\lambda_2,p_2,a_2$}{R2}
            \end{fmfgraph*}
        \end{fmffile}
    \end{gathered}
    $
    & GravitonGluonVertex$[\{\rho_1 , \sigma_1, k_1\},p_1,\lambda_1,a_1,p_2,\lambda_2,a_2]$ \\ \\ \hline \\ \\
    $
    \begin{gathered}
        \begin{fmffile}{Table_Illustration_Gluon_2}
            \begin{fmfgraph*}(30,30)
                \fmfleft{L}
                \fmfright{R1,R2,R3}
                \fmf{dbl_wiggly,tension=3}{L,V}
                \fmf{gluon}{R1,V}
                \fmf{gluon}{R2,V}
                \fmf{gluon}{R3,V}
                \fmfdot{V}
                \fmflabel{$\rho_1\sigma_1,k_1$}{L}
                \fmflabel{$\lambda_1,p_1,a_1$}{R1}
                \fmflabel{$\lambda_2,p_2,a_2$}{R2}
                \fmflabel{$\lambda_3,p_3,a_3$}{R3}
            \end{fmfgraph*}
        \end{fmffile}
    \end{gathered}
    $
    & GravitonGluonVertex$[\{\rho_1 , \sigma_1, k_1\},p_1,\lambda_1,a_1,p_2,\lambda_2,a_2,\lambda_3,p_3,a_3]$ \\ \\ \hline \\ \\
    $
    \begin{gathered}
        \begin{fmffile}{Table_Illustration_Gluon_3}
            \begin{fmfgraph*}(30,30)
                \fmfleft{L}
                \fmfright{R1,R2,R3,R4}
                \fmf{dbl_wiggly,tension=4}{L,V}
                \fmf{gluon}{R1,V}
                \fmf{gluon}{R2,V}
                \fmf{gluon}{R3,V}
                \fmf{gluon}{R4,V}
                \fmfdot{V}
                \fmflabel{$\rho_1\sigma_1,k_1$}{L}
                \fmflabel{$\lambda_1,p_1,a_1$}{R1}
                \fmflabel{$\lambda_2,p_2,a_2$}{R2}
                \fmflabel{$\lambda_3,p_3,a_3$}{R3}
                \fmflabel{$\lambda_4,p_4,a_4$}{R4}
            \end{fmfgraph*}
        \end{fmffile}
    \end{gathered}
    $
    & GravitonGluonVertex$[\{\rho_1 , \sigma_1, k_1\},p_1,\lambda_1,a_1,p_2,\lambda_2,a_2,\lambda_3,p_3,a_3,\lambda_4,p_4,a_4]$ \\ \\ \hline \\ 
    $
        \begin{gathered}
        \begin{fmffile}{Table_Illustration_Quark-Gluon_1}
            \begin{fmfgraph*}(30,30)
                \fmfleft{L}
                \fmfright{R1,R2,R3}
                \fmf{dbl_wiggly,tension=3}{L,V}
                \fmf{fermion}{R1,V}
                \fmf{gluon}{R2,V}
                \fmf{fermion}{V,R3}
                \fmfdot{V}
                \fmflabel{$\rho_1\sigma_1,k_1$}{L}
                \fmflabel{$\lambda, a$}{R2}
            \end{fmfgraph*}
        \end{fmffile}
    \end{gathered}
    $
    & GravitonQuarkGluonVertex$[\{\rho_1, \sigma_1\}, \lambda, a]$ \\  \hline \\ \\
    $
        \begin{gathered}
        \begin{fmffile}{Table_Illustration_YM_Ghost_1}
            \begin{fmfgraph*}(30,30)
                \fmfleft{L}
                \fmfright{R1,R2}
                \fmf{dbl_wiggly,tension=2}{L,V}
                \fmf{dots}{R1,V}
                \fmf{dots}{R2,V}
                \fmfdot{V}
                \fmflabel{$\rho_1\sigma_1,k_1$}{L}
                \fmflabel{$p_1,a_1$}{R1}
                \fmflabel{$p_2,a_2$}{R2}
            \end{fmfgraph*}
        \end{fmffile}
    \end{gathered}
    $
    & GravitonYMGhostVertex$[\{\rho_1 , \sigma_1\}, p_1, a_1, p_2, a_2]$ \\ \\ \hline \\ \\
    $
    \begin{gathered}
        \begin{fmffile}{Table_Illustration_YM_Ghost_2}
            \begin{fmfgraph*}(30,30)
                \fmfleft{L}
                \fmfright{R1,R2,R3}
                \fmf{dbl_wiggly,tension=3}{L,V}
                \fmf{dots}{R1,V}
                \fmf{gluon}{R2,V}
                \fmf{dots}{R3,V}
                \fmfdot{V}
                \fmflabel{$\rho_1\sigma_1,k_1$}{L}
                \fmflabel{$p_2,a_2$}{R1}
                \fmflabel{$\lambda_1,p_1,a_1$}{R2}
                \fmflabel{$p_3,a_3$}{R3}
            \end{fmfgraph*}
        \end{fmffile}
    \end{gathered}
    $
    & GravitonGluonGhostVertex$[\{\rho_1 , \sigma_1\},p_1, \lambda_1, a_1, p_2, \lambda_2, a_2, p_3, \lambda_3, a_3]$ \\ \\ \hline 
    \end{tabular}
    \caption{Examples of interaction rules for the scalar, Proca, and massless vector fields.}
    \label{Table_Fermions}
\end{table}

Lastly, the following commands describe the gravitational sector. The command ``{\bf GravitonPropagator}'' implements the graviton propagator with ``FeynAmpDenominator'' functions from FeynCalc. This command shall be used for loop calculations. The command ``GravitonPropagatorAlternative'' implements the graviton propagator with the simple scalar products in the denominators. This command shall only be used in three-level calculations. Two more commands, ``{\bf GravitonPropagatorTop}'' and ``{\bf GravitonPropagatorTopFAD}'', generate the numerator of the graviton propagator alone. The first one uses simple scalar products, while the second one uses ``FeynAmpDenominator''. All these commands take five arguments: four Lorentz indices and the momenta of a graviton. The command ``{\bf GravitonVertex}'' corresponds to the $n$-graviton vertex and takes $3 n$ arguments: two Lorentz indices and the momentum of each graviton. The Faddeev-Popov ghosts are vectors, so their propagator receives an additional Minkowski metric in the numerator. One can use the command for the Yang-Mills ghost propagator and manually multiply it on the flat metric. Because of this, no new command corresponding to this propagator is present in FeynGrav. Lastly, the rules for graviton coupling to the corresponding Faddeev-Popov ghosts are implemented with ``{\bf GravitonGhostVertex}''. It takes five arguments: the array of Lorentz indices and momenta of the gravitons, and Lorenz indices and momenta of ghosts. Table \ref{Table_gravitons} presents these rules.

\begin{table}[ht]
    \centering
    \begin{tabular}{c|l}
    \hspace{40pt} Diagram \hspace{40pt} & Command \\ \hline
    $
    \begin{gathered}
        \begin{fmffile}{Table_Illustration_Graviton_0}
            \begin{fmfgraph*}(30,30)
                \fmfleft{L}
                \fmfright{R}
                \fmf{dbl_wiggly}{L,R}
                \fmflabel{$\mu\nu$}{L}
                \fmflabel{$\alpha\beta ~~ p$}{R}
            \end{fmfgraph*}
        \end{fmffile}
    \end{gathered}
    $
    & GravitonPropagator$[\mu, \nu, \alpha, \beta, p]$ \\ \hline \\ \\
    $
    \begin{gathered}
        \begin{fmffile}{Table_Illustration_Graviton_1}
            \begin{fmfgraph*}(30,30)
                \fmfleft{L}
                \fmfright{R1,R2}
                \fmf{dbl_wiggly}{L,V}
                \fmf{dbl_wiggly}{V,R1}
                \fmf{dbl_wiggly}{V,R2}
                \fmflabel{$\rho_1,\sigma_1,p_1$}{L}
                \fmflabel{$\rho_2,\sigma_2,p_2$}{R1}
                \fmflabel{$\rho_3,\sigma_3,p_3$}{R2}
                \fmfdot{V}
            \end{fmfgraph*}
        \end{fmffile}
    \end{gathered}
    $
    & GravitonVertex$[\rho_1, \sigma_1, p_1, \rho_2, \sigma_2, p_2, \rho_3, \sigma_3, p_3]$ \\ \\ \hline \\ \\
    $
    \begin{gathered}
        \begin{fmffile}{Table_Illustration_Graviton_2}
            \begin{fmfgraph*}(30,30)
                \fmfleft{L1,L2}
                \fmfright{R1,R2}
                \fmf{dbl_wiggly}{L1,V}
                \fmf{dbl_wiggly}{L2,V}
                \fmf{dbl_wiggly}{V,R1}
                \fmf{dbl_wiggly}{V,R2}
                \fmflabel{$\rho_1,\sigma_1,p_1$}{L1}
                \fmflabel{$\rho_2,\sigma_2,p_2$}{L2}
                \fmflabel{$\rho_3,\sigma_3,p_3$}{R1}
                \fmflabel{$\rho_4,\sigma_4,p_4$}{R2}
                \fmfdot{V}
            \end{fmfgraph*}
        \end{fmffile}
    \end{gathered}
    $
    & GravitonVertex$[\rho_1, \sigma_1, p_1, \rho_2, \sigma_2, p_2, \rho_3, \sigma_3, p_3, \rho_4, \sigma_4, p_4]$ \\ \\ \hline \\ \\
    $
    \begin{gathered}
        \begin{fmffile}{Table_Illustration_Graviton_Ghost}
            \begin{fmfgraph*}(30,30)
                \fmfleft{L}
                \fmfright{R1,R2}
                \fmf{dbl_wiggly}{L,V}
                \fmf{dots}{V,R1}
                \fmf{dots}{V,R2}
                \fmflabel{$\rho_1,\sigma_1,k_1$}{L}
                \fmflabel{$\lambda_1, p_1$}{R1}
                \fmflabel{$\lambda_2, p_2$}{R2}
                \fmfdot{V}
            \end{fmfgraph*}
        \end{fmffile}
    \end{gathered}
    $
    & GravitonGhostVertex$[\{\rho_1 , \sigma_1, k_1\},\lambda_1, p_1, \lambda_2, p_2]$ \\ \\ \hline
    \end{tabular}
    \caption{Examples of interaction rules for gravitons.}
    \label{Table_gravitons}
\end{table}

FeynGrav also implements a few additional tools. Firstly, the package provides a realisation of the standard gauge projectors:
\begin{align}
    \theta_{\mu\nu}(p) & = \eta_{\mu\nu} - \cfrac{ p_\mu p_\nu}{p^2} \, , & \overline{\theta}_{\mu\nu} & = \cfrac{p_\mu p_\nu}{p^2} \, .
\end{align}
They are realised with commands ``{\bf GaugeProjector}'', ``{\bf GaugeProjectorBar}'', ``{\bf GaugeProjectorFAD}'', and ``{\bf GaugeProjectorBarFAD}''. The first two commands realise these projectors with the simple scalar products, while the former two commands use ``FeynAmpDenominator''.

Secondly, the package provides a realisation of the Nieuwenhuizen operators \cite{VanNieuwenhuizen:1973fi,VanNieuwenhuizen:1981ae,Accioly:2000nm}. These operators $P^0_{\mu\nu\alpha\beta}(p)$, $P^1_{\mu\nu\alpha\beta}(p)$, $P^2_{\mu\nu\alpha\beta}(p)$, $\overline{P^0}_{\mu\nu\alpha\beta}(p)$, $\overline{\overline{P^0}}_{\mu\nu\alpha\beta}(p)$ generalise the standard projectors for the spin-$2$ systems. Their features were discussed in detail in a previous publication \cite{Latosh:2022ydd}. Commands ``{\bf NieuwenhuizenOperator0}'', ``{\bf NieuwenhuizenOperator1}'', ``{\bf NieuwenhuizenOperator2}'', ``{\bf NieuwenhuizenOperator0Bar}'', and ``{\bf NieuwenhuizenOperator0BarBar}'' implement these operators using only the simple scalar product. The set of command ``{\bf NieuwenhuizenOperator0FAD}'', $\cdots$, ``{\bf NieuwenhuizenOperator0BarBarFAD}'' implements them with ``{\bf FeynAmpDenominator}''.

Lastly, the package implements the polarisation tensors for gravitons. Following the standard procedure, the polarisation tensor $\epsilon_{\mu\nu}^{\pm}$ for gravitons is defined through the polarisation vector $\epsilon_\mu^\pm$ for the standard vector field:
\begin{align}
    \epsilon_{\mu\nu}^{\pm} = \epsilon^\pm_\mu \epsilon^\pm_\nu .
\end{align}
The command ``{\bf PolarizationTensor}'' implements this definition. The command takes three arguments: two Lorentz indices and one four-momentum. The command directly multiplies polarisation vectors implemented in FeynCalc. Due to the definitions used in FeynCalc, the implemented polarisation tensor is neither traceless nor transverse -- the command ``{\bf SetPolarizationTensor}'' the tensors traceless and transverse.

\subsection{Examples}

Several publications have used FeynGrav \cite{Latosh:2022ydd,Latosh:2023zsi,Latosh:2022hrf,Latosh:2023ueg,Latosh:2023xej}. These publications discuss the problems in detail, so we will only provide a brief discussion. We also focus on two vivid examples.

The first illustrative example of FeynGrav usage is the calculation of $2\to 2$ graviton scattering cross sections at the leading order of perturbation theory. The on-shell cross section was calculated long ago without using any contemporary computational tools \cite{Sannan:1986tz}, which makes such calculations notoriously complicated. To be exact, in the original article \cite{Sannan:1986tz}, the author operated with expressions symmetries in a certain way to make the calculation more compact. Nonetheless, even the basic description of the calculations takes a few pages in the original publication.

FeynGrav extremely simplifies such calculations. Firstly, one shall fix all the momenta on the mass shell and define all relations for scalar products of polarisation tensors and momenta since FeynCalc does not do this automatically. Secondly, one calculates the matrix element for the $s$-channel amplitude:
\begin{align}
    \nonumber \\
    \begin{split}
        & \hspace{40pt}
        \begin{gathered}
            \begin{fmffile}{Examples_Pic_1}
                \begin{fmfgraph*}(30,30)
                    \fmfleft{L1,L2}
                    \fmfright{R1,R2}
                    \fmf{dbl_wiggly}{L1,V1,R1}
                    \fmf{dbl_wiggly}{L2,V2,R2}
                    \fmf{dbl_wiggly}{V1,V2}
                    \fmfdot{V1,V2}
                    \fmflabel{$\mu_1\nu_1,p_1$}{L1}
                    \fmflabel{$\mu_2\nu_2,p_2$}{R1}
                    \fmflabel{$\mu_3\nu_3,p_3$}{L2}
                    \fmflabel{$\mu_4\nu_4,p_4$}{R2}
                \end{fmfgraph*}
            \end{fmffile}
        \end{gathered} \hspace{35pt} = \text{PolarizationTensor}[\mu_1,\nu_1,p_1] \text{PolarizationTensor}[\mu_2,\nu_2,p_2] \\ \\
        & \times \text{ComplexConjugate}[\text{PolarizationTensor}[\mu_3,\nu_3,p_3]] \text{ComplexConjugate}[\text{PolarizationTensor}[\mu_4,\nu_4,p_4]]\\
        & \times \text{GravitonVertex}[\mu_1,\nu_1,p_1,\mu_2,\nu_2,p_2,\alpha_1,\beta_1,p_1 \!+\! p_2] \text{GravitonVertex}[\alpha_2,\beta_2,p_1 \!+\! p_2,\mu_3,\nu_3,p_3,\mu_4,\nu_4,p_4] \\
        & \times \text{GravitonPropagatorAlternative}[\alpha_1,\beta_1,\alpha_2,\beta_2,p_1 \!+\! p_2].
    \end{split}
\end{align}
An average computer not designed for intensive computational tasks can complete this element in less than five minutes. One obtains the expressions for $t$ and $u$-channels similarly. The following expression gives the contact four-graviton interaction:
\begin{align}
    \nonumber \\
    \begin{split}
        & \hspace{40pt}
        \begin{gathered}
            \begin{fmffile}{Examples_Pic_2}
                \begin{fmfgraph*}(30,30)
                    \fmfleft{L1,L2}
                    \fmfright{R1,R2}
                    \fmf{dbl_wiggly}{L1,V}
                    \fmf{dbl_wiggly}{L2,V}
                    \fmf{dbl_wiggly}{R1,V}
                    \fmf{dbl_wiggly}{R2,V}
                    \fmflabel{$\mu_1\nu_1,p_1$}{L1}
                    \fmflabel{$\mu_2\nu_2,p_2$}{R1}
                    \fmflabel{$\mu_3\nu_3,p_3$}{L2}
                    \fmflabel{$\mu_4\nu_4,p_4$}{R2}
                    \fmfdot{V}
                \end{fmfgraph*}
            \end{fmffile}
        \end{gathered} \hspace{40pt} = \text{PolarizationTensor}[\mu_1,\nu_1,p_1] \text{PolarizationTensor}[\mu_2,\nu_2,p_2] \\ \\
        & \times \text{ComplexConjugate}[\text{PolarizationTensor}[\mu_3,\nu_3,p_3]] \text{ComplexConjugate}[\text{PolarizationTensor}[\mu_4,\nu_4,p_4]]\\
        & \times \text{GravitonVertex}[\mu_1,\nu_1,p_1,\mu_2,\nu_2,p_2,\mu_3,\nu_3,p_3,\mu_4,\nu_4,p_4] .
    \end{split}
\end{align}
This expression takes less than one minute to be computed.

Lastly, one constructs the complete scattering amplitude accounting for all contributions. It recovers the well-known expressions for chiral amplitudes in $d=4$:
\begin{align}
    \mathcal{M}(++++) &= i\, \cfrac{\kappa^2}{4}\,\cfrac{s^4}{s\,t\,u} \, , & \mathcal{M}(+-+-) & = i\, \cfrac{\kappa^2}{4}\,\cfrac{u^4}{s\,t\,u} \, , & \mathcal{M}(+--+) & = i\, \cfrac{\kappa^2}{4}\,\cfrac{t^4}{s\,t\,u} \, ,
\end{align}
\begin{align*}
    \mathcal{M}(++--) &= \mathcal{M} (+++-) = 0 .
\end{align*}

This example vividly shows that such complicated yet essential calculations are significant with FeynGrav. Moreover, the paper \cite{Latosh:2022ydd} presents an explicit expression for the scattering amplitude expressed in terms of graviton chiralities.

The second example is the calculation of the one-loop graviton-scalar vertex operator. Since a scalar field interacts with gravity, the interaction vertex receives correction at the loop level. The one-loop vertex operator describes these corrections:
\begin{align}
    \nonumber \\
    \begin{gathered}
        \begin{fmffile}{Example_Vertex_0}
            \begin{fmfgraph*}(30,30)
                \fmftop{T1,T2}
                \fmfbottom{B}
                \fmf{dashes}{T1,V,T2}
                \fmf{dbl_wiggly}{B,V}
                \fmfv{decor.shape=circle,decor.filled=shaded,decor.size=15}{V}
                \fmflabel{$\mu\nu,k$}{B}
                \fmflabel{$p_1$}{T1}
                \fmflabel{$p_2$}{T2}
            \end{fmfgraph*}
        \end{fmffile}
    \end{gathered} \hspace{20pt} = i\, \Gamma_{\mu\nu}(p_1,p_2,k,m)
    =
    \begin{gathered}
        \begin{fmffile}{Example_Vertex_1}
            \begin{fmfgraph}(30,30)
                \fmftop{T1,T2}
                \fmfbottom{B}
                \fmf{dbl_wiggly,tension=2}{B,VB}
                \fmf{dashes,tension=2}{T1,VT1}
                \fmf{dashes,tension=2}{T2,VT2}
                \fmf{phantom}{VB,VT2,VT1,VB}
                \fmfdot{VB,VT1,VT2}
                \fmffreeze
                \fmf{dashes}{VT1,VT2}
                \fmf{dbl_wiggly}{VT1,VB,VT2}
            \end{fmfgraph}
        \end{fmffile}
    \end{gathered}
    +
    \begin{gathered}
        \begin{fmffile}{Example_Vertex_2}
            \begin{fmfgraph}(30,30)
                \fmftop{T1,T2}
                \fmfbottom{B}
                \fmf{dbl_wiggly,tension=2}{B,VB}
                \fmf{dashes,tension=2}{T1,VT1}
                \fmf{dashes,tension=2}{T2,VT2}
                \fmf{phantom}{VB,VT2,VT1,VB}
                \fmfdot{VB,VT1,VT2}
                \fmffreeze
                \fmf{dbl_wiggly}{VT1,VT2}
                \fmf{dashes}{VT1,VB,VT2}
            \end{fmfgraph}
        \end{fmffile}
    \end{gathered}
    +
    \begin{gathered}
        \begin{fmffile}{Example_Vertex_3}
            \begin{fmfgraph}(30,30)
                \fmftop{T1,T2}
                \fmfbottom{B}
                \fmf{dashes,tension=2}{T1,VT,T2}
                \fmf{dbl_wiggly,left=1}{VT,VB,VT}
                \fmf{dbl_wiggly,tension=2}{B,VB}
                \fmfdot{VT,VB}
            \end{fmfgraph}
        \end{fmffile}
    \end{gathered}
    +
    \begin{gathered}
        \begin{fmffile}{Example_Vertex_4}
            \begin{fmfgraph}(30,30)
                \fmftop{T1,T2}
                \fmfbottom{B}
                \fmf{dashes}{T1,V,T2}
                \fmf{dbl_wiggly,tension=2}{V,B}
                \fmffreeze
                \fmf{phantom}{V,VV,T1}
                \fmffreeze
                \fmf{dbl_wiggly,left=1}{V,VV}
                \fmfdot{V,VV}
            \end{fmfgraph}
        \end{fmffile}
    \end{gathered}
    +
    \begin{gathered}
        \begin{fmffile}{Example_Vertex_5}
            \begin{fmfgraph}(30,30)
                \fmftop{T1,T2}
                \fmfbottom{B}
                \fmf{dashes}{T1,V,T2}
                \fmf{dbl_wiggly,tension=2}{V,B}
                \fmffreeze
                \fmf{phantom}{V,VV,T2}
                \fmffreeze
                \fmf{dbl_wiggly,right=1}{V,VV}
                \fmfdot{V,VV}
            \end{fmfgraph}
        \end{fmffile}
    \end{gathered}
    +
    \begin{gathered}
        \begin{fmffile}{Example_Vertex_6}
            \begin{fmfgraph}(30,30)
                \fmftop{T1,T,T2}
                \fmfbottom{B}
                \fmf{dashes,right=.3}{T1,V,T2}
                \fmf{dbl_wiggly,tension=2}{V,B}
                \fmffreeze
                \fmf{phantom,tension=2}{T,VT}
                \fmf{phantom}{V,VT}
                \fmffreeze
                \fmf{dbl_wiggly,left=.7}{V,VT,V}
                \fmfdot{V}
            \end{fmfgraph}
        \end{fmffile}
    \end{gathered} . \\ \nonumber
\end{align}

In contrast to the previous example, this expression was only calculated with FeynGrav because it involved hundreds of terms. The expression itself is too long to be presented in a printed ford. It is published in an open-access repository \cite{latosh2023}. The obtained expression allows one to study the low energy limit of a gravitational scattering of two scalars \cite{Latosh:2023ueg}:
\begin{align}
    \cfrac{d\sigma}{d\Omega} \Bigg|_\text{low energy} = \cfrac14\, \left(G_\text{N} \,\mu\, m_1\, m_2\right)^2 \cfrac{1}{p^4\, \sin^4\frac{\theta}{2}}\left[ 1 + 32\, \pi^2 \,G_\text{N} \, p (m_1+m_2) \sin\cfrac{\theta}{2}  +\mathcal{O}\left( G_\text{N}^2\right)\right].
\end{align}
In this expression for the differential cross section, $p$ is the centre of mass scattering momentum, $\theta$ is the scattering angle, $m_1$ and $m_2$ are masses of the scattered particles, and $\mu$ is the reduced mass.

\section{Summary and further development}\label{Summary_Section}

The paper reviews a recently developed theoretical framework for efficient computation of Feynman rules for perturbative quantum gravity. The perturbative approach to quantum gravity allows one to study quantum gravitational effects within the standard quantum field theory framework. The resulting theory is effective, so it cannot be indefinitely extended in the ultraviolet region and treated similarly to other renormalisable theories. Due to its effective nature, the theory admits infinite interaction terms, all parametrised with a single gravitational coupling. The discussed theoretical framework provides a universal way to compute such interaction rules at any order of perturbation theory.

Perturbative quantum gravity admits the factorisation theorem applicable to any gravity model within Riemann geometry. The theorem states that the action splits into two parts in a broad class of gravity models. One part does not involve derivatives and is constituted of three independent factors that are infinite series. The other part involves derivatives, but it is a finite expression that can be calculated explicitly.

Certain recursive relations exist for the factors generating infinite series. They provide a way to efficiently calculate their contribution at any order of perturbation theory. However, the number of terms in each contribution grows faster than the factorial number of particles involved in the interaction. This feature makes manual calculations of such factors challenging.

FeynGrav provides a tool to operate with the theory at the practical level. It implements the discussed theoretical framework within FeynCalc, which possesses a broad spectrum of tools to operate with expressions within the quantum field theory.

The latest version implements the interaction rules for massless and massive scalars, for scalar field potential interaction, for massive and massless Dirac fermions, for Proca field, for massless vector field, for $SU(N)$ Yang-Mills model, and general relativity. The package also implements auxiliary tools such as graviton polarisation tensors, gauge projectors, and their generalisation for the spin-$2$ case.

The package is rapidly developing and is expected to receive a new update soon. The update will be devoted to the Horndeski gravity, the most general scalar-tensor gravity admitting second-order field equations and the minimal coupling to the matter degrees of freedom.

Further development of the package aims in two directions: to improve the performance and to account for other relevant gravity models. The discovery of the recursive algorithm discussed in this paper significantly increased performance. At the same time, the same relations directly indicate the limits of this existing approach. The number of terms in an $n$ particle interaction vertex grows faster than $n!$, which poses an additional computation challenge. Although such behaviour is expected for an effective theory considered perturbatively, we will strive to improve the computational algorithm even further to push the perturbation theory to its limits.

Consideration of more sophisticated yet relevant models of gravity is another challenge. The Horndeski gravity, the most studied extension of general relativity, is studied in this paper, and its implementation is to enter the next version of FeynGrav. There are few options for other models of gravity that would be relevant to the contemporary research field. The most straightforward generalisation would be massive gravity, but the contemporary observational data imposes strong constraints on the graviton mass. Another possible way to extend FeynGrav is to implement Beyond Horndeski theories, which introduce non-minimal couplings between the scalar field and matter. Lastly, it may be fruitful to consider some supersymmetric expressions and examine whether they can be treated similarly.

In conclusion, the theoretical framework discussed in the paper and the FeynGrav package provide a valuable and efficient tool to study perturbative quantum gravity, with further development aimed at improving performance and incorporating other relevant gravity models.

\section*{Acknowledgment}

This work was supported by the Institute for Basic Science Grant IBS-R018-Y1.

\bibliographystyle{unsrturl}
\bibliography{FGaRPiCPQG.bib}

\end{document}